\NeedsTeXFormat{LaTeX2e}
\documentclass[a4paper,reqno,11pt]{amsart}
\setlength{\textwidth}{\paperwidth}
\addtolength{\textwidth}{-1.5in}
\calclayout
\usepackage{amssymb}
\usepackage{color}
\usepackage[utf8]{inputenc}
\usepackage{tikz}
\usepackage[all]{xy}
\usepackage[bookmarksnumbered,colorlinks]{hyperref}
\usepackage{dsfont}
\usepackage[normalem]{ulem}

\usepackage[euler-digits]{eulervm}
\usepackage{enumerate}
\def\br#1\er{\textcolor{red}{#1}} %
\hyphenation{Lo-ren-tzian}


\newcommand{\R}{\mathds R}
\newcommand{\N}{\mathds N}

\newcommand{\lev}[2]{\overline{#1}_{#2}}

\newcommand{\lcrono}{L_{chr}}

\title[Spacetimes coverings and C-boundary]{Spacetimes coverings and C-boundary}
\author[L. Ak\'e]{Luis Alberto Ak\'e}
\address{Departamento de \'Algebra, Geometr\'{\i}a y Topolog\'{\i}a,  Universidad de M\'alaga
\hfill\break\indent
Facultad de Ciencias, Campus Universitario de Teatinos,
\hfill\break\indent 29080 M\'alaga, Spain}
\email{luisake@uma.es}


\author[J. Herrera]{J\'onatan Herrera}
\address{Department of Mathematics,
	Universidade Federal de Santa Catarina,\hfill\break\indent 88.040-900 Florian\'{o}polis-SC, Brazil.}
\email{jonatanhf@gmail.es}

\begin{document}
\newtheorem{thm}{Theorem}[section]
\newtheorem{prop}[thm]{Proposition}
\newtheorem{lemma}[thm]{Lemma}
\newtheorem{cor}[thm]{Corollary}
\newtheorem{conv}[thm]{Convention}
\theoremstyle{definition}
\newtheorem{defi}[thm]{Definition}
\newtheorem{notation}[thm]{Notation}
\newtheorem{exe}[thm]{Example}
\newtheorem{conj}[thm]{Conjecture}
\newtheorem{prob}[thm]{Problem}
\newtheorem{rem}[thm]{Remark}
\maketitle
\usetikzlibrary{matrix}

\begin{abstract}

We consider the relation between the c-completion of a Lorentz manifold $V$ and its quotient $M=V/G$, where $G$ is an isometry group acting freely and properly discontinuously.
First, we consider the future causal completion case, characterizing virtually when such a quotient is well behaved with the future chronological topology and improving the existing results on the literature.  Secondly, we show that under some general assumptions, there exists an homeomorphism and chronological isomorphism between both, the c-completion of $M$ and some adequate quotient of the c-completion of $V$ defined by $G$. Our results are optimal, as we show in several examples. Finally, we give a practical application by considering isometric actions over Robertson-Walker spacetimes, including in particular the Anti-de Sitter model.  
\end{abstract}

\tableofcontents

\section{Introduction}

The AdS/CFT correspondence, also known as Maldacena's duality, states the duality between gravitational theories, as string or M-theory, on a bulk space (usually a product of the Anti-de Sitter spacetime with spheres or other compact sets) and conformal field theories defined on the boundary of the bulk space which behaves as an hologram of inferior dimension (see \cite{Mal}). As it is apparent, the conjecture relies strongly in the notion of boundary of Lorentz manifolds. However, the problem to attach a natural boundary for any Lorentz manifold encoding relevant information on it, as its conformal structure and related elements (event horizons, singularities, etc.) has been a long standing issue along the last four decades.

%
Among the several constructions proposed (see \cite{GpScqg05,H3,S} for nice reviews on the classical elements and \cite{FHSFinalDef,FHSHaus} for updated progress), two approaches have had a specially important role in general relativity, the conformal and the causal boundaries.

The conformal boundary is the most applied one in mathematical relativity and several notions, as asymptotic flatness or tools as Penrose--Carter diagrams rely on it. Even in the original approach of the AdS/CFT correspondence, it is the conformal boundary the chosen as the holographic one. In fact, the Anti-de Sitter spacetime can be conformally embedded in the Lorentz-Minkowski model, obtaining a simple (and non-compact) conformal boundary.
However, it has important limitations as it is an {\em ad hoc} construction: no general formalism determines when the boundary of a reasonably general spacetime is definable, intrinsic, unique and contain useful information of the spacetime (see \cite{Chrusciel} and \cite[Section 4]{FHSFinalDef} for studies regarding the uniqueness of the conformal boundary). In fact, as it was putted forward by Bernstein, Maldacena and Nastase \cite{BMN}), there seems to be problems when the conformal boundary is considered on plane waves. Indeed, Marolf and Ross \cite{MR1} realized that the conformal boundary is not available for non-conformally flat plane waves. So, they proposed a redefinition of the c-boundary applicable to such waves \cite{MR} which was refined and systematically studied by Flores and Sanchez in \cite{FS2}.

This motivated a reconsideration of such constructions by substituting the conformal boundary by the causal one, which is intrinsic, conformally invariant and it can be computed systematically, as it was carried out in \cite{FHSFinalDef}. It is worth emphasizing that both the conformal and causal boundaries are shown to coincide in most relevant classes (so, previous results based on the conformal case are not required to be re-obtained for the causal one).


%

Returning to the problem of AdS/CFT correspondence, it is our aim to present the causal boundary of different classes of Lorentz manifolds, allowing the study of such a correspondence with different bulk spaces. In this sense let $M$ be, for instance, a Lorentz manifold with constant negative curvature, and so, a spacetime that can be locally modelled by the Anti-de Sitter spacetime. Recalling that the universal covering $\tilde{AdS}$ is maximal, simply-connected and with constant negative curvature, it is expectable that $M$ can be described as a quotient space of $\tilde{AdS}$ by an appropriate group of isometries (in fact, for certain spacetime topologies, the existence of such an appropriate group was proven by Mess \cite{Me}). This is the particular case of the BTZ blackholes, the $(2+1)$-model of spacetime first introduced by Ba\~nados, Teitelboim and Zanelli \cite{BTZ}; and the Hawking-Page reference space \cite{HP},  whose representations as a quotient of the Anti-de Sitter model are well known \cite{BHTZ,Wit1,Wit2}.
	
Due the fact that the causal boundary is well known for $\tilde{AdS}$ (see \cite[Section 4.1]{AF}), the following question, particularly natural from the mathematical viewpoint, arises: given two (general) Lorentz manifolds $M$ and $V$ where $M$ is constructed as the quotient of $V$ by some group of isometries, what is the relation between the causal boundaries and completions of $M$ and $V$? An adequate answer for this question will give us tools to easily compute the causal completion of $M$ once we known the corresponding on $V$. For instance, such a result will be applicable to models like the BTZ blackholes or the Hawking-Page reference model, besides other models constructed in a similar way (as the case of Cosmic Strings, see \cite{Got}). It will also give us relevant information of the c-completion on $V$ whenever the c-completion in $M$ is known. 
	
	The first studies in this direction are due to Harris \cite{H}. In his work, he studied how isometrical actions affect the causal structures of the spacetimes, with special attention to the future causal boundary and related concepts (as strong causality).  Concretely, he considers a projection $\pi:V\rightarrow M$ given by a discrete subgroup $G$ of isometries acting freely and properly discontinuously in $V$, i.e., where $M=V/G$ and the elements on $M$ represents $G$-orbits in $V$. In this settings, Harris characterizes the strong causality and global hyperbolicity of $M$ in terms of the global causal structure of $V$. Moreover, and under the assumption of $M$ being distinguishing (which implies, in particular, that $V$ also is), he presents necessary conditions in order to ensure when the associated quotient $\hat{V}/\hat{G}$ is homeomorphic to $\hat{M}$.

Our aim in this work is to extend the results obtained by Harris for the future causal completion to the c-completion. However, several problems have to be addressed first. On the one hand, the main result in \cite{H} imposes that both, the causal boundary of $M$ and $V$ have only spacelike future boundaries. This condition, even if reasonable (specially recalling the final example of his paper), is too strong for the c-completion context, where particularly timelike boudary points are specially relevant. On the other hand, and in spite with the partial case, the c-completion requires the study of the so-called S-relation between future and past sets, as well as some ``compatibility'' between the topology of the future and past completions.

\smallskip 

The contents of the paper are organized as follow. In Section \ref{sec:preliminares} we will give the preliminaries that we are going to need along the rest of the paper. Most of them are well known (for instance, the construction of the c-completion was developed in \cite{FHSFinalDef}), but we have also introduced concepts (as first order UTS, Definition \ref{def:UTS}) and results (Lemmas \ref{lem:lemclosed} and \ref{lem:firstUTS}; and some of the assertions in Theorem \ref{thm:causalladder}) that, as far as we know, are new.

Section \ref{sec:Partial} is devoted to the study of the future (and, by analogy, past) causal boundary. Here, at the point set level, we will recall the bijection $\hat{\jmath}$ defined by Harris between $\hat{V}/\hat{G}$ (two points in $\hat{V}$ are related if they project onto the same point in $\hat{M}$) and $\hat{M}$. Then, we will perform a detailed comparison between the topologies in both spaces (the first one with the induced quotient topology). The results of this section are summarized as follow:

\begin{thm}\label{thm:main1}
Let $\pi:V\rightarrow M$ be an spacetime covering projection (see Section \ref{sec:spaceproy}) and denote by $\hat{\pi}$ the extension to future c-completions \eqref{eq:aux4}. Let $\hat{V}/\hat{G}$ be the quotient space defined by the following relation: two points $P,P'\in \hat{V}$ are $\sim_{\hat{G}}$-related if they project onto the same point in $\hat{M}$. Then, we obtain the following commutative diagram:

\begin{center} 
	\begin{tikzpicture}
	\matrix (m) [matrix of math nodes,row sep=3em,column sep=4em,minimum width=2em]
	{
		\hat{V} & \\
		\hat{V}/\hat{G} & \hat{M} \\};
	\path[-stealth]
	(m-1-1) edge node [left] {$\hat{\i}$} (m-2-1)
	(m-2-1.east|-m-2-2) edge node [below] {$\hat{\j}$} (m-2-2)
	(m-1-1) edge node [above] {$\hat{\pi}$} (m-2-2);
	\end{tikzpicture}
\end{center}
where $\hat{\i}$ is the natural quotient projection. From construction, the map $\hat{\jmath}$ is bijective. At the topological level,  

\begin{itemize}
	\item[(i)] The map $\hat{\jmath}$ is open.
	\item[(ii)] If $M$ does not admit sequences with future divergent lifts (Definition \ref{def:divlif}), the map $\hat{\pi}$ (and so, $\hat{\jmath}$) is continuous. The converse also follows if we have that $\hat{L}_{M}$ is of first order UTS (Definition \ref{def:UTS}).
\end{itemize}

In particular, if $M$ has only spatial future boundary points, $\hat{\j}$ is an homeomorphism between $\hat{M}$ and $\hat{V}/\hat{G}$. The same result follows if $G$ is finite and $\hat{V}$ is Hausdorff.
\end{thm}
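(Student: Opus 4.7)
The plan is to reduce every claim about $\hat{\jmath}$ to a corresponding claim about the projection $\hat{\pi}$, using that $\hat{\i}\colon\hat{V}\to\hat{V}/\hat{G}$ is by construction a surjective quotient map whose fibers coincide with the fibers of $\hat{\pi}$. Bijectivity of $\hat{\jmath}$ is then built into the definition of $\sim_{\hat{G}}$, and the commutativity $\hat{\jmath}\circ\hat{\i}=\hat{\pi}$ is immediate. Consequently (i) becomes the openness of $\hat{\pi}$, since for $U\subset\hat{V}/\hat{G}$ open one has $\hat{\jmath}(U)=\hat{\pi}(\hat{\i}^{-1}(U))$, and continuity of $\hat{\pi}$ in (ii) descends to $\hat{\jmath}$ through the universal property of the quotient.

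For openness of $\hat{\pi}$, I would exploit that $\pi\colon V\to M$ is a local isometry admitting the covering (path) lifting property, hence open on the manifold level, and that the chronological topology on the c-completions is generated through the chronological past operator $I^{-}$. An IP $P\in\hat{V}$ projects to an IP of $M$ obtained as the IP generated by the projection of any chain defining $P$, and conversely every chain in $M$ can be locally lifted. This allows one to transfer basic open sets of $\hat{V}$ around $P$ to basic open sets of $\hat{M}$ around $\hat{\pi}(P)$, establishing openness.

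For continuity in (ii), I would work sequentially via the chronological limit operator. A sequence $(P_n)$ converges to $P$ in $\hat{V}$ iff $P$ belongs to the limit set under $\hat{L}$, which is characterized in terms of inclusions of the defining IPs. Projecting these IPs through $\pi$, the assumption that $M$ admits no sequence with future divergent lifts prevents the only pathology that could break the corresponding inclusion relation in $\hat{M}$, namely the presence of accumulation data in $M$ not accounted for by any lift in $V$. Hence $\hat{\pi}(P_n)\to\hat{\pi}(P)$ and $\hat{\pi}$ is continuous. For the converse, under the first order UTS property of $\hat{L}_{M}$, the limit operator is determined by a single-layer application that makes it possible to reverse the argument: a future divergent lift can be reconstructed from any failure of sequential continuity, producing a sequence whose limits in $\hat{V}$ and in $\hat{M}$ are incompatible.

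Finally, if $M$ has only spatial future boundary points then the timelike chronological relations that would be needed to produce a future divergent lift are absent at the boundary, so the hypothesis of (ii) holds vacuously and continuity joins openness and bijectivity to yield that $\hat{\jmath}$ is a homeomorphism. If instead $G$ is finite and $\hat{V}$ is Hausdorff, then fibers of $\hat{\pi}$ are finite (ruling out future divergent lifts trivially) and the quotient of a Hausdorff space by a finite group of homeomorphisms remains Hausdorff, again upgrading the bijection to a homeomorphism. I expect the main obstacle to be the converse implication in (ii): turning a purely topological failure of continuity into an actual divergent lift is precisely where the first order UTS assumption must be leveraged, and the construction is delicate because one must align the IP inclusions in $\hat{M}$ coming from UTS with a concrete choice of lifts in $\hat{V}$.
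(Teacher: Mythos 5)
Your reduction of the theorem to statements about $\hat{\pi}$ and the quotient is fine, but the two places where the real work lies are not carried out correctly. First, your proof of (i) rests on the premise that the future chronological topology on $\hat{V}$ and $\hat{M}$ is ``generated through the chronological past operator $I^-$'' and can be handled by transferring basic open sets via the covering-space properties of $\pi$. This is false in general: $\hat{\tau}_{chr}$ is only a sequential topology defined by the limit operator \eqref{def:futurelimit}, which need not be of first order and has no such basis. The paper proves openness of $\hat{\jmath}$ by verifying the sequential criterion for continuity of $\hat{\jmath}^{-1}$, and the substance is twofold: (a) given $P\in\hat{L}_M(\{P_n\}_n)$ and a fixed lift $\lev{P}{}$, one must choose deck transformations coherently (using the nested sets $G(n,m)$ of \eqref{eq:Gnm}--\eqref{eq:orderGnm}) so that $\lev{P}{}\subset{\rm LI}(\{\lev{P}{n}\}_n)$ for suitable lifts $\lev{P}{n}$; and (b) one must handle the case where $\lev{P}{}$ fails to be maximal in ${\rm LS}(\{\lev{P}{n}\}_n)$, by passing to a subsequence along which a maximal $\lev{P'}{}\supset\lev{P}{}$ lies in $\hat{L}_V$ and then using maximality of $P$ in $\hat{L}_M$ to conclude $\hat{\pi}(\lev{P'}{})=P$. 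None of this is supplied by openness of $\pi$ at the manifold level. The same lifting lemma is also the missing mechanism in your sketch of the forward part of (ii): the issue there is not the inclusion $P\subset{\rm LI}(\{P_n\}_n)$ (which is automatic) but possible failure of maximality in ${\rm LS}$, which is converted into an actual pair of divergent lifts precisely via that lemma.

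Second, your treatment of the converse in (ii) has the implication reversed: you propose to ``reconstruct a divergent lift from any failure of sequential continuity,'' but that is (the contrapositive of) the forward direction. The converse asserts that continuity of $\hat{\pi}$ forces the absence of divergent lifts; the paper proves it by taking a sequence with divergent lifts, noting that continuity would make both projected limits $P\subsetneq P'$ topological limits of $\{P_n\}_n$, and then using the first order UTS hypothesis to upgrade these, along a subsequence, to $P,P'\in\hat{L}_M(\{P_n\}_n)$, contradicting the maximality built into \eqref{def:futurelimit}. The final assertions also do not come for free. Spatial future boundary does not make divergent lifts ``vacuously'' impossible: condition (ii) of Definition \ref{def:divlif} can still be met with $\hat{\pi}(\lev{P}{})$ a PIP strictly contained in a TIP, and Corollary \ref{cor:Harrisresult} has to exclude exactly this case using continuity of $\pi$ and the first-order behaviour of the limit on manifold points. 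Likewise, for $G$ finite, finiteness of the fibres does not ``trivially'' rule out divergent lifts; the paper first shows that finiteness implies tameness (Lemma \ref{lem:aux1}), then extracts a constant subsequence $g_n\equiv g_0$, and uses Hausdorffness of $\hat{V}$ (uniqueness of chronological limits) to force $g_0\,\lev{P}{}=\lev{P'}{}$. Hausdorffness of the quotient $\hat{V}/\hat{G}$, which is what you invoke, is not the relevant point and by itself gives no continuity of $\hat{\jmath}$.
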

As we can see on previous (ii), we have obtained almost a characterization of the continuity of $\hat{\jmath}$, up to the first order UTS property. In fact, such a result generalizes \cite[Theorem 3.4]{H}, as the last assertion of Theorem \ref{thm:main1} shows.

Section \ref{sec:total} is focused on the study of the (total) c-completion at all possible levels, namely, at the point set, at the chronological and at the topological level. In Section \ref{sec:totalpoint} are given simple and general sufficient conditions to ensure the definition of the map $\overline{\j}$ between a reasonable quotient of $\overline{V}$ by $G$ (see Definition \ref{def:proycompleto}) and $\overline{M}$. Then, it is shown in Section \ref{sec:chronology} that previous map is well behaved respect the chronological relation, whenever an appropriate chronological relation is defined on the quotient space $\overline{V}/G$. Then, in Sections \ref{sec:totaltopo} and \ref{sec:totalfinite}, it is studied the conditions to ensure that the map $\overline{\jmath}$ is both, continuous and open resp. Now the latter becomes subtler and a simple condition (to be {\em finitely chronological}) is introduced. This property also simplify the conditions to ensure the well posedness and continuity of $\overline{\jmath}$. 

Concretely, the results of such a section are summarized in the following:

\begin{thm}\label{thm:main2}
Let $\pi:V\rightarrow M$ be a spacetime covering projection and consider $\overline{\pi}$ the extension map to the corresponding c-completions as defined on Definition \ref{def:proycompleto}. Then:

\begin{itemize}
	\item[(PS1)] The projection is well defined and surjective if $M$ does not admit sequences with (future or past) divergent lifts and any $(P,F)\in \overline{M}$ with $P\neq\emptyset\neq F$ admits a lift on $\overline{V}$. In particular, the latter condition holds if $(V,G)$ is finitely chronological (Definition \ref{def:finitelyachronal}).
	
	\item[(PS2)] If, in addition, the projection $\pi$ is tame (recall Definition \ref{rem:tameextension}), $\overline{\pi}$ just reads as
	
	\[
	\overline{\pi}((\lev{P}{},\lev{F}{}))=(\hat{\pi}(\lev{P}{}),\check{\pi}(\lev{F}{}))
	\]
	 
	 \item[(PS3)] The projection $\overline{\pi}$ is well defined if $(V,G)$ is finitely chronological and $\overline{V}$ is Hausdorff.
	 
\end{itemize}

\medskip

\noindent Moreover, when the map $\overline{\pi}$ is well defined and surjective, it defines the following relation between points in $\overline{V}$: two points are $\sim_{G}$-related if they projects onto the same point in $\overline{M}$. Then, denoting by $\overline{V}/G$ the quotient space, we obtain the following commutative diagram:

\begin{center} 
	\begin{tikzpicture}
	\matrix (m) [matrix of math nodes,row sep=3em,column sep=4em,minimum width=2em]
	{
		\overline{V} & \\
		\overline{V}/G & \overline{M} \\};
	\path[-stealth]
	(m-1-1) edge node [left] {$\overline{\i}$} (m-2-1)
	(m-2-1.east|-m-2-2) edge node [below] {$\overline{\j}$} (m-2-2)
	(m-1-1) edge node [above] {$\overline{\pi}$} (m-2-2);
	\end{tikzpicture}
\end{center}
where $\overline{\i}$ is the natural projection to the quotient and $\overline{\jmath}$ is the induced bijection. 

\smallskip 

{\em At the chronological level}, and once an appropriate chronological relation is defined on $\overline{V}/G$ (see Section \ref{sec:chronology}), it follows that

\begin{itemize}
	\item[(CH)] the map $\overline{\jmath}$ is a chronological isomorphism.
\end{itemize}

\smallskip 

 {\em Finally, at the topological level}, $\overline{\jmath}$ satisfies the following properties: 

\begin{itemize}
	\item[(TP1)] The map $\overline{\jmath}$ is continuous if one of the following hypotheses hold:
	\begin{itemize}
		\item[(i)] $\overline{\pi}$ satisfies that  $\overline{\pi}(\lev{P}{},\emptyset)=(P,\emptyset)$ and $\overline{\pi}(\emptyset,\lev{F}{})=(\emptyset,F)$ (this follows if, for instance, $\pi$ is a tame projection); and $M$ has no sequence with (future or past) divergence lifts
		\item[(ii)] $\overline{\pi}(\lev{P}{},\emptyset)=(P,\emptyset),$ $\overline{\pi}(\emptyset,\lev{F}{})=(\emptyset,F)$  and $\overline{M}$ has no lightlike boundary points\footnote{Here, we say that the boundary has no lightlike boundary points if given a  point $(\lev{P}{},\emptyset)\in \overline{V}$ (resp. $(\emptyset,\lev{F}{})\in \overline{V}$) there is no indecomposable past set $\lev{P'}{}$ (future set $\lev{F'}{}$) such that $\lev{P}{}\subset \lev{P'}{}$ ($\lev{F}{}\subset \lev{F'}{}$).}.
	\end{itemize}
	 
	\item[(TP2)] If $(V,G)$ is finite chronological, the map $\overline{\jmath}$ is open. 
\end{itemize}

\smallskip 

In particular, $\overline{\pi}$ is well defined, surjective and induces an homeomorphism and chronological isomorphism between $\overline{V}/G$ and $\overline{M}$ if it is satisfied one of the following assertions:
\begin{itemize}
	\item[(a)] $\pi$ is tame, $(V,G)$ is finite chronological and $M$ admits no sequence with (future or past) divergent lifts.
	
	\item[(b)] $(V,G)$ is finitely chronological, $\overline{V}$ is Hausdorff and $\overline{M}$ has no lightlike boundary points.
	
	\item[(c)] $(V,G)$ is finitely chronological, $\overline{V}$ is Hausdorff, it has no lightlike boundary points and the $G$-orbits for both $\hat{V}$ and $\check{V}$ are closed.
\end{itemize}

\end{thm}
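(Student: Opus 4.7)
The plan is to establish Theorem \ref{thm:main2} in four modular stages corresponding to the point-set statements (PS1--PS3), the chronological statement (CH), the topological statements (TP1--TP2), and finally the combined homeomorphism conclusions (a)--(c). The guiding principle is to reduce each assertion to the future/past statements already controlled by Theorem \ref{thm:main1} and its evident time-dual, and then to check that the genuinely new ingredient of the total c-completion---namely the S-relation pairing indecomposable past and future sets---is respected by $\overline{\pi}$.

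For the point-set level, I would first observe that for any $(\lev{P}{},\lev{F}{})\in\overline{V}$ the future part $\hat{\pi}(\lev{P}{})$ is an element of $\hat{M}$ under the no-divergent-lifts hypothesis (this is Theorem \ref{thm:main1} applied to $\hat{\pi}$), and symmetrically $\check{\pi}(\lev{F}{})\in\check{M}$. The non-trivial point is that the resulting pair must be S-related to define an element of $\overline{M}$; this is immediate under tameness (which by construction respects the S-relation, yielding (PS2)) and for (PS1) and (PS3) is deduced by comparing lifts of S-related pairs, a step where finite chronology provides the required compactness-type control. Surjectivity then reduces to lifting each $(P,F)\in\overline{M}$ with $P\neq\emptyset\neq F$ to some $(\lev{P}{},\lev{F}{})\in\overline{V}$; finite chronology is exactly the hypothesis forcing such a lift to exist, while Hausdorffness of $\overline{V}$ is used in (PS3) to rule out ambiguities between potential lifts.

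For (CH), the chronological relation on $\overline{M}$ is encoded by $F\cap P'\neq\emptyset$; since $\pi$ is a local isometry, $\pi^{-1}(F\cap P')$ decomposes as a disjoint union of $G$-translates of intersections of lifts, so a chronological relation between two $\sim_G$-classes in $\overline{V}/G$ matches exactly a chronological relation between their $\overline{\pi}$-images. Once the quotient chronological relation on $\overline{V}/G$ is set up as in Section \ref{sec:chronology}, verifying that $\overline{\jmath}$ both preserves and reflects $\ll$ is then essentially definitional. For the topology (TP1--TP2), the plan is to invoke the sequential characterization of the c-topology via $L$-limits of IPs/IFs. Continuity in case (i) follows by lifting convergent sequences from $\overline{V}/G$ to $\overline{V}$ and applying the future continuity of $\hat{\pi}$ from Theorem \ref{thm:main1} together with its past analog, using the tameness-like clause $\overline{\pi}(\lev{P}{},\emptyset)=(P,\emptyset)$ to handle boundary points with an empty component; in case (ii) the absence of lightlike boundary points rules out the pathology where an $L$-limit mixes incompatible past/future components. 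Openness in (TP2) combines the openness of $\hat{\jmath}$ and $\check{\jmath}$ from Theorem \ref{thm:main1}, with finite chronology serving as the compactness hypothesis needed to glue the two halves.

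Combining these ingredients, (a), (b), (c) become bookkeeping exercises checking that each listed hypothesis set implies the definability/surjectivity clauses together with continuity and openness. The main obstacle I anticipate is in case (TP1)(ii) and the analogous part of (b): without assuming the no-divergent-lifts condition, convergence in $\overline{V}/G$ can in principle fail to descend to $\overline{M}$ precisely when the S-relation allows the limit to attach to a lightlike boundary point; pinning down that \emph{only} lightlike limit behaviour can cause this failure---and hence that the ``no lightlike boundary points'' hypothesis is exactly what is needed to rescue continuity---will require a careful case analysis of how IPs and IFs on $M$ lift to $V$ in the absence of tameness.
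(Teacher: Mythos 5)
The most serious gap is in (TP2). You propose to obtain the openness of $\overline{\jmath}$ by ``combining the openness of $\hat{\jmath}$ and $\check{\jmath}$'' with finite chronology acting as a compactness hypothesis, but the paper shows explicitly (Example \ref{ex:exe3}) that openness of the two partial maps is \emph{not} sufficient, and the obstruction is exactly the one your plan leaves untouched: given $(P,F)\in L_{M}(\{(P_n,F_n)\}_{n})$ with $P\neq\emptyset\neq F$, partial openness produces separately convergent lifts of $\{P_n\}_{n}$ and of $\{F_n\}_{n}$, but nothing guarantees that these lifts can be chosen so that for each $n$ the lifted past and future sets are $S$-related (hence give points of $\overline{V}$), nor that the two limits pair up under $\sim_S$. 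The paper's Prop. \ref{prop:jopen} is a genuinely new construction resting on Lemma \ref{lem:pairachronal} (the sets $G(\lev{P}{},\lev{F}{})$ are nonempty and finite), Lemma \ref{lem:lem4.12} (under finite chronology every $g\in G(\lev{P}{},\lev{F}{})$ yields $\lev{P}{}\sim_S g\,\lev{F}{}$, which is what produces $S$-related lifted pairs), and Lemma \ref{lem:sequence} (to make the correcting group elements eventually constant so that the lifted pairs actually converge to an $S$-related lift of $(P,F)$). Neither this machinery nor any substitute for it appears in your outline; naming finite chronology as ``the compactness needed to glue the two halves'' identifies the hypothesis but not an argument.

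There are also two smaller but genuine issues. At the point-set level you assert that $S$-relatedness of the projected pair is ``immediate under tameness''; it is not. Tameness only simplifies the formula for $\overline{\pi}$ (Lemma \ref{wellprojectedtame}, which is all that (PS2) claims), whereas $S$-relatedness of $(\hat{\pi}(\lev{P}{}),\check{\pi}(\lev{F}{}))$ for pairs with both components nonempty requires the no-divergent-lifts hypothesis (Prop. \ref{prop:proypares}), and pairs of the form $(\lev{P}{},\emptyset)$ need the separate argument of Prop. \ref{prop:welldefpinotame}, where $P\sim_S\emptyset$ is established by contradiction using the existence of lifts of pairs of $\overline{M}$ with both components nonempty. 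Likewise, case (c) is not bookkeeping: its hypotheses are stated on $\overline{V}$, and passing from ``no lightlike points in $\overline{V}$ plus closed $G$-orbits in $\hat{V}$ and $\check{V}$'' to ``no lightlike boundary points in $\overline{M}$'' (so that (TP1)(ii) applies) is the content of Lemma \ref{lem:lemnew}, which your plan neither states nor replaces. Finally, the mechanism you offer for (TP1)(ii), ruling out limits that ``mix incompatible past/future components'', is not the relevant one: the point is that when $\overline{M}$ has no lightlike boundary points a projected limit of the form $(P,\emptyset)$ cannot be strictly contained in any IP, so maximality in the superior limit holds automatically and the divergent-lift hypothesis becomes unnecessary.
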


In Section \ref{sec:examples} we include several technical examples showing the optimality of our results, that is, we show that if we remove any of the three sufficient conditions (tameness, no existence of sequences with divergent lifts or finite chronology), the results are, in general, false. Finally in Section \ref{sec:application}, and as a physically relevant application of our result, we use the developed theory to compute the causal boundary of quotients of Robertson Walker spacetimes, including quotients of the AdS Spacetime.


\section{Preliminaries}\label{sec:preliminares}

\subsection{Sequential topologies and limit operators}
\label{sec:prellimits}
Along this section we will include all the basic facts about sequential topologies and limit operators that we will require for the rest of the paper. Most of the results are known (see \cite{FHSIso2,FHSHaus}), but we present the concept of first order UTS along some associated results that, as far as we known, are new.
\smallskip 

Let $X$ be an arbitrary space with a limit operator $L$ defined on it, that is, an operator $L:\mathcal{S}(X)\rightarrow \mathcal{P}(X)$, where $\mathcal{S}(X)$ is the space of sequences in $X$ and $\mathcal{P}(X)$ is the space of parts of $X$. We will always assume that the limit operator is {\em coherent}, i.e., that $L(\sigma)\subset L(\kappa)$ where $\kappa,\sigma\in \mathcal{S}(X)$ and $\kappa$ is a subsequence of $\sigma$ (this will be denoted by $\kappa\subset \sigma$).

Any coherent limit operator defines naturally a sequential topology $\tau_{L}$ on $X$ on the following way: a set $C$ is closed for $\tau_{L}$ if and only if $L(\sigma)\subset C$ for all sequence $\sigma\subset C$. Reciprocally, any sequential topology $\tau$ has associated a limit operator $L_{\tau}$ (its usual convergence) such that $\tau=\tau_{\tau_{L}}$ (see \cite[Proposition 2.6]{FHSHaus}).

In general, the limit operator $L$ does not determine the complete set of convergence points of a sequence $\sigma$ with the topology $\tau_{L}$. In fact, the only implication which is always true is that: 

\begin{equation}\label{eq:1}  
p\in L(\sigma) \Longrightarrow \hbox{$\sigma$ converges to $p$ with the topology $\tau_{L}$}.
\end{equation}
When the other implication is satisfied for all sequences, we will say that the limit operator is {\em of first order}. In general, there are not many results determining when a limit operator is of first order. In fact, in practical cases, the proof is done case by case, taking special care of ``problematic'' sequences. However, if we relax slightly the first order condition on $L$, we can obtain simply-to-check conditions which will be enough for our purposes. In this sense, let us introduce some definitions.

\begin{defi}\label{def:UTS}
	Let $X$ be a space and $L$ a limit operator defined on $X$. Let us denote by $\tau_{L}$ the associated sequential topology and let $\sigma\subset X$ be a sequence. We will say that $L$ is {\em of first order for $\sigma$} if
	
	\[
	p\in L(\sigma) \iff \hbox{$\sigma$ converges to $p$ with the topology $\tau_{L}$}.
	\] 
	Additionally, we will say that $L$ is {\em of first order up to a subsequence for $\sigma$} (or {\em first order UTS} for short), if $\sigma$ has a subsequence $\kappa\subset \sigma$ such that $\kappa$ is of first order for $L$. Finally, we will say that $L$ is of first order UTS if it is of first order UTS for all sequence $\sigma\subset X$.	
\end{defi}

The following result give us a sufficient condition to ensure when a limit operator is of first order for a given sequence.

\begin{lemma}\label{lem:lemclosed} 
	Let $\sigma$ be a sequence on $X$ such that, for all $\kappa\subset \sigma$, $L(\kappa)=L(\sigma)$. Assume additionally that $L(\sigma)$ only contains a finite number of elements. Then, $cl(\sigma)=\sigma\cup L(\sigma)$, where $cl(\sigma)$ denotes the topological closure of $\sigma$. In particular, $L$ is of first order for $\sigma$. 
\end{lemma}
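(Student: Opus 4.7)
The plan is to establish the characterization $cl(\sigma)=\sigma\cup L(\sigma)$ by proving both inclusions, and then derive that $L$ is of first order for $\sigma$ as a direct consequence of this explicit description of the closure.

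For the inclusion $\sigma\cup L(\sigma)\subseteq cl(\sigma)$, the containment $\sigma\subseteq cl(\sigma)$ is immediate, and because $cl(\sigma)$ is $\tau_L$-closed and contains the sequence $\sigma$, the very definition of a closed set with respect to $\tau_L$ forces $L(\sigma)\subseteq cl(\sigma)$.

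For the reverse inclusion I would verify that $C:=\sigma\cup L(\sigma)$ is itself $\tau_L$-closed, i.e.\ that $L(\rho)\subseteq C$ for every sequence $\rho\subseteq C$. Since $L(\sigma)$ is finite and each $\rho_n$ lies in $\sigma\cup L(\sigma)$, a pigeonhole argument produces from $\rho$ a subsequence $\kappa$ of one of two shapes: either $\kappa$ is a genuine subsequence of $\sigma$ (when one can select indices $n_k$ so that $\rho_{n_k}=\sigma_{m_k}$ with $m_k$ strictly increasing), or $\kappa$ is eventually constant with some value $c\in C$ (the case where $\rho$ takes values in a finite subset of $C$). In the first case, the hypothesis $L(\kappa)=L(\sigma)$ combined with the coherence of $L$ yields $L(\rho)\subseteq L(\kappa)=L(\sigma)\subseteq C$. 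In the second case, if $c$ appears infinitely often as a term of $\sigma$ then the constant subsequence is itself a genuine subsequence of $\sigma$ and the previous argument applies; otherwise $c\in L(\sigma)$, and the $\tau_L$-limits of a constant sequence at $c$ must lie in $\{c\}\cup L(\sigma)\subseteq C$. Either way, $L(\rho)\subseteq C$.

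The main technical delicacy is the constant-subsequence branch of the second case: the stability hypothesis only controls subsequences of $\sigma$, so one has to argue separately that $L$ applied to a constant sequence at a point $c\in C$ remains inside $C$; this is where one appeals to \eqref{eq:1} together with the already established containment $L(\sigma)\subseteq cl(\sigma)$. Finally, the \emph{in particular} part falls out as a short corollary of the closure formula: if $\sigma$ converges to $p$ in $\tau_L$ then $p\in cl(\sigma)=\sigma\cup L(\sigma)$; assuming for contradiction that $p\notin L(\sigma)$ and applying the closure formula to every tail $\sigma'$ of $\sigma$ (here using $L(\sigma')=L(\sigma)$ guaranteed by the stability hypothesis) forces $p$ to appear as a term of every tail, hence infinitely often in $\sigma$, producing a constant subsequence at $p$ whose $L$-image equals $L(\sigma)$ and yet must, by the same analysis as above, contain $p$, a contradiction. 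Therefore $p\in L(\sigma)$, proving the first-order property for $\sigma$.
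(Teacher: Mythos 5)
Your route is the same as the paper's: prove $\sigma\cup L(\sigma)\subseteq cl(\sigma)$, show $C:=\sigma\cup L(\sigma)$ is $\tau_{L}$-closed by reducing (via coherence and a pigeonhole extraction) an arbitrary sequence in $C$ to either a genuine subsequence of $\sigma$ or a constant sequence, and then deduce first-orderness from the closure formula. The first inclusion and the extraction step are fine, and in fact argued more carefully than in the paper. The problem is precisely the branch you flag as the "main technical delicacy": the justification you offer for it does not work. Applying \eqref{eq:1} to the constant sequence $\{c\}_{n}$ only tells you that every element of $L(\{c\}_{n})$ is a $\tau_{L}$-limit of that constant sequence, i.e.\ lies in $cl(\{c\})$; neither this nor the already established $L(\sigma)\subseteq cl(\sigma)$ bounds $L(\{c\}_{n})$ by $\{c\}\cup L(\sigma)$. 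With only coherence assumed, nothing does: one can define a coherent $L$ on $X=\{a,b\}\cup\{\sigma_n\}_n$ (all terms distinct) by letting $L=\{a\}$ on sequences escaping along the tails of $\sigma$ and $L=\{b\}$ on sequences eventually equal to $\sigma_1$; the hypotheses of the lemma hold with $L(\sigma)=\{a\}$, yet $b\in cl(\sigma)\setminus(\sigma\cup L(\sigma))$. So the step is a genuine gap, not just an omitted routine detail. (A minor further slip: "otherwise $c\in L(\sigma)$" is not right, since $c$ may be a term of $\sigma$ occurring only finitely often; but this is harmless, because what you need in every subcase is $L(\{c\}_{n})\subseteq C$.)

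In fairness, the paper's own proof glosses over exactly the same point: it simply writes that for the constant subsequence $\kappa\equiv p$ one has $L(\kappa)=\{p\}\subset C$, i.e.\ it takes for granted that a constant sequence admits only its own value as $L$-limit; and the dual property $p\in L(\{p\}_{n})$ is what both you (in "must, by the same analysis as above, contain $p$") and the paper use in the final step when $p$ recurs infinitely often in $\sigma$. That property holds for the chronological limit operators to which the lemma is later applied, but it is not a consequence of coherence, so the argument only closes up once it is added as a standing assumption on $L$ or checked for the operator at hand. The difference between your write-up and the paper's is that you tried to derive this fact from \eqref{eq:1} plus $L(\sigma)\subseteq cl(\sigma)$, and that derivation is where your proof fails; with the constant-sequence property granted, your proof (including the tail argument replacing the paper's "remove the finitely many occurrences of $p$") is essentially the paper's.
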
 
\begin{proof}
	The proof is quite straightforward and we include it here for the sake of completeness. Observe that the set $C=\sigma\cup L(\sigma)\subset cl(\sigma)$ from \eqref{eq:1}, so the first assertion follows if we prove that $C$ is closed. For this, let $\kappa\subset C$ and let us prove that $L(\kappa)\subset C$. Recall that, due the finite number of elements in $L(\sigma)$, we have have two possibilities (up to a subsequence) for $\kappa\subset C$: Or the sequence $\kappa$ is a subsequence of $\sigma$, and so, $L(\kappa)=L(\sigma)\subset C$; or $\kappa$ is constantly an element $p\in C$, and so, $L(\kappa)=\{p\}_{n} \subset C$. In both cases, $L(\kappa)\subset C$ and hence $C$ is closed. 
	
	For the last assertion, that is, the first order character of $L$ on $\sigma$, let us assume that $\sigma\rightarrow p$. Again, we distinguish two cases:
	
	\begin{itemize}
		\item We can exclude a finite number of elements in $\sigma$ such that the refined sequence $\sigma'$ does not contain $p$. As we are removing only a finite number of elements, $L(\sigma')=L(\sigma)$ and it follows from the first assertion that $cl(\sigma')=\sigma'\cup L(\sigma')$. As $\sigma'\rightarrow p$, we have that $p\in \sigma'\cup L(\sigma')$. From construction $\sigma'$ does not contain $p$, so $p\in L(\sigma')=L(\sigma)$.
		
		\item Otherwise, we can construct a subsequence $\kappa$ of $\sigma$ with $\kappa=\{p\}_n$. In particular, $p\in L(\{p\}_{n})=L(\kappa)=L(\sigma)$ (recall that the last equality follows by hypothesis). 
	\end{itemize}
	In conclusion, $p\in L(\sigma)$ and $L$ is of first order for $\sigma$. 
\end{proof}
%

Previous result give us a relatively simple way to determine when $L$ is of first order for a given sequence $\sigma$ and it is usually enough in particular cases. However, we can go a step further on the search of a easily verifiable condition. For this, let us note that most of the results we will present on this paper require, not a complete control of the convergence of sequences, but the existence for any sequence of a subsequence sufficiently well behaved. This is make apparent in the following result which ensure continuity of a map between sequential spaces:

\begin{prop}\label{prop:limcont}
	Let $f:(M,L)\rightarrow (N,L')$ be a map between sequential spaces $(M,L)$ and $(N,L')$. The map $f$ is continuous if for any sequence $\{p_n\}_{n}  \subset M$ and $p\in L(\{p_n\}_{n})$ there exists a subsequence $\{p_{n_k}\}_{k}$ such that $f(p)\in L'(\{f(p_{n_k})\}_{k})$.
\end{prop}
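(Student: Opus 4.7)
The plan is to prove continuity by the closed-set characterization: show that for every closed $C'\subset N$ (in the topology $\tau_{L'}$), the preimage $f^{-1}(C')$ is closed in $\tau_{L}$. This is the natural route since the sequential topology $\tau_{L}$ is defined in terms of closure under the $L$-operator, so closedness is the property most directly accessible from the hypotheses on $L$ and $L'$.

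So let me fix a closed set $C'\subset N$, pick an arbitrary sequence $\sigma=\{p_{n}\}_{n}\subset f^{-1}(C')$, and take any $p\in L(\sigma)$. I want to conclude $p\in f^{-1}(C')$, i.e.\ $f(p)\in C'$. At this point I would invoke the hypothesis of the proposition to extract a subsequence $\{p_{n_{k}}\}_{k}\subset \sigma$ with $f(p)\in L'(\{f(p_{n_{k}})\}_{k})$. Since $\{p_{n_{k}}\}_{k}\subset f^{-1}(C')$, the image sequence $\{f(p_{n_{k}})\}_{k}$ lies entirely in $C'$; and since $C'$ is $\tau_{L'}$-closed, its defining property forces $L'(\{f(p_{n_{k}})\}_{k})\subset C'$. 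Hence $f(p)\in C'$, so $p\in f^{-1}(C')$, showing that $f^{-1}(C')$ is closed. By definition of continuity via preimages of closed sets, $f$ is continuous.

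I expect no real obstacle here: the argument is essentially a formal unwinding of the definition of $\tau_{L}$ as the sequential topology generated by $L$, combined directly with the subsequence hypothesis. The only point worth double-checking is that the subsequence step is legitimate; but since the defining property of a closed set in a sequential topology applies to \emph{any} sequence inside the set (in particular to $\{p_{n_{k}}\}_{k}$, which is itself a sequence in $f^{-1}(C')$), this passage is unproblematic. Note also that the statement only claims a sufficient condition for continuity, so we need not worry about producing a converse, which would fail in general unless $L$ itself were of first order.
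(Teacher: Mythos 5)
Your argument is correct and is essentially the paper's own proof: both verify that preimages of $\tau_{L'}$-closed sets are $\tau_L$-closed by passing to the subsequence given by the hypothesis and using that the image subsequence lies in the closed set $C'$ (the paper merely phrases the same step as a proof by contradiction). No gaps.
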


\begin{proof}
	Let $C$ be a closed set in $(N,L')$, and let us show that $f^{-1}(C)$ is closed on $(M,L)$. Assume by contradiction that $f^{-1}(C)$ is not closed and so, from definition, that there exists a sequence $\sigma\subset f^{-1}(C)$ and a point $p\in M$ with $p\in L(\sigma)\setminus f^{-1}(C)$. From hypothesis, there exists a subsequence $\kappa\subset \sigma$ such that $f(p)\in L'(f(\kappa))$. But $f(\kappa)\subset C$, which is closed for the topology $\tau_{L'}$. Therefore $f(p)\in C$, and so, $p\in f^{-1}(C)$, a contradiction.  
\end{proof}

This is one of the reasons why the condition of $L$ being of first order UTS is specially interesting for us. Moreover, as we can see on the following result, it is possible to obtain the following sufficient conditions for the first order UTS, which is particularly simple to verify in practical cases:

\begin{lemma}\label{lem:firstUTS}
	Let $X$ be any space with a limit operator $L$ defined on it. Assume that there exists $0\leq K<\infty$ such that $\#L(\sigma)\leq K$ for all sequence $\sigma\subset X$, where $\#L(\sigma)$ denotes the cardinality of the set $L(\sigma)$. Then, $L$ is of first order UTS.
\end{lemma}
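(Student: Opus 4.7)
The plan is to reduce the statement to Lemma \ref{lem:lemclosed} by extracting, for any given sequence $\sigma$, a subsequence whose $L$-limit set is stable under further refinement. Once such a subsequence is found, Lemma \ref{lem:lemclosed} will immediately give first order on it, which is exactly what first order UTS requires.

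First I would fix an arbitrary sequence $\sigma \subset X$ and consider the integer-valued function $n(\kappa) := \#L(\kappa)$ defined on the collection of subsequences $\kappa \subset \sigma$ (including $\sigma$ itself). By the standing hypothesis $n(\kappa) \leq K$, so $n$ is bounded and attains its supremum; choose any $\kappa \subset \sigma$ realizing this maximum.

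The key step is then to verify $L(\kappa') = L(\kappa)$ for every subsequence $\kappa' \subset \kappa$. Coherence gives $L(\kappa) \subset L(\kappa')$ and hence $\#L(\kappa) \leq \#L(\kappa')$. Conversely, $\kappa'$ is itself a subsequence of $\sigma$, so maximality of $\kappa$ forces $\#L(\kappa') \leq \#L(\kappa)$. Since $L(\kappa)$ is finite, the inclusion $L(\kappa) \subset L(\kappa')$ with equal cardinalities upgrades to equality. At that point $\kappa$ satisfies both hypotheses of Lemma \ref{lem:lemclosed} (common limit set for all its subsequences, and finiteness of $L(\kappa)$), so the lemma yields that $L$ is of first order on $\kappa$. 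Since $\sigma$ was arbitrary, this proves that $L$ is first order UTS.

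I do not anticipate any real obstacle. The only point that deserves care is the direction of coherence: the limit operator \emph{grows} rather than shrinks under passage to subsequences, so the correct move is to maximize the cardinality of the limit set over subsequences and then exploit the uniform upper bound $K$ to see that such a maximum is realized.
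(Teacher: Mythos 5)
Your proof is correct and follows essentially the same route as the paper: both exploit the uniform bound $K$ together with the growth of $L$ under passage to subsequences to produce a subsequence $\kappa$ with $L(\kappa')=L(\kappa)$ for all $\kappa'\subset\kappa$, and then invoke Lemma \ref{lem:lemclosed}. The only difference is cosmetic: the paper reaches such a $\kappa$ by iterating the passage to subsequences with strictly larger limit sets (terminating by the bound $K$), whereas you select a cardinality-maximizing subsequence directly, which cleanly packages the same argument.
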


\begin{proof}
	The proof is quite straightforward from the hypothesis and the fact that $L(\sigma)\subset L(\kappa)$ for all $\kappa\subset \sigma$. In fact, observe that for any sequence $\sigma$ one of the following possibilities appear:
	\begin{itemize}
		\item Or for all subsequence $\kappa\subset \sigma$, $L(\kappa)=L(\sigma)$.
		
		\item Or there exists $\kappa\subset \sigma$ with $L(\sigma)\subsetneq L(\kappa)$. In particular, $\#L(\kappa)\geq \#L(\sigma)+1$.
	\end{itemize}
	
	On the first case,  the sequence $\sigma$ is of first order according to Lemma \ref{lem:lemclosed} and we are done. On the second case, we repeat the same argument with $\kappa$ on the role of $\sigma$. Due to the fact that $\#L(\overline{\sigma})\leq K$ for any sequence $\overline{\sigma}$, previous process should end on a finite number of steps with a subsequence $\tilde{\kappa}$ on the first case, and the result follows.
\end{proof}

\smallskip

Finally, let us review how sequential topologies behaves under a quotient. As it was proved on \cite[Remark 5.12]{FHSIso2}, given a sequential space $(X,L)$ and an equivalence relation $\sim$ defined on it, the quotient topological space $X/\sim$ (with the induced topology) is again a sequential space. In fact, it is possible to give explicitly a limit operator $L_{Q}$ whose associated topology coincides with the quotient topology in $X/\sim$ in the following way:

\begin{equation}\label{deflimitQ}[x]\in L_{Q} (\{[x_n]\}_{n})\iff \exists\;\;
x'\in i^{-1}([x]),\; x'_n\in i^{-1}([x_n])\;\,\forall n\in\N\;
:\,\; x'\in L(\{x'_n\}_{n}).\end{equation} where $i:X\rightarrow X/\sim$ is the natural quotient projection and $[x],[x_n]\in
X/\sim$.

\subsection{C-boundary construction}
\label{sec:2.1}

The causal completion was firstly introduced by Geroch, Kronheimer and Penrose in their seminal work \cite{GKP}. The main idea for such a construction is to attach for any future-past inextensible
timelike curve an ideal point characterized by the past-future of the curve. The original construction presents several problems mainly related with the topology considered. However, the notion of causal boundary and completion have been widely developed \cite{BS,H1,H2,Ra,Sz,Sz2} (see also the reviews in \cite{GpScqg05,S}), reaching a definition for the causal completion (named c-completion) fully satisfactory on \cite{FHSFinalDef}.
%
 
Let us review some classical concepts of causal theory, referring the reader to \cite{Wald} for further details and classical notation. Let $(V,\mathfrak{g})$ be a connected, time-oriented Lorentz manifold. Denote by $\ll$ the chronological relation (resp. $\leq$ the causal relation), that is, $p\ll q$ ($p\leq q$) iff there exists a future-directed timelike (causal) curve from $p$ to $q$. In what follows, the spacetime $V$ will be considered strongly causal, and so, the intersections between the chronological future and past of points generate the topology in $V$. In particular, strong causality ensures also that $V$ is {\em distinguishing}, hence two different points $p,q\in V$ have different future $I^+(p)\neq I^+(q)$ and past $I^-(p)\neq I^-(q)$.

A non-empty
subset $P\subset V$ is called a {\em past set} if it coincides
with its past, i.e., $P=I^{-}(P):=\{p\in V: p\ll q\;\hbox{for
	some}\; q\in P\}$. Let $S\subset V$ and define the {\em common past} of $S$ as
$\downarrow S:=I^{-}(\{p\in V:\;\; p\ll q\;\;\forall
q\in S\})$. Observe that, from definition, the past and common past sets are open.
A past set that cannot be written as the union of two proper past
sets
is called {\em indecomposable past} set, {\em IP} for short. An indecomposable past set $P$ belongs to one of the following two categories: $P$ can be expressed as the past of a point of the spacetime, i.e., $P=I^-(p)$ for some $p\in V$, and so, $P$ is called {\em proper indecomposable past
	set}, {\em PIP}; or $P=I^{-}(\{x_n\}_{n})$ for some
inextendible future-directed chronological sequence $\{x_n\}_{n}$\footnote{Here, by a future-directed chronological sequence $\{x_n\}_{n}$ we mean that $x_n\ll x_{n+1}$ for all $n$, see \cite{Flores} for this approach of the causal boundary.}, and then $P$ is
called {\em terminal indecomposable past set}, {\em TIP}. The dual
notions of {\em future set}, {\em common future}, {\em IF}, {\em PIF} and {\em
	TIF}, are defined just by interchanging the roles of
past and future in previous definitions.

The future causal completion is defined as the set of all indecomposable past sets IPs. As the manifold $V$ is distinguishing, the original manifold points $p\in V$ are naturally identified with their past $p\equiv I^-(p)$, and so, $V$ is identified with the set of PIPs. Therefore, the future causal boundary $\hat{\partial} V$ is defined as the set of all TIPs in $V$, obtaining the following identifications:
 
\[
V\equiv \hbox{PIPs},\qquad \hat{\partial}V\equiv
\hbox{TIPs},\qquad\hat{V}\equiv \hbox{IPs}.
\]
The future causal completion will be endowed with the {\em future chronological topology} $\hat{\tau}_{chr}$, a sequential topology defined by the following limit operator: for $\sigma=\{P_n\}_{n} \subset \hat{V}$, 

\begin{equation}\label{def:futurelimit}
P\in \hat{L}_{chr}(\{P_n\}_{n}) \iff P\subset 
		{\rm LI}(\{P_n\}_{n}) \hbox{ and it is maximal in } 
		{\rm LS}(\{P_n\}_{n}).
\end{equation}
Here. ${\rm LS}$ and ${\rm LI}$ denotes superior and inferior limits for sets; and {\em maximal} means that no other $P'\in\hat{V}$ (resp.
$F'\in \check{V}$) satisfies the stated property and includes
strictly $P$.

An analogous definition follows for the {\em past causal completion} by interchanging the roles of future and past sets. Hence, 

\[
V\equiv \hbox{PIFs},\qquad \check{\partial}V\equiv
\hbox{TIFs},\qquad\check{V}\equiv \hbox{IFs},
\]
and $\check{V}$ is endowed with the {\em past chronological topology} $\check{\tau}_{chr}$ defined by a limit operator $\check{L}_{chr}$.

\smallskip 

For the (total) c-boundary, we need to recall that some IPs and IFs represent naturally the same point of the completion. This is quite evident for PIPs and PIFs, where future and past sets can be identified if they are future and past of the same point respectively. However, previous identification is insufficient, as other indecomposable sets have to be identified. For this, let us define the so-called S-relation (introduced on \cite{Sz}).
Denote by $\hat{V}_{\emptyset}=\hat{V}\cup \{\emptyset\}$ and by
$\check{V}_{\emptyset}=\check{V}\cup \{\emptyset\}$. The
S-relation $\sim_S$ is defined in $\hat{V}_{\emptyset}\times
\check{V}_{\emptyset}$ as follows. A pair $(P,F)\in
\hat{V}\times \check{V}$ is $S$-related if
\begin{equation} \label{eSz}  P\sim_S F \Longleftrightarrow \left\{
\begin{array}{l}
P \quad \hbox{is included and is a maximal IP into} \;
\downarrow F
\\
F \quad \hbox{is included and is a maximal IF into} \; \uparrow
P.
\end{array} \right.
\end{equation}
As proved by Szabados
\cite{Sz}, the past and future of a point $p\in V$ are $S$-related, $I^-(p) \sim_S I^+(p)$, and these are
the unique S-relations (according to our definition (\ref{eSz}))
involving proper indecomposable sets. We also define that $P\sim_S\emptyset$ (resp. $\emptyset\sim_{S} F$) if $P$ (resp. $F$) is a non-empty, necessarily terminal
indecomposable past (resp. future) set that  is not S-related by
(\ref{eSz}) to any other indecomposable set (note that
$\emptyset$ is never S-related to itself).

Now, we can introduce the notion of c-completion. At the point set level, and following the idea of Marolf and Ross \cite{MR}, the c-completion is formed by $S$-related pairs of indecomposable sets 

		\begin{equation}
		\label{def:cau1}
		\overline{V}:=\{(P,F)\in \hat{V}_{\emptyset}\times
		\check{V}_{\emptyset}: P\sim_S F\}.
		\end{equation}
		
Every point $p\in V$ of the manifold is naturally identified with its
		corresponding pair $(I^-(p),I^+(p))$, so $V$ can be (and will be) considered a
		subset of $\overline{V}$. The {\em c-boundary} is then
		defined as $\partial V=\overline{V}\backslash V$.
		The chronological relation on $V$ is also extended to the c-completion in the following way (by abuse of notation, we denote the chronological relation on $\overline{V}$ with the same symbol): given two points $(P,F),(P',F')\in\overline{V}$
		
		\begin{equation}\label{eq:defcronorel}
		(P,F)\ll (P',F')\,\, \iff\,\, F\cap 
		P'\neq\emptyset.
		\end{equation}
		
		Finally, $\overline{V}$ is endowed with the {\em chronological topology} $\tau_{chr}$, a sequential topology associated to the following limit operator (known as the chronological limit): for a sequence $\sigma=\{(P_n,F_n)\}_{n} \subset \overline{V}$, define 

		\begin{equation}\label{eq:deflimcrono}
		\lcrono(\sigma):=\left\{(P,F)\in \overline{V}:\begin{array}{l}
		P\in \hat{L}_{chr}(\{P_n\}_{n}) \hbox{ if $P\neq \emptyset$}\\
		F\in \check{L}_{chr}(\{F_n\}_{n}) \hbox{ if $F\neq \emptyset$}
		\end{array}
		\right\}.
		\end{equation}
		It is important to recall, as it will be used later, that due the definition of the $S$-relation between terminal sets, the definition of the chronological limit is simplified when both terminal sets on the limit are non empty (see \cite[Lemma 3.15]{FHSFinalDef}). Concretely, if $P\neq\emptyset\neq F$:
		
		\begin{equation}\label{eq:limitsimple}
		(P,F)\in L_{chr}(\{(P_n,F_n)\}_{n})\iff P\subset {\rm LI}(\{P_n\}_{n}) \hbox{ and } F\subset{\rm LI}(\{F_n\}_{n})
		\end{equation}

		The following result will summarize the main properties of the c-completion endowed with the chronological relation and topology (see \cite[Theorem 3.27]{FHSFinalDef} and its proof).
		
		\begin{thm}\label{teo:propcomple}
		Let $(V,\mathfrak{g})$ be a strongly causal Lorentzian manifold and $\overline{V}$ its causal completion endowed with the chronological structure induced by \eqref{eq:defcronorel} and the topology induced from the chronological limit \eqref{eq:deflimcrono}. Then:
		
		\begin{itemize}
			\item[(i)] The inclusion $V\hookrightarrow \overline{V}$ is continuous. Moreover, the restriction of the chronological limit on $V$ is a first order limit operator.
			
			\item[(ii)] 
			 Let $\{x_n\}_{n}\subset V$ a future (resp. past) chronological sequence . Then,
			
			\[
			L_{chr}(\{x_n\}_{n})=\{(P,F)\in \overline{V}:\,P=I^-(\{x_n\}_{n}) \,\hbox{(resp. $F=I^+(\{x_n\}_{n})$)}\}
			\]
			
			\item[(iii)]  The c-completion is complete: For any past terminal set $P$ (resp. future terminal set $F$) there exists $F$ (resp. $P$) such that $(P,F)\in \overline{V}$. In particular, any inextensible timelike curve $\gamma$ on $V$ (resp. any inextensible chronological sequence $\{x_n\}_{n}$ on $V$) has an endpoint in $\overline{V}$.
			
			\item[(iv)] The sets $I^\pm ((P,F))$ are open for all $(P,F) \in \overline{V}$.
			
			\item[(v)] $\overline{V}$ is a $T_1$ topological space.
			
		\end{itemize} 
		\end{thm}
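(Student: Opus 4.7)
My plan is to verify the five assertions in order, using the definition \eqref{eq:deflimcrono} of the chronological limit, the S-relation \eqref{eSz}, and the strong causality (hence distinguishing character) of $V$ throughout. I would first dispose of the monotone situation that underlies (ii): for a future chronological sequence $\{x_n\}$ one has $I^-(x_n) \subset I^-(x_{n+1})$, hence $\mathrm{LI}(I^-(x_n)) = \mathrm{LS}(I^-(x_n)) = \bigcup_n I^-(x_n) = I^-(\{x_n\}) \in \hat V$, with maximality automatic; the admissible future components are then exactly those IFs producing a pair in $\overline{V}$, which gives (ii). For (i), if $x_n \to x$ in the manifold topology, openness of $I^\pm(\cdot)$ gives $I^-(x) \subset \mathrm{LI}(I^-(x_n))$, while the maximality of $I^-(x)$ in $\mathrm{LS}(I^-(x_n))$ among IPs follows from the standard argument using distinguishing: a strictly larger IP lying in $\mathrm{LS}$ would furnish a point whose past strictly contains $I^-(x)$, which is incompatible with $x_n \to x$ combined with the separation of points by chronological diamonds. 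The past statement is dual, yielding $(I^-(x),I^+(x)) \in L_{chr}$. The first-order property on $V$ is then immediate, because in strongly causal spacetimes the chronological diamonds form a basis of the manifold topology, so $\tau_{chr}|_V$ coincides with it.

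For (iii), the point-set completeness is essentially definitional: any IP $P$ is either S-related via \eqref{eSz} to some IF $F$ when such an IF exists, or paired with $\emptyset$ by the extension convention; in either case $(P,F)\in\overline{V}$. For the endpoint of an inextensible timelike curve $\gamma$, I would pick a future chronological sequence $\{x_n\}$ along $\gamma$ approaching the future end; then $P:=I^-(\{x_n\})$ is an IP, and (ii) together with the above yields a pair $(P,F)\in L_{chr}(\{x_n\})$, which is the desired endpoint.

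For (iv), I would show that the complement of $I^+((P,F))$ is sequentially closed. Given $(P_n',F_n') \to (P_0',F_0')$ with $F \cap P_n' = \emptyset$ for all $n$, and assuming $P_0' \neq \emptyset$, any $q \in F \cap P_0'$ must lie in $\mathrm{LI}(P_n')$, hence in $P_n'$ for $n$ large, contradicting $F \cap P_n' = \emptyset$; if $P_0' = \emptyset$ the conclusion is trivial, and $I^-$ is dual. For (v), I would verify that the constant sequence $\{(P,F)\}_n$ admits only $(P,F)$ as chronological limit: maximality forces $P' \in \{P,\emptyset\}$ and $F' \in \{F,\emptyset\}$, and the S-relation constraints defining $\overline{V}$ rule out the mixed possibilities (since $\emptyset \sim_S F$ is incompatible with $P \sim_S F$ when $P \neq \emptyset$, and $\emptyset$ is never S-related to itself), leaving $(P',F') = (P,F)$.

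The delicate point throughout is the interplay between the $\mathrm{LI}/\mathrm{LS}$ conditions and the maximality clauses in $\hat{L}_{chr}$ and $\check{L}_{chr}$: for (i) and (v) the maximality must rule out spurious larger indecomposable sets arising from pathological accumulation, and this is precisely where the distinguishing character of $V$ (together with the bookkeeping of empty components in the S-relation) enters in an essential way.
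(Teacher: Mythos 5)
A preliminary remark: the paper does not prove this theorem at all — it is a compendium of known properties quoted from \cite{FHSFinalDef} (Theorem 3.27 and its proof), so your proposal has to stand on its own. Your arguments for (iv) and (v) are correct and complete, and the point-set half of (iii) is indeed definitional. The genuine problems are in (ii) and (i). In (ii) you only verify the condition on the past component. By \eqref{eq:deflimcrono}, a pair $(P,F)$ with $F\neq\emptyset$ lies in $L_{chr}(\{x_n\}_{n})$ only if \emph{also} $F\in\check{L}_{chr}(\{I^+(x_n)\}_{n})$, and the stated equality further asserts that no limit of the form $(\emptyset,F')$ occurs; neither point is covered by ``the admissible future components are exactly those IFs producing a pair in $\overline{V}$''. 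Both facts are true but need the S-relation: since the $I^+(x_n)$ are decreasing, ${\rm LI}={\rm LS}=\bigcap_n I^+(x_n)$, any IF contained in this set equals $I^+$ of itself and hence lies in $\uparrow P$ with $P=I^-(\{x_n\}_{n})$, so maximality of $F$ inside $\uparrow P$ (coming from $P\sim_S F$ in \eqref{eSz}) yields maximality inside the ${\rm LS}$; and if some $F'$ with first component $\emptyset$ were a limit, then $F'\subset\bigcap_n I^+(x_n)$ forces $P\subset\downarrow F'$, a Zorn-type argument gives a maximal IP $P'\supset P$ in $\downarrow F'$, one checks $F'\subset\uparrow P'\subset\uparrow P$ and that $F'$ is maximal there, so $P'\sim_S F'$, contradicting $\emptyset\sim_S F'$. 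This exclusion is exactly the delicate content behind \eqref{eq:limitsimple} and cannot be waved away; note also that your (iii) endpoint statement and your appeal to (ii) inside (i) lean on it.

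The second gap is the maximality step in (i). As written, the argument is not valid: an IP strictly larger than $I^-(x)$ contained in ${\rm LS}(\{I^-(x_n)\}_{n})$ need not be the past of any point (it may be a TIP), so it does not ``furnish a point whose past strictly contains $I^-(x)$''; and even in the PIP case, distinguishing does not forbid $I^-(y)\supsetneq I^-(x)$ with $I^-(y)\subset{\rm LS}$ — observe that $x\ll x_{n_k}\to x$ is by itself compatible with strong causality (the $x_n$ may approach $x$ from its chronological future), and in a non-reflecting strongly causal spacetime the points of the larger IP can be chosen outside $\overline{I^-(x)}$ (take $q\gg p_0$ with $p_0$ in the IP and on $\partial I^-(x)$), so no ``separation by chronological diamonds'' gives the contradiction directly. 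A genuine strong-causality argument (causally convex neighbourhoods and a limit analysis of the connecting timelike curves) is required here, as in \cite{FHSFinalDef}. This same missing step undermines your justification of the first-order property: even granting $\tau_{chr}|_V$ equals the manifold topology (which needs (iv) plus the continuity you are in the middle of proving, not just that diamonds form a basis), first-orderness additionally demands that manifold convergence of $\{x_n\}_{n}$ to $x$ implies $(I^-(x),I^+(x))\in L_{chr}(\{x_n\}_{n})$, which is precisely the unproven maximality claim; so (i) is asserted rather than proved.
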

%
%

\subsection{Spacetime covering projections: The causal ladder and main properties}
\label{sec:spaceproy}
Let us consider that we have an action on $V$ given by a group $G$ of isometric maps\footnote{The results of this paper can be obtained considering a group $G$ of conformal maps, but we will consider isometric actions for simplicity.}.
 
 \[\begin{array}{ccc} 
 G\times V &\rightarrow& V\\ (g,p) & \rightarrow &g\,p. 
 \end{array} 
 \]

We will always assume that the action preserve time-orientation, and acts freely and {\em properly discontinuously}, where the latter means: (a) for each $p\in V$, there exists a neighborhood $U$ such that $g\,U\cap U=\emptyset$ for all $g\in G\setminus \{e\}$ and; (b)  for $p_1,p_2\in V$ there are neighbourhoods $U_1$ and $U_2$ such that $g\,U_1\cap U_2=\emptyset$ for all $g\in G$.
%
%
%

Previous conditions over the action let us ensure that the quotient space $M=V/G$ is also a Lorentzian manifold with the induced metric (which will be denoted by an abuse of notation as $\mathfrak{g}$). The canonical projection to the quotient space, denoted by $\pi:V\rightarrow M$, will be called a {\em spacetime covering projection}. The following result let us understand clearly the relation between the chronological relation on $M$ and $V$ (the same result follows for causal relations, see \cite[Proposition 1.1]{H}). 

\begin{prop}\label{prop:caracrel}
	Let us consider $\pi:V\rightarrow M$ a spacetime covering projection. Then:
	
	\begin{itemize}
		\item If $p,q\in V$ satisfy that $p\ll q$, then $\pi(p)\ll \pi(q)$.
		\item If $x,y\in M$ satisfy that $x\ll y$, then for any $p,q\in V$ with $\pi(p)=x$ and $\pi(q)=y$, there exists an element $g\in G$ such that $p\ll g\,q$. 
	\end{itemize}
\end{prop}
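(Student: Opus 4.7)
The plan is to use two basic properties of a spacetime covering projection: (1) $\pi$ is a local isometry that preserves the time orientation, so $d\pi$ pointwise sends future-directed timelike vectors to future-directed timelike vectors and vice versa through its local inverses; and (2) under the hypothesis that $G$ acts freely and properly discontinuously, $\pi:V\to M=V/G$ is a topological covering map whose fibers are exactly the $G$-orbits. Both are standard and will be invoked without re-proof.

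For the first bullet, suppose $p\ll q$ in $V$ and pick a future-directed timelike curve $\gamma:[0,1]\to V$ from $p$ to $q$. The projection $\pi\circ\gamma$ is smooth with endpoints $\pi(p)$ and $\pi(q)$, and by (1) its velocity $(\pi\circ\gamma)'(t)=d\pi_{\gamma(t)}\bigl(\gamma'(t)\bigr)$ is future-directed and timelike at every $t$. Hence $\pi(p)\ll\pi(q)$. For the second bullet, given $x\ll y$ in $M$ and arbitrary preimages $p\in\pi^{-1}(x)$, $q\in\pi^{-1}(y)$, pick a future-directed timelike curve $\alpha:[0,1]\to M$ from $x$ to $y$. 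By (2), $\alpha$ admits a unique continuous lift $\tilde\alpha:[0,1]\to V$ with $\tilde\alpha(0)=p$ and $\pi\circ\tilde\alpha=\alpha$, and this lift is smooth on each piece where $\pi$ is a diffeomorphism onto its image. Applying (1) to these local inverses gives $\tilde\alpha'(t)=\bigl(d\pi_{\tilde\alpha(t)}\bigr)^{-1}\bigl(\alpha'(t)\bigr)$, so $\tilde\alpha$ is future-directed and timelike. Setting $q':=\tilde\alpha(1)$, we have $\pi(q')=y=\pi(q)$, so $q'$ and $q$ lie in the same fiber; by (2) that fiber equals the orbit $G\cdot q$, hence $q'=g\cdot q$ for some $g\in G$, and $p\ll g\,q$ as required.

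The argument has no essential obstacle; the only point demanding a moment of care is confirming that the lift of a future-directed timelike curve is itself future-directed and timelike, which reduces to the pointwise linear-isometry-plus-time-orientation statement in (1). Everything else is bookkeeping from the definitions and from the standard identification of the fibers of the quotient map with the $G$-orbits.
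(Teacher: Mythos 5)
Your argument is correct: projecting a future-directed timelike curve through the local isometry $\pi$ gives the first bullet, and lifting a timelike curve from $x$ to $y$ starting at $p$ and identifying the fiber over $y$ with the orbit $G\,q$ gives the second. The paper does not spell out a proof at all (it simply refers to Harris, \cite[Proposition 1.1]{H}), and your project-and-lift argument is exactly the standard one intended there, so nothing further is needed.
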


As it is clear, previous result is key to understanding the relation between the causal structures of both, $V$ and $M$. From a global viewpoint, it is possible to characterize all the stages of the well known causal ladder on $M$ (see \cite{MS}) in terms of the global causal structure of $V$. We will summarize in the following result some of such characterizations, which proofs can be found on \cite[Propositions 1.2, 1.3 and 1.4]{H}.

\begin{thm}\label{thm:causalladder}
	Let $\pi:V\rightarrow M$ be a spacetime covering projection with group $G$. Then:
	
	\begin{itemize}
		\item[(CL1)] $M$ is non-totally vicious if, and only if, there exists $p,q\in V$ with $\pi(p)=\pi(q)$ and $p\not\ll q$.
		\item[(CL2)] $M$ is chronological if, and only if, for all $p,q\in V$ with $\pi(p)=\pi(q)$, $p\not\ll q$.
		
		\item[(CL3)] $M$ is causal if, and only if, for all $p,q\in V$ with $\pi(p)=\pi(q)$,  $p\not\leq q$.
		
		
		\item[(CL4)] $M$ is strongly causal if, and only if, for all $p\in V$ there is a fundamental neighbourhood system $\{U_n\}$ for $p$ such that for each $n$, no causal curve can have one endpoint in $U_n$ and another endpoint in a component $U'_n$ of $\pi^{-1}(\pi(U_n))$ unless $U'_n=U_n$ and the curve remains wholly within $U_n$. 
		
	
	
		
		\item[(CL5)] $M$ is globally hyperbolic if, and only if,
		\begin{itemize}
			\item[(CL5-1)] $V$ is globally hyperbolic,
			\item[(CL5-2)] every point $p\in V$ has a fundamental neighbourhood system as in (CL4) and
			\item[(CL5-3)] for any $p\in V$, for all $p\ll q$, $J^+(p)\cap \pi^{-1}(\pi(q))$ is finite. 
		\end{itemize}
	
	\end{itemize}
\end{thm}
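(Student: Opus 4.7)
The plan is to exploit Proposition \ref{prop:caracrel} systematically, translating chronological and causal relations in $M$ to relations in $V$ controlled by the $G$-action on fibers. Since $\pi$ is a local isometry and a covering, timelike and causal curves lift uniquely once an initial point is fixed, and descend to $M$, with the group $G$ governing the ambiguity of the lift. This two-way correspondence will be the sole engine for every equivalence in the statement.

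For (CL1)--(CL3), I would argue by direct applications and contrapositives of Proposition \ref{prop:caracrel}. For any $x = \pi(p)$, the relation $x \ll x$ (respectively $x \leq x$) translates to the existence of some $q \in \pi^{-1}(x)$ with $p \ll q$ (respectively $p \leq q$): project to obtain the closed relation in $M$, then lift it to recover a $q$ possibly differing from $p$ by a deck transformation. The chronological (resp. causal) property in $M$ is then equivalent to the non-existence of such a fiber pair; (CL1), (CL2), (CL3) appear as successive strengthenings (some versus every fiber pair, and chronological versus causal relation).

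For (CL4), I would combine the above with the standard fundamental-neighborhood characterization of strong causality: $M$ is strongly causal at $\pi(p)$ iff there exist arbitrarily small neighborhoods of $\pi(p)$ that no causal curve of $M$ exits and re-enters. Lifting through $\pi$ and using the deck-transformation decomposition $\pi^{-1}(\pi(U_n)) = \bigsqcup_{g \in G} g \cdot U_n$, a causal curve in $M$ that leaves and returns to $\pi(U_n)$ corresponds precisely to a causal curve of $V$ with one endpoint in $U_n$ and the other in some sheet $g \cdot U_n \neq U_n$. The two quantifiers match up directly to yield the stated characterization.

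Finally, (CL5) is the main obstacle and should be split into two directions. Assuming $M$ globally hyperbolic, (CL5-1) follows because compact causal diamonds in $M$ lift to closed (hence compact, after intersection with a diamond in $V$) subsets under the covering, (CL5-2) follows from strong causality of $M$ through (CL4), and (CL5-3) amounts to noting that a compact diamond in $M$ can only meet a fiber in finitely many points (fibers are discrete and properly discontinuous). For the converse, the critical step is to show that $J^+(x) \cap J^-(y) \subset M$ is compact by exhibiting its preimage in $V$ as a finite union of $G$-translates of compact pieces of the form $J^+(p) \cap J^-(g \cdot q) \subset V$, using (CL5-3) to bound the number of sheets involved and (CL5-1) plus (CL5-2) with (CL4) to ensure each piece is compact; the projection of a finite union of compacta is then compact. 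Making this decomposition precise, and controlling continuity under the $G$-action, is the principal technical difficulty.
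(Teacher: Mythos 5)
You should first note that the paper itself does not prove Theorem \ref{thm:causalladder}: it refers the reader to Harris's Propositions 1.2--1.4 in \cite{H}, and your plan is essentially an attempted reconstruction of those lifting arguments, so the overall strategy (Proposition \ref{prop:caracrel}, curve lifting, sheet decomposition) is the intended one; (CL2), (CL3), (CL4) and the converse half of (CL5) can indeed be completed along your lines, provided in the converse of (CL5) you reduce to chronologically related tips (condition (CL5-3) only applies when $p\ll q$) and invoke the standard lemma that a strongly causal spacetime whose diamonds are relatively compact is globally hyperbolic, to handle $x\le y$ with $x\not\ll y$. The genuine gaps are in the direct half of (CL5). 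For (CL5-1), ``compact diamonds in $M$ lift to closed subsets'' does not give compactness: $\pi^{-1}\bigl(J^+(x)\cap J^-(y)\bigr)$ is closed but non-compact whenever $G$ is infinite, and ``after intersection with a diamond in $V$'' is circular, since the compactness of that diamond is exactly what must be proved; a correct route is, e.g., to lift a Cauchy hypersurface $S\subset M$ and check that $\pi^{-1}(S)$ is Cauchy in $V$ (projections of inextendible timelike curves are inextendible because a tail of the lift stays in a single sheet over an evenly covered neighbourhood), or a limit-curve argument. For (CL5-3), finiteness of $J^+(p)\cap\pi^{-1}(\pi(q))$ is not ``a compact diamond meets a discrete fibre finitely often'': that set is not contained in any causal diamond of $V$ a priori (there is no common future endpoint for the points $g\,q\in J^+(p)$), and the fibre lives in $V$, not $M$. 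One genuinely needs an argument such as: if $g_n\,q\in J^+(p)$ for infinitely many distinct $g_n$, project the connecting causal curves into the compact set $J^+(x)\cap J^-(y)$, extract a causal limit curve from $x$ to $y$, and use uniform convergence together with evenly covered neighbourhoods (or proper discontinuity applied to a compact tube around the lifted limit curve) to conclude that the endpoints $g_n\,q$ are eventually constant. This interaction between compactness in $M$ and the covering is absent from your sketch.

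There is also a gap in (CL1). Your claim that it follows by ``direct applications and contrapositives'' of Proposition \ref{prop:caracrel} does not hold with the existential-pair quantification you adopt: take $V=\mathbb{L}^2$ and $G\cong\mathbb{Z}^2$ generated by a timelike and a spacelike translation, so that $M$ is a totally vicious Lorentzian torus, yet any spacelike deck translate yields $p,q$ in the same fibre with $p\not\ll q$. What Proposition \ref{prop:caracrel} actually gives is the nested-quantifier version: $M$ is non-totally vicious if and only if there exists $p\in V$ with $p\not\ll q$ for \emph{every} $q\in\pi^{-1}(\pi(p))$; and to pass between ``some point of $M$ is not on a closed timelike curve'' and ``$M$ is not totally vicious'' you also need the standard but non-formal fact that a connected spacetime is totally vicious if and only if $x\ll x$ for every $x$ (proved by showing each $I^+(x)$ is then open and closed). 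Your proposal neither makes this quantifier distinction nor supplies that step. Finally, a smaller imprecision in (CL4): a causal curve of $M$ that leaves and re-enters $\pi(U_n)$ may lift to a curve returning to the \emph{same} sheet $U_n$ after leaving it, which is exactly why the statement carries the clause ``unless $U'_n=U_n$ and the curve remains wholly within $U_n$''; your claimed ``precise correspondence'' with curves ending in a different sheet $g\,U_n\neq U_n$ misses this case, so the equivalence should be argued through causal convexity of $\pi(U_n)$ for connected, evenly covered $U_n$ rather than through other sheets alone.
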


Let us remark that in all previous cases, a global causal condition on $M$ (i.e., the assumption of a stage in the causal ladder) implies a stronger global condition on $V$. However at this point, it is not clear for us at what extent the same property follows for the rest of the causal ladder (particularly with {\em causally continuous} and {\em causally simple}), being necessary a detailed study on such cases. However, that study is out the scope of this paper.

\section{Partial Boundaries under the action of the Group}
\label{sec:Partial}
In this section, we will study the behaviour of the future causal completion under the action of an isometry group $G$, being the past case completely analogous. Let us begin with a point in the future completion of $V$, that is, an indecomposable set $\lev{P}{}=I^-(\{p_n\}_{n})$, where $\{p_n\}_{n}$ is a future-directed chronological sequence. As the group $G$ acts by isometries in $V$, the sequence $\{x_n\}_{n}$ with $\pi(p_n)=x_n$ is also future-directed and chronological (Prop. \ref{prop:caracrel}), hence, it defines the indecomposable set $P=I^-(\{x_n\}_{n})$ in $M$. Therefore, the projection $\pi$ extends naturally to the corresponding partial completions on the following way: 

\begin{equation}\label{eq:aux4} 
\begin{array}{lccc}
\hat{\pi}:&\hat{V}&\rightarrow& \hat{M}\\
 & \lev{P}{}=I^-(\{p_n\}_{n})&\rightarrow & P=I^-(\{x_n\}_{n}).
\end{array}
\end{equation}
We will say that an indecomposable set $\lev{P}{}\in \hat{V}$ is a lift of $P$ if $\hat{\pi}(\lev{P}{})=P$. 

Previous map is always surjective, as any future-directed chronological sequence $\{x_n\}_{n}$ in $M$ can be lifted to a future-directed chronological sequence $\{p_n\}_{n}$ in $V$ (by Prop. \ref{prop:caracrel}).
However, the map is not injective in general, as previous lift is not unique. For instance, if $\{p_n\}_{n}$ is a lift of $\{x_n\}_{n}$, $\{g\,p_n\}_{n}$ is also a lift of the same sequence. 
Even more, the pre-image of a terminal set $P$ can be easily characterized. Let us denote by $\lev{P}{}=I^-(\{p_n\}_{n})$, where $\{p_n\}_{n}$ denotes one fixed lift of $\{x_n\}_{n}$. It follows that
%

\[
\hat{\pi}^{-1}(P)=\cup_{g\in G}g\,\lev{P}{},
\]
i.e., the pre-image of $P$ is the union of what we are going to call the {\em  $G$-orbit of $\lev{P}{}$ in  $\hat{V}$}, which is the set $\{g\,\lev{P}{}\}_{g\in G}$. The left inclusion is straightforward, as $\pi(g\,\lev{P}{})=P$ for all $g\in G$. For the right one, take a point $x\in P$ and let $p\in V$ be a point such that $\pi(p)=x$. As $x\in P$, there exists $n$ big enough such that $x\ll x_n$. Hence, Prop. \ref{prop:caracrel} ensures that $p\in g\,p_n\subset g\,\lev{P}{}$.

\begin{conv}\label{conv} 
	{\em From this point, there are some useful conventions that we will use along the paper. For instance, the points on $M$ will be denoted by $x,y,z$, while the points on $V$ will be denoted by $p,q,r$. Moreover, unless stated otherwise, we will always assume that $\pi(p)=x, \pi(q)=y$ and $\pi(r)=z$.
	
	For any chronological sequence $\{x_n\}_{n}$ in $M$ (resp., an indecomposable set $P$), we will consider a fixed lift on $V$ denoted by $\{p_n\}_{n}$ (resp. $\lev{P}{}$). As an abuse of notation, we will use the same symbol $I^\pm$ for future/past of sets when there is no confusion if we are in $M$ or $V$.

Finally, and in order to compute both, partial and c-boundary, we will assume from this point that $M$ is strongly causal and so that $V$ satisfies the condition described on Theorem \ref{thm:causalladder} (CL4).
}
\end{conv} 

The projection $\hat{\pi}$ let us define an equivalence relation on $\hat{V}$: two indecomposable sets $\lev{P}{1},\lev{P}{2}\in {\hat{V}}$ are $\hat{G}$-related,  $\lev{P}{1}\sim_{\hat{G}} \lev{P}{2}$, if and only if both terminal sets projects onto the same $P\in \hat{M}$, i.e., $\hat{\pi}(\lev{P}{1}) = \hat{\pi}(\lev{P}{2})$. Of course, previous relation lead us to a bijection between the quotient space $\hat{V}/\hat{G}(\equiv \hat{V}/\sim_{\hat{G}})$ and $\hat{M}$. However, the following two observations are in order: On the one hand, one could expect naively that for any two terminal sets with $\lev{P}{1}\sim_{\hat{G}}\lev{P}{2}$, there exists $g\in G$ such that $\lev{P}{1}=g\,\lev{P}{2}$. This is not true in general, as it can be observed on Example \ref{ex:tame}. In fact, such a property motivates the following definition. 

\begin{defi}\label{rem:tameextension}
	We will say that a spacetime covering projection $\pi:V\rightarrow M$ is {\em future tame}  if given two terminal sets $\lev{P}{1},\lev{P}{2}$ with $\lev{P}{1}\sim_{\hat{G}}\lev{P}{2}$ there exists $g\in G$ such that $\lev{P}{1}=g\,\lev{P}{2}$.  
\end{defi}

On the other hand, the induced map is not well behaved at the topological level. In fact, Harris shows in the last example of  \cite{H} that $\hat{\pi}$ is not, in general, continuous (see also Example \ref{ex:exemHarris} for details).


\smallskip

The rest of this section is devoted to make a deep comparison between the topologies of $\hat{M}$ and $\hat{V}/\hat{G}$, where the latter has the induced quotient topology. 
Let us first fix some notation. As we have mention on Section \ref{sec:2.1}, $\hat{M}$ and $\hat{V}$ will be endowed with the future chronological topology, which is defined by a limit operator \eqref{def:futurelimit}. In order to differentiate both limits, we will denote by $\hat{L}_{M}$ the future chronological limit on $\hat{M}$ and, accordingly, $\hat{L}_{V}$ the limit on $\hat{V}$. 
The quotient topology on $\hat{V}/\hat{G}$ is also a sequential topology (see Section \ref{sec:prellimits}) and it is defined from a limit operator \eqref{deflimitQ} which will be denoted here by $\hat{L}_{\hat{G}}$. Finally, recall that the map $\hat{\pi}$ induces a bijective map $\hat{\j}$ between $\hat{V}/\hat{G}$ and $\hat{M}$ which makes the following diagram commutative:

\begin{center} 
\begin{tikzpicture}
\matrix (m) [matrix of math nodes,row sep=3em,column sep=4em,minimum width=2em]
{
	\hat{V} & \\
	\hat{V}/\hat{G} & \hat{M} \\};
\path[-stealth]
(m-1-1) edge node [left] {$\hat{\i}$} (m-2-1)
(m-2-1.east|-m-2-2) edge node [below] {$\hat{\j}$} (m-2-2)
(m-1-1) edge node [above] {$\hat{\pi}$} (m-2-2);
\end{tikzpicture}
\end{center}
where $\hat{\i}:\hat{V}\rightarrow \hat{V}/\hat{G}$ is the usual projection.

\smallskip 

Previous $\hat{\jmath}$ map is always open. In order to prove this, we require the following technical lemma.

\begin{lemma}
	Consider a sequence $\sigma=\{P_n\}_{n}\subset \hat{M}$ and a point $P\in \hat{M}$ such that $P\subset {\rm LI}(\{P_n\}_{n})$. For $\lev{P}{}$ a fixed lift of $P$, there exist lifts $\lev{P}{n}$ of $P_n$ such that $\lev{P}{}\subset {\rm LI}(\{\lev{P}{n}\}_{n})$. 
\end{lemma}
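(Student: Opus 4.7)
The strategy is to construct the lifts $\lev{P}{n}$ by anchoring each one to a carefully chosen point in a cofinal sequence generating $\lev{P}{}$, and then to exploit the past--set structure to propagate the inclusion downward to the whole of $\lev{P}{}$.

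First I would fix a future--directed chronological sequence $\{q_k\}_{k}\subset V$ with $\lev{P}{}=I^{-}(\{q_k\}_{k})$. Setting $x_k:=\pi(q_k)$, Proposition \ref{prop:caracrel} guarantees that $\{x_k\}_{k}$ is chronological in $M$ and, by the very definition of $\hat{\pi}$ in \eqref{eq:aux4}, that $P=\hat{\pi}(\lev{P}{})=I^{-}(\{x_k\}_{k})$. Because $P\subset {\rm LI}(\{P_n\}_{n})$, each $x_k$ lies in $P_n$ for all $n$ past some threshold, so the integer $K_n:=\sup\{k\in\N\,:\,x_k\in P_n\}$ (possibly $+\infty$) satisfies $K_n\to\infty$. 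Choose an index $k(n)$ with $k(n)\leq K_n$ and $k(n)\to\infty$ (e.g.\ $k(n)=\min(n,K_n)$); the finitely many values of $n$ for which $K_n=0$ can be discarded, since only the eventual behaviour of the sequence matters for the inferior limit.

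Next, for each $n$ with $K_n\geq 1$, I would construct $\lev{P}{n}$ as follows. Fix any lift $\tilde{P}_n^{0}\in\hat{V}$ of $P_n$; since $x_{k(n)}\in P_n$ and $P_n$ is a past set, there exists $y\in P_n$ with $x_{k(n)}\ll y$, and we may pick a lift $\tilde{y}\in \tilde{P}_n^{0}$ of $y$ (using $\pi(\tilde{P}_n^{0})=P_n$, which follows from Proposition \ref{prop:caracrel}). Applying the second assertion of Proposition \ref{prop:caracrel} to $x_{k(n)}\ll y$ with lifts $q_{k(n)}$ and $\tilde{y}$ yields $g_n\in G$ with $q_{k(n)}\ll g_n\,\tilde{y}$. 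Setting $\lev{P}{n}:=g_n\,\tilde{P}_n^{0}$, clearly a lift of $P_n$, the past--set property forces $q_{k(n)}\in \lev{P}{n}$.

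Finally, I would verify the desired inclusion. For any fixed $j$, once $n$ is large enough that $k(n)\geq j$ we have $q_j\ll q_{k(n)}\in \lev{P}{n}$, and since $\lev{P}{n}$ is a past set this gives $q_j\in \lev{P}{n}$; hence $q_j\in {\rm LI}(\{\lev{P}{n}\}_{n})$ for every $j$. Given an arbitrary $p\in \lev{P}{}=I^{-}(\{q_k\})$, we have $p\ll q_j$ for some $j$, and the same past--set argument yields $p\in \lev{P}{n}$ eventually, so $\lev{P}{}\subset {\rm LI}(\{\lev{P}{n}\}_{n})$. The main conceptual obstacle is ensuring that a \emph{single} lift of $P_n$ catches all of the required $q_j$ simultaneously---different lifts a priori carry different points in the fiber above a given $y\in P_n$. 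This is resolved by anchoring the lift at the highest relevant point $q_{k(n)}$ and letting the past--set property of $\lev{P}{n}$ automatically drag every $q_j$ below $q_{k(n)}$ into $\lev{P}{n}$.
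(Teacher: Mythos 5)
Your proof is correct and follows essentially the same strategy as the paper's: for each $n$ you translate a fixed lift of $P_n$ by a group element chosen so that the resulting lift contains a chain point $q_{k(n)}$ with $k(n)\to\infty$, and the past-set property then drags every lower chain point (hence all of $\lev{P}{}$) into $\lev{P}{n}$ for $n$ large. The paper organizes exactly this idea through the nested families $G(n,m)$ of admissible group elements, but the underlying mechanism is identical.
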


\begin{proof}
	Let us begin by taking $\{\lev{P}{n}\}_{n}$ some fixed lifts of $\{P_n\}_{n}$. Denote also by $\{p_n\}_{n}$ and $\{x_n\}_{n}$ future chronological chains 
	defining $\lev{P}{}$ and $P$ resp. and satisfying that $\pi(p_n)=x_n$ (as stated in Convention \ref{conv}). As $P\in {\rm LI}(\{P_n\}_{n})$, for any element $x_n$ there exists $m_n\in\mathbb{N}$ (that we can consider strictly increasing on $n$) such that, for all $m\geq m_n$, $x_n\in P_m$. In particular, and due to Prop. \ref{prop:caracrel}, we can ensure the existence of $g\in G$ in such a way that $p_n\in g\, \lev{P}{m}$. Then, for $m\geq m_n$, let us denote by $G(n,m)\subset G$ the non-empty subset defined in the following way:
	
	\begin{equation}\label{eq:Gnm}
	G(n,m):=\{g\in G: p_n\in g\,\lev{P}{m}\}
	\end{equation}
	
	Let us make a straightforward (but necessary) observation about previous sets. As $p_n\ll p_{n+1}$, for $m\geq m_{n+1}(\geq m_n+1)$,
	
	\begin{equation}\label{eq:orderGnm}
	G(n+1,m)\subset G(n,m).
	\end{equation}   
	
	Now, for each $m_{n}\leq m < m_{n+1}$, let us consider a group element  $g_m\in G(n,m)$ and consider the sequence $\{g_{m}\,\lev{P}{m}\}_{m}$ (for $m<m_1$, just consider $g_m=e$, the identity). Now, let us show that previous sequence is the desired, that is, $\lev{P}{}\subset {\rm LI}(\{g_m\,\lev{P}{m}\}_{m})$. In fact, for any $n\in\mathbb{N}$, consider $m\geq m_{n}$ and denote by $k\in \mathbb{N} \cup \{0\}$ the natural ensuring that $m_{n+k+1} > m\geq m_{n+k}$. Then, from the choice of $\{g_m\}_{m}$ and (\ref{eq:orderGnm}), we have that:
	
	$$g_{m}\in G(n+k,m)\subset G(n+k-1,m)\subset\dots \subset G(n,m).$$
	In conclusion, from (\ref{eq:Gnm}) we deduce that $p_n\in g_{m}\,\lev{P}{m}$ for all $m\geq m_n$, and the result follows.  
\end{proof}

\begin{prop}\label{thm:openhatj}
Let $\pi:V\rightarrow M$ be a spacetime covering projection and $\hat{\pi}:\hat{V}\rightarrow \hat{M}$ the extended map on the corresponding partial completions. The induced map $\hat{\j}:\hat{V}/\hat{G}\rightarrow \hat{M}$ is open.
\end{prop}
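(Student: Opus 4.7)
The plan is to exploit the quotient structure to pass from openness of $\hat{\j}$ to an openness property of $\hat{\pi}$. Since $\hat{\i}$ is a surjective quotient map with $\hat{\j}\circ\hat{\i}=\hat{\pi}$, any open $U\subseteq\hat{V}/\hat{G}$ satisfies $\hat{\j}(U)=\hat{\pi}(\hat{\i}^{-1}(U))$, and $\hat{\i}^{-1}(U)$ runs precisely over the $\hat{G}$-saturated open subsets of $\hat{V}$. Thus it suffices to show that for every $\hat{G}$-saturated open $W\subseteq \hat{V}$, the complement $\hat{M}\setminus\hat{\pi}(W)$ is sequentially closed for $\hat{L}_{M}$.

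Arguing by contradiction, suppose $\{P_n\}_{n}\subset\hat{M}\setminus\hat{\pi}(W)$ and $P\in\hat{L}_{M}(\{P_n\}_{n})\cap\hat{\pi}(W)$, and fix a lift $\lev{P}{}\in W$ of $P$. The preceding lemma supplies lifts $\lev{P}{n}$ of $P_n$ with $\lev{P}{}\subset{\rm LI}(\{\lev{P}{n}\}_{n})$. Because $W$ is $\hat{G}$-saturated and $P_n\notin\hat{\pi}(W)$, every lift of $P_n$ lies in the closed set $\hat{V}\setminus W$; in particular $\{\lev{P}{n}\}_{n}\subset\hat{V}\setminus W$.

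The central technical step is to produce an element of $\hat{L}_{V}$ that contains $\lev{P}{}$. I would first apply Zorn's lemma to the family of IPs $Q$ with $\lev{P}{}\subseteq Q\subseteq{\rm LS}(\{\lev{P}{n}\}_{n})$, using the standard fact that the union of a chain of IPs contained in a common past set is again indecomposable, to obtain a maximal IP $\lev{R}{}\supseteq\lev{P}{}$ inside ${\rm LS}(\{\lev{P}{n}\}_{n})$. To secure also the LI condition of definition \eqref{def:futurelimit}, I would extract, by a diagonal argument over a countable dense subset of $\lev{R}{}$ and using past-closure of each $\lev{P}{n_k}$, a subsequence $\{\lev{P}{n_k}\}_{k}$ with $\lev{R}{}\subseteq{\rm LI}(\{\lev{P}{n_k}\}_{k})$. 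Maximality of $\lev{R}{}$ in the (possibly smaller) ${\rm LS}(\{\lev{P}{n_k}\}_{k})$ is automatic from maximality in the original ${\rm LS}$, so $\lev{R}{}\in\hat{L}_{V}(\{\lev{P}{n_k}\}_{k})$.

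The conclusion is then immediate: $\hat{V}\setminus W$ is closed and contains the subsequence, hence $\lev{R}{}\in\hat{V}\setminus W$; projecting, $P\subseteq\hat{\pi}(\lev{R}{})\subseteq{\rm LS}(\{P_n\}_{n})$ with $\hat{\pi}(\lev{R}{})$ an IP, so the maximality of $P$ in ${\rm LS}$ built into $P\in\hat{L}_{M}(\{P_n\}_{n})$ forces $\hat{\pi}(\lev{R}{})=P$; thus $\lev{R}{}$ is another lift of $P$, and $\hat{G}$-saturation of $W$ yields $\lev{R}{}\in W$, contradicting $\lev{R}{}\in\hat{V}\setminus W$. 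The main obstacle is precisely the construction of $\lev{R}{}$ in the previous paragraph: Zorn alone only delivers maximality in ${\rm LS}$, and reconciling this with the LI requirement via the diagonal subsequence extraction is the single technical core of the argument.
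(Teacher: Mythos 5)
Your proof is correct and follows essentially the paper's own argument: the same lifting lemma for the inferior limit, the same construction of a maximal IP in ${\rm LS}(\{\lev{P}{n}\}_{n})$ containing $\lev{P}{}$ (your $\lev{R}{}$ is the paper's $\lev{P'}{}$), the same subsequence extraction to place it in $\hat{L}_{V}$, and the same use of the maximality of $P$ in $\hat{L}_{M}(\{P_n\}_{n})$ to conclude that $\hat{\pi}(\lev{R}{})=P$. The only difference is packaging: you check openness directly on $\hat{G}$-saturated open sets and the closedness of their complements, while the paper runs the identical computation as continuity of $\hat{\jmath}^{-1}$ via Prop.~\ref{prop:limcont}, and you additionally spell out (via Zorn and the diagonal/chronological-chain argument) two steps the paper merely asserts.
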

\begin{proof}
Let us prove that the map $\hat{\jmath}^{-1}$ is continuous by using Prop. \ref{prop:limcont}. For this, consider a sequence $\sigma=\{P_n\}_{n} \subset \hat{M}$ and a point $P\in \hat{L}_{M}(\sigma)$, and let us show that $\hat{\jmath}^{-1}(P)\in \hat{L}_{\hat{G}}(\hat{\jmath}^{-1}(\kappa))$ for some subsequence $\kappa\subset \sigma$. Recall that, from the definitions of $\hat{L}_{\hat{G}}$ and $\hat{\jmath}^{-1}$, this is the same that show the existence of lifts $\lev{P}{n}$ and $\lev{P}{}$ of $P_n$ and $P$ resp. such that $\lev{P}{}\in \hat{L}_V(\{\lev{P}{n_k}\}_{k})$ for some subsequence $\{\lev{P}{n_k}\}_{k} \subset \{\lev{P}{n}\}_{n}$.   

First observe that, by using previous lemma, we can find lifts $\lev{P}{n}$ and $\lev{P}{}$ of $P_n$ and $P$ resp. such that $\lev{P}{}\subset {\rm LI}(\{\lev{P}{n}\}_{n})$. If $\lev{P}{}$ is maximal in ${\rm LS}(\{\lev{P}{n}\}_{n})$, then we have that $\lev{P}{}\in \hat{L}_{V}(\{\lev{P}{n}\}_{n})$, and we are done. 

Otherwise, take $\lev{P'}{}$ a maximal set in ${\rm LS}(\{\lev{P}{n}\}_{n})$ containing $\lev{P}{}$. By definition of the superior limit, we can consider a subsequence $\{\lev{P}{n_k}\}_{k}$ in such a way that $\lev{P'}{}$ is both, contained in ${\rm LI}(\{\lev{P}{n_k}\}_{k})$ and maximal in ${\rm LS}(\{\lev{P}{n_k}\}_{k})$, i.e., $\lev{P'}{}\in \hat{L}_V(\{\lev{P}{n_k}\}_{k})$. Now observe that the sets $P'=\hat{\pi}(\lev{P'}{})$ and $P_{n_k}=\hat{\pi}(\lev{P}{n_k})$ satisfy the following chain ($\hat{\pi}$ preserves contentions)

\[
P\subset P'\subset {\rm LI}(\{P_{n_k}\}_{k}).
\]
But as $P\in \hat{L}_M(\{P_{n_k}\}_{k})$, it follows that $P=P'$ (recall the maximal character on \eqref{def:futurelimit}) and so that $\lev{P'}{}$ is also a lift of $P$.

In both cases, and up to a subsequence, we show the existence of lifts $\{\lev{P}{n}\}_{n}$ and $\lev{P}{}$ with $\lev{P}{}\in \hat{L}_V(\{\lev{P}{n}\}_{n})$, and then the continuity of $\hat{\jmath}^{-1}$ follows from Prop. \ref{prop:limcont}.

\end{proof}
\begin{rem}\label{rem:aux1}
	Previous proof shows in particular that for all $P\in \hat{L}_M(\{P_n\}_{n})$, there exist lifts $\lev{P}{}$ and $\lev{P}{n}$ of $P$ and $P_n$ resp. with $\lev{P}{}\in \hat{L}_{V}(\{\lev{P}{n_k}\}_{k})$ for some subsequence $\{\lev{P}{n_k}\}_{k}$ of $\{\lev{P}{n}\}_{n}$.
\end{rem}

\smallskip

As we have already pointed out, the map $\hat{\j}$ is not continuous in general. If we look into the details of Example \ref{ex:exemHarris}, we see that the non-continuity is related with the following situation: There exists a (non-necessarily chronological) sequence $\{P_n\}_{n} \subset \hat{M}$ admitting two different lifts such that (i) both lifted sequences are convergent and (ii) the projection of one limit point contains strictly the other. As we will see, such a situation represent, essentially, the only cases where continuity of $\hat{\pi}$ can fail, so it is convenient to give a proper name for it:

\begin{defi}\label{def:divlif}
	Let $\pi:V\rightarrow M$ a spacetime covering projection and $\hat{V},\hat{M}$ the corresponding future causal completions of $V$ and $M$. We will say that a sequence $\sigma=\{P_n\}_{n} \subset \hat{M}$ has {\em future divergent lifts} if there exist two lifts $\{\lev{P}{n}\}_{n},\{\lev{P'}{n}\}_{n} \subset \hat{V}$  of $\sigma$  and two points $\lev{P}{},\lev{P'}{}\in \hat{V}$ such that:
	\begin{itemize}
		\item[(i)] $\lev{P}{}\in \hat{L}_V(\{\lev{P}{n}\}_{n})$ and $\lev{P'}{}\in \hat{L}_V(\{\lev{P'}{n}\}_{n})$.
		
		\item[(ii)] $\hat{\pi}(\lev{P}{})\subsetneq \hat{\pi}(\lev{P'}{})$.
		
	\end{itemize}
	
	If there exists no such a sequence on $\hat{M}$, we will just say that $M$ does not admit future sequences with divergent lifts.
\end{defi}

As a side remark, observe that the concept of divergent lifts is quite related with the topological structure of the $G$-orbits in $\hat{V}$. In fact, we can prove the following result:

\begin{prop}\label{prop:new}
  If the future $G$-orbits of $\hat{V}$ are closed, then $M$ does not admit {\em constant} sequences with divergent lifts. The converse is also true if we assume additionally that $\pi$ is future tame.
\end{prop}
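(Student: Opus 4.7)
The plan is to prove both implications by exploiting the interplay between the sequential character of $\hat{\tau}_{chr}$, the description $\hat{\pi}^{-1}(P) = \{g\lev{P}{0}\}_{g\in G}$ of the preimage of $P$ (established earlier in the section), and the fact that the trivial constant lift of $P$ automatically lies in the chronological limit.

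\emph{Forward direction.} Assume each future $G$-orbit is closed in $(\hat{V},\hat{\tau}_{chr})$, and suppose for contradiction that a constant sequence $\{P\}_n\subset\hat{M}$ admits divergent lifts, i.e., lifted sequences $\{\lev{P}{n}\}_n,\{\lev{P'}{n}\}_n$ and points $\lev{P}{},\lev{P'}{}\in\hat{V}$ satisfying (i) $\lev{P}{}\in\hat{L}_V(\{\lev{P}{n}\}_n)$ and $\lev{P'}{}\in\hat{L}_V(\{\lev{P'}{n}\}_n)$, and (ii) $\hat{\pi}(\lev{P}{})\subsetneq\hat{\pi}(\lev{P'}{})$. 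Fixing a lift $\lev{P}{0}$ of $P$, every lift of $P$ has the form $g\lev{P}{0}$, so both sequences lie entirely in the orbit $O:=\{g\lev{P}{0}\}_{g\in G}$. By \eqref{eq:1} the conclusions (i) yield convergence in $\hat{\tau}_{chr}$, and since $O$ is closed, the limits $\lev{P}{}$ and $\lev{P'}{}$ belong to $O$. Hence $\hat{\pi}(\lev{P}{})=P=\hat{\pi}(\lev{P'}{})$, contradicting (ii).

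\emph{Converse direction.} Assume $M$ has no constant sequence with divergent lifts and $\pi$ is future tame. Fix any $\lev{P}{0}\in\hat{V}$, set $P=\hat{\pi}(\lev{P}{0})$, and let $O=\{g\lev{P}{0}\}_{g\in G}$. Since $\hat{\tau}_{chr}$ is sequential, it is enough to show $\hat{L}_V(\sigma)\subset O$ for every sequence $\sigma=\{g_n\lev{P}{0}\}_n\subset O$. Take $\lev{Q}{}\in\hat{L}_V(\sigma)$. The sequence $\sigma$ is a lift of the constant sequence $\{P\}_n\subset\hat{M}$, and the trivial constant sequence $\{\lev{P}{0}\}_n$ is another lift of $\{P\}_n$ for which $\lev{P}{0}\in\hat{L}_V(\{\lev{P}{0}\}_n)$ (this is immediate from \eqref{def:futurelimit}, since ${\rm LI}={\rm LS}=\lev{P}{0}$ and no IP strictly contains $\lev{P}{0}$ while being contained in it). From $\lev{Q}{}\subset{\rm LI}(\sigma)$ and $\pi(g_n\lev{P}{0})=P$ for all $n$, one checks that $\hat{\pi}(\lev{Q}{})\subset P$. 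Therefore $\hat{\pi}(\lev{Q}{})$ is a (possibly proper) IP subset of $P=\hat{\pi}(\lev{P}{0})$.

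If $\hat{\pi}(\lev{Q}{})\subsetneq P$, then the two lifts $\sigma$ and $\{\lev{P}{0}\}_n$ of $\{P\}_n$, together with the limits $\lev{Q}{}$ and $\lev{P}{0}$, realize divergent lifts for the constant sequence $\{P\}_n$, contradicting our hypothesis. Hence $\hat{\pi}(\lev{Q}{})=P$, and future tameness of $\pi$ (Definition \ref{rem:tameextension}) forces $\lev{Q}{}=g\lev{P}{0}$ for some $g\in G$, so $\lev{Q}{}\in O$. The main subtlety I anticipate is carefully justifying $\hat{\pi}(\lev{Q}{})\subset P$ (using that $\hat{\pi}$ preserves inclusions and that the common past of a constant sequence equals the set itself); once this is granted, the hypothesis of no constant divergent lifts and the tameness condition combine cleanly to force $\lev{Q}{}\in O$.
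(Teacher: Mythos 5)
Your proof is correct and follows essentially the same route as the paper's: in the forward direction you use closedness of the orbit (together with the identification of the lifts of $P$ with $\{g\,\lev{P}{0}\}_{g\in G}$, exactly as the paper does) to force any limit of a lifted constant sequence to project back onto $P$, so condition (ii) of Definition \ref{def:divlif} cannot hold; in the converse you show $\hat{\pi}(\lev{Q}{})\subset P$, rule out strict inclusion by the no-constant-divergent-lifts hypothesis applied against the trivially convergent constant lift, and then invoke future tameness to place $\lev{Q}{}$ in the orbit. The extra details you supply (that the constant lift converges to itself and that $\hat{\pi}(\lev{Q}{})\subset P$) are precisely the steps the paper leaves implicit, so no further changes are needed.
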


\begin{proof}
  For the first assertion let $P\in \hat{M}$ and $\lev{P}{}\in \hat{V}$ with $\hat{\pi}(\lev{P}{})=P$. Observe that, if we have a sequence $\{g_n\}_n\subset G$ and $\lev{P'}{}$ such that $\lev{P'}{}\in \hat{L}_{V}(\{g_n\,\lev{P}{}\}_n)$, the closedness of the $G$-orbit ensures that $\lev{P'}{}=g_0\,\lev{P}{}$ for some $g_0\in G$. Therefore,  $\{P\}$ admits no divergent lifts as condition (ii) in Def. \ref{def:divlif} cannot be fullfilled.

  For the second assertion, assume that $M$ admits no constant sequence with divergent lifts and that $\pi$ is future tame; and let us prove that the $G$-orbits in $\hat{V}$ are closed. Let $P,\lev{P}{},\lev{P'}{}$ and $\{g_n\}_n$ as in previous implication. As $M$ admits no constant sequence with divergent lifts, then necessarily it follows that $\hat{\pi}(\lev{P'}{})=P$. Moreover, as $\pi$ is future tame, then there exists $g_0\in G$ such that $\lev{P'}{}=g_0\, \lev{P}{}$, and so, $\lev{P'}{}$ belongs to the $G$-orbit $\{g\, \lev{P}{}\}_{g\in G}$ and the $G$-orbit is closed. 
 
\end{proof}
The optimality of previous result follows from Example \ref{ex:tame} where it is shown a case where $M$ admits no constant sequence with divergent lifts but the $G$-orbits are not closed.

\smallskip

Our main result concerning the continuity of the partial completions is the following:

\begin{prop}\label{thm:hatjcont}
Let $\{\lev{P}{n}\}_{n}$ be a sequence whose projection $\{P_n\}_{n}$ does not admit divergent lifts. Then,
\[
\lev{P}{}\in \hat{L}_V(\{\lev{P}{n}\}_{n}) \Rightarrow P\in \hat{L}_M(\{P_n\}_{n}).
\]
In particular, 
if $M$ does not admit sequences with future divergent lifts, 
 the map $\hat{\pi}$ is continuous. If, additionally, the future chronological limit $\hat{L}_{M}$ on $M$ is of first order UTS, then the continuity of $\hat{\pi}$ ensures the non-existence of sequences with divergent lifts.
\end{prop}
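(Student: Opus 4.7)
The plan is to establish the set-theoretic implication $\lev{P}{}\in\hat{L}_V(\{\lev{P}{n}\}_n)\Rightarrow P\in\hat{L}_M(\{P_n\}_n)$ first, from which both the continuity and its partial converse will follow. Verifying $P\in\hat{L}_M(\{P_n\}_n)$ amounts to checking the two conditions of \eqref{def:futurelimit}. The inclusion $P\subset\mathrm{LI}(\{P_n\}_n)$ is the easy half: a direct computation using Prop.~\ref{prop:caracrel} on lifts of a chronological chain generating $P$ yields $\pi(\lev{P}{})=P$, and then any $q\in\lev{P}{}$ belongs to $\lev{P}{n}$ for all large $n$ by hypothesis, so $x=\pi(q)\in P_n$ for all large $n$.

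The maximality of $P$ in $\mathrm{LS}(\{P_n\}_n)$ is the core of the argument, and I would argue by contradiction. Assuming some IP satisfies $P\subsetneq P'\subset\mathrm{LS}(\{P_n\}_n)$, a nested-indices argument applied to a future chain $\{x_k'\}_k$ generating $P'$ (the sets $A_k=\{n:x_k'\in P_n\}$ are all infinite and decreasing, so a diagonal choice $n_l\in A_l$ with $n_l>n_{l-1}$ gives $x_k'\in P_{n_l}$ for all $l\geq k$) produces a subsequence with $P'\subset\mathrm{LI}(\{P_{n_l}\}_l)$. The lemma preceding Prop.~\ref{thm:openhatj} then provides lifts $\{\lev{Q}{n_l}\}_l$ of $\{P_{n_l}\}_l$ and a lift $\lev{Q}{}$ of $P'$ with $\lev{Q}{}\subset\mathrm{LI}(\{\lev{Q}{n_l}\}_l)$; enlarging $\lev{Q}{}$ to a maximal IP $\lev{Q'}{}$ in $\mathrm{LS}(\{\lev{Q}{n_l}\}_l)$ and passing to a further sub-subsequence yields $\lev{Q'}{}\in\hat{L}_V(\{\lev{Q}{n_{l_m}}\}_m)$ with $\hat{\pi}(\lev{Q'}{})\supsetneq P$. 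Since by coherence of $\hat{L}_V$ the restricted original lift $\{\lev{P}{n_{l_m}}\}_m$ still has $\lev{P}{}$ in its chronological limit, this exhibits two lifts with strictly divergent projections, contradicting the no-divergent-lifts hypothesis.

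Continuity of $\hat{\pi}$ under the global no-divergent-lifts hypothesis then follows immediately from the implication together with Prop.~\ref{prop:limcont} (taking the trivial subsequence). For the converse under first order UTS I would argue contrapositively: given divergent lifts of $\{P_n\}_n$ with limits $\lev{P}{},\lev{P'}{}$ projecting to $P\subsetneq P'$, continuity of $\hat{\pi}$ forces both $P$ and $P'$ to be topological limits of $\{P_n\}_n$ in $\tau_{\hat{L}_M}$; the first-order-UTS assumption then provides a subsequence for which $\hat{L}_M$ agrees with topological convergence, placing both $P$ and $P'$ in $\hat{L}_M$ of that subsequence, which contradicts the maximality clause of \eqref{def:futurelimit} applied to $P$. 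The main obstacle is the subsequence mismatch in the maximality step: the constructed divergent lift naturally lives over a sub-subsequence while Def.~\ref{def:divlif} is phrased for the full sequence. I would address this by reading the hypothesis in the natural subsequence-stable sense dictated by coherence of $\hat{L}_V$, equivalently by applying the hypothesis to the relevant sub-subsequence, which is itself a sequence in $\hat{M}$ that inherits the absence of divergent lifts from the full-sequence assumption.
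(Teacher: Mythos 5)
Your proposal is correct and follows essentially the same route as the paper's proof: the inclusion $P\subset{\rm LI}(\{P_n\}_{n})$ via projection of containments, maximality by contradiction producing divergent lifts (you re-derive in line what the paper simply invokes as Remark~\ref{rem:aux1}), continuity via Prop.~\ref{prop:limcont}, and the first-order-UTS converse by the same clash with the maximality clause of \eqref{def:futurelimit}. The subsequence mismatch you flag is present in the paper's own argument as well, which likewise applies the no-divergent-lifts hypothesis to the extracted subsequence $\{P_{n_k}\}_{k}$ and is harmless in the ``in particular'' statement where the hypothesis is imposed on all sequences of $\hat{M}$.
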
 

\begin{proof} 

Let $\lev{\sigma}{}=\{\lev{P}{n}\}_{n}$ be a sequence as in the first statement of the proposition, and consider $\lev{P}{}\in \hat{L}_V(\lev{\sigma}{})$. By recalling that $\hat{\pi}$ preserves contentions, we deduce that $P\subset {\rm LI}(\{P_n\}_{n})$. If $P$ is maximal among the IPs in ${\rm LS}(\{P_n\}_{n})$, then $P\in \hat{L}_M(\{P_n\}_{n})$ and we are done. 

So, let us assume by contradiction that $P$ is not maximal on the ${\rm LS}(\{P_n\}_{n})$. Consider $P'$ a maximal set on ${\rm LS}(\{P_n\}_{n})$  containing strictly $P$. From the definition of the superior limit, and up to a subsequence, we can assume that $P' \subset {\rm LI}(\{P_n\}_{n})$, and so, that $P'\in \hat{L}_{M}(\{P_n\}_{n})$. Now, recalling Remark \ref{rem:aux1}, we ensure that $P_n$ and $P'$ admit lifts $\lev{P'}{n}$ and $\lev{P'}{}$ such that $\lev{P'}{}\in \hat{L}_{V}(\{\lev{P'}{n_k}\}_{k})$. Summarizing, the sequence $\{P_{n_k}\}_{k}$ admits two lifts $\{\lev{P}{n_k}\}_{k}$ and $\{\lev{P'}{n_k}\}_{k}$ converging to $\lev{P}{}$ and $\lev{P'}{}$ resp., where $P=\hat{\pi}(\lev{P}{})\subsetneq \hat{\pi}(\lev{P'}{})=P'$. That is to say, $\{P_{n_k}\}_{k}$ admits future divergent lifts, a contradiction.  
In conclusion, $P\in \hat{L}_{M}(\{P_n\}_{n})$. Moreover, if $M$ does not admit sequences with divergent lifts, $\hat{\pi}$ is continuous (recall Prop. \ref{prop:limcont}).
\smallskip 

For the final assertion, assume that $\hat{L}_{M}$ is of first order UTS and that there exists a sequence $\sigma=\{P_n\}_{n}\subset M$ with divergent lifts. Let $\{\lev{P}{n}\}_{n}, \{\lev{P'}{n}\}_{n}$ be two sequences in $\hat{V}$ and $\lev{P}{},\lev{P'}{}$ two terminal sets  as in Def. \ref{def:divlif}. Assume by contradiction that $\hat{\pi}$ is continuous. In particular, we have that $\{P_n\}_{n}$ (the projection by $\hat{\pi}$ of both sequences $\{\lev{P}{n}\}_{n}$ and $\{\lev{P'}{n}\}_{n}$) converges to $P$ and $P'$. As $\hat{L}_{M}$ is of first order UTS, we can assume that (up to a subsequence) $\{P_n\}_{n}$ is of first order, and so, that $P,P'\in \hat{L}_{M}(\{P_n\})_{n}$. But this is a contradiction with the definition of $\hat{L}_{M}$ \eqref{def:futurelimit} (concretely the maximal character of the limit points) and the fact that $P\subsetneq P'$ (Def. \ref{def:divlif} (ii)). Therefore, the map $\hat{\pi}$ cannot be continuous. 
\end{proof} 

There are several ways to prove the non-existence of sequences with divergent lifts. For instance, we can impose conditions on the causality of the boundary (re-obtaining \cite[Theorem 3.4]{H})



\begin{cor}\label{cor:Harrisresult}
If $M$ has only spatial future boundary points, then $\hat{\pi}$ is continuous, and so, $\hat{\jmath}$ is an homeomorphism between $\hat{M}$ and $\hat{V}/\hat{G}$.
\end{cor}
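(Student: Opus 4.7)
The plan is to deduce the corollary from the two preceding results. Openness of $\hat\jmath$ is already provided by Proposition \ref{thm:openhatj}, and $\hat\jmath$ is bijective by construction, so only continuity needs to be established. By Proposition \ref{thm:hatjcont}, continuity of $\hat\pi$ (equivalently of $\hat\jmath$) follows once $M$ is shown to admit no sequences with future divergent lifts. The entire task thus reduces to proving the implication
$$
\text{every future boundary point of } M \text{ is spatial} \;\Longrightarrow\; M \text{ has no sequences with future divergent lifts.}
$$

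To establish this implication I would argue by contradiction. Assuming $\{P_n\}_n\subset\hat M$ admits divergent lifts $\{\lev{P}{n}\}_n,\{\lev{P'}{n}\}_n\subset\hat V$ with $\hat L_V$-limits $\lev{P}{},\lev{P'}{}$, the projections $P:=\hat\pi(\lev{P}{})$ and $P':=\hat\pi(\lev{P'}{})$ satisfy $P\subsetneq P'$. Following the internal argument of Proposition \ref{thm:hatjcont} (passing to a subsequence so that the chosen maximal element in $\mathrm{LS}(\{P_n\}_n)$ lies in $\hat L_M(\{P_n\}_n)$), I would arrange that $P'\in\hat L_M(\{P_n\}_n)$, and similarly that $P\subseteq\mathrm{LI}(\{P_n\}_n)$. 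Now invoke the spatial assumption: every TIP of $M$ is, in particular, maximal under containment among IPs of $M$, and is not strictly contained in any other IP. Consequently, whenever at least one of $P,P'$ is a TIP, the strict inclusion $P\subsetneq P'$ produces an immediate contradiction with spatiality.

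The main obstacle is the case where both $P=I^-(x)$ and $P'=I^-(y)$ are PIPs of interior points, since spatiality only directly constrains boundary points. This case must be ruled out by revisiting the divergent-lift data: since $\lev{P}{}$ and $\lev{P'}{}$ arise as $\hat L_V$-limits of lifts of the same sequence and $\hat\pi$ preserves containments, the strict inclusion $P\subsetneq P'$ between two PIPs of $M$ would propagate back to a compatibility between $\lev{P}{}$ and $\lev{P'}{}$ that, once acted on by a suitable $g\in G$, forces the two lifts to agree on a subsequence and thereby contradicts the strictly different character of the limits in Definition \ref{def:divlif}. Once this PIP/PIP case is excluded, at least one of $P,P'$ must be a TIP, and spatiality yields the required contradiction. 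Combining the resulting non-existence of divergent lifts with Propositions \ref{thm:openhatj} and \ref{thm:hatjcont} then produces the homeomorphism $\hat\jmath\colon\hat V/\hat G\to\hat M$.
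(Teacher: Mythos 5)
Your reduction (openness from Prop.~\ref{thm:openhatj}, bijectivity by construction, and continuity via Prop.~\ref{thm:hatjcont} once divergent lifts are excluded) is exactly the paper's, but the case analysis meant to exclude divergent lifts contains a genuine error. Spatiality of the future boundary, as it is used here, means that no IP strictly contains a TIP; it does \emph{not} make TIPs minimal, and indeed every TIP strictly contains many PIPs. Hence from $P\subsetneq P'$ one may only conclude that $P$ is not a TIP, i.e.\ $P=I^-(x)$ is a PIP; the configuration ``$P$ a PIP strictly contained in the TIP $P'$'' is perfectly compatible with spatiality, so your claim that an immediate contradiction arises ``whenever at least one of $P,P'$ is a TIP'' is false precisely in this case, which is the crux of the corollary. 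Your sketch for the remaining PIP/PIP case is also not a proof as stated: Definition \ref{def:divlif} only demands $\hat{\pi}(\lev{P}{})\subsetneq\hat{\pi}(\lev{P'}{})$, not that the two limit points of $\hat{V}$ be distinct, and without tameness there is no reason for any $g\in G$ to relate the two lifts, so ``forcing the lifts to agree on a subsequence'' is unsubstantiated.

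The paper closes exactly this gap, treating uniformly every case in which $P$ is a PIP (no matter whether $P'$ is terminal or not): since $P=I^-(x)$, the lift is $\lev{P}{}=I^-(p)$ for some $p\in V$; by Theorem \ref{teo:propcomple} (i) the chronological limit restricted to manifold points is of first order and compatible with the manifold topology, and $\pi$ is continuous as a map of manifolds, whence $P\in\hat{L}_{M}(\{P_n\}_{n})$. On the other hand, projecting condition (i) of Definition \ref{def:divlif} gives $P'\subset {\rm LI}(\{P_n\}_{n})$, so $P\subsetneq P'$ contradicts the maximality of $P$ in ${\rm LS}(\{P_n\}_{n})$ required by \eqref{def:futurelimit}. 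Some step of this kind (handling a PIP limit strictly contained in a larger IP lying in the inferior limit) is indispensable; as written, your argument does not prove the corollary.
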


\begin{proof}
	Assume by contradiction that $\hat{\pi}$ is not continuous and so, from previous result, that there exists a sequence $\sigma \in \hat{M}$ admitting divergent lifts. Let $\overline{\sigma},\overline{\sigma'}$ be two sequences in $\hat{V}$ and $\lev{P}{}, \lev{P'}{}$ be two points in $\hat{V}$ as in Def. \ref{def:divlif}. As $\hat{M}$ only contains spatial future boundary points, no IP can contain a TIP. Hence, from (ii) in Def. \ref{def:divlif}, we deduce that  $P=I^-(x)$ for some $x\in M$, and then, $\lev{P}{}=I^-(p)$ for some point $p\in V$. As $\pi:V\rightarrow M$ is continuous and the future chronological topology preserves the manifold topology (which follows from Thm. \ref{teo:propcomple}, (i)), we have that $P\in \hat{L}_M(\{P_n\}_{n})$. Finally from (i) and (ii) in Def. \ref{def:divlif} we have that $P\subsetneq P'\in {\rm LI}( \{P_n\}_{n})$, in contradiction with the maximality on \eqref{def:futurelimit}.  
\end{proof}

Another possibility is to impose conditions over the topology of the future causal completion. In this case, we have also need to impose the finiteness of the group action $G$:

\begin{cor}\label{cor:aux1}
	Consider $\pi:V\rightarrow M$ a spacetime covering with associated group $G$. Assume that $G$ is finite and that $\hat{V}$ is Hausdorff. Then, $\hat{\pi}$ is continuous, and so, $\hat{\jmath}$ is an homeomorphism.
\end{cor}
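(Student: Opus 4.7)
The plan is to invoke Proposition \ref{thm:hatjcont}: it suffices to show that $M$ admits no sequence with future divergent lifts, since then $\hat{\pi}$ is continuous, and combined with the openness of $\hat{\jmath}$ (Proposition \ref{thm:openhatj}) and the fact that $\hat{\jmath}$ is bijective with $\hat{\pi}=\hat{\jmath}\circ \hat{\i}$, the universal property of the quotient topology will yield that $\hat{\jmath}$ is a homeomorphism.

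So I would argue by contradiction: suppose $\sigma=\{P_n\}_n\subset \hat{M}$ admits divergent lifts, that is, there exist lifts $\{\lev{P}{n}\}_n, \{\lev{P'}{n}\}_n$ and terminal sets $\lev{P}{},\lev{P'}{}\in \hat{V}$ satisfying conditions (i)--(ii) of Definition \ref{def:divlif}. Since the preimage of each $P_n$ under $\hat{\pi}$ coincides with the $G$-orbit of any fixed lift, for every $n$ there exists $g_n\in G$ with $\lev{P'}{n}=g_n\,\lev{P}{n}$. The finiteness of $G$ permits extracting a subsequence (which I relabel as the full sequence) along which $g_n$ is a constant element $g_0\in G$. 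By coherence of the limit operator, still $\lev{P}{}\in \hat{L}_V(\{\lev{P}{n}\}_n)$ and $\lev{P'}{}\in \hat{L}_V(\{g_0\,\lev{P}{n}\}_n)$.

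The next step is to exploit that $g_0$ acts by isometries preserving time orientation, hence $g_0:V\to V$ induces a self-homeomorphism of $\hat{V}$ that intertwines with the chronological limit operator; in particular, $\lev{P'}{}\in \hat{L}_V(\{g_0\,\lev{P}{n}\}_n)$ is equivalent to $g_0^{-1}\,\lev{P'}{}\in \hat{L}_V(\{\lev{P}{n}\}_n)$. Therefore both $\lev{P}{}$ and $g_0^{-1}\,\lev{P'}{}$ lie in $\hat{L}_V(\{\lev{P}{n}\}_n)$, so by \eqref{eq:1} the sequence $\{\lev{P}{n}\}_n$ converges in $\hat{\tau}_{chr}$ to both points. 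Since $\hat{V}$ is assumed Hausdorff, sequential limits are unique, so $\lev{P}{}=g_0^{-1}\,\lev{P'}{}$, i.e.\ $\lev{P'}{}=g_0\,\lev{P}{}$. Projecting by the $G$-invariant map $\hat{\pi}$ gives $\hat{\pi}(\lev{P'}{})=\hat{\pi}(\lev{P}{})$, directly contradicting the strict inclusion $\hat{\pi}(\lev{P}{})\subsetneq \hat{\pi}(\lev{P'}{})$ in Definition \ref{def:divlif}(ii).

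The main obstacle I anticipate is the verification that the action of a single $g_0\in G$ on $\hat{V}$ intertwines the chronological limit operator; but this should follow directly from the fact that $g_0$ is a time-orientation preserving isometry, so it carries IPs to IPs and preserves the inclusion relation among past sets, and hence commutes with both the set-theoretic inferior/superior limits and the maximality condition appearing in \eqref{def:futurelimit}. Once that is in place, the finiteness of $G$ is used only to pass from a sequence $\{g_n\}_n$ of group elements to a constant subsequence, and Hausdorffness is used only to extract equality of the two candidate limits, completing the contradiction.
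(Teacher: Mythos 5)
There is a genuine gap at the step where you assert that, for every $n$, there exists $g_n\in G$ with $\lev{P'}{n}=g_n\,\lev{P}{n}$, justified by the claim that ``the preimage of each $P_n$ under $\hat{\pi}$ coincides with the $G$-orbit of any fixed lift.'' That claim is false in general: the identity $\hat{\pi}^{-1}(P)=\cup_{g\in G}\,g\,\lev{P}{}$ stated in Section \ref{sec:Partial} is an equality of subsets of $V$ (the union of the sets $g\,\lev{P}{}$), and it does \emph{not} say that every IP projecting to $P$ is of the form $g\,\lev{P}{}$. The latter is precisely the \emph{future tameness} property of Definition \ref{rem:tameextension}, which the paper explicitly warns can fail (Example \ref{ex:tame} exhibits two lifts of the same terminal set that are not related by any group element). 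So as written, your argument assumes tameness without proof, and this is exactly the non-trivial input of the corollary.

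Under the actual hypotheses the step can be repaired, but it requires an argument: since $G$ is finite, Lemma \ref{lem:aux1} (hypothesis (H2)) shows that $\pi$ is future tame, and only then does $\hat{\pi}(\lev{P}{n})=\hat{\pi}(\lev{P'}{n})$ yield $\lev{P'}{n}=g_n\,\lev{P}{n}$. That lemma uses finiteness in an essential way (extracting a constant subsequence of group elements relating the defining chronological chains, and showing some power of the resulting $g$ fixes $\lev{P}{}$); it is not a formal consequence of the description of $\hat{\pi}^{-1}(P)$. Once this is inserted, the remainder of your argument — constancy of $g_n$ up to a subsequence by finiteness of $G$, equivariance of $\hat{L}_V$ under the isometric action, uniqueness of limits from Hausdorffness of $\hat{V}$, contradiction with Definition \ref{def:divlif}(ii), and then continuity via Proposition \ref{thm:hatjcont} together with openness from Proposition \ref{thm:openhatj} — coincides with the paper's proof.
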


\begin{proof}
	As we will see in the forthcoming sections, if $G$ is finite then $\pi$ is future tame (see Lemma \ref{lem:aux1}). Hence, let us consider two sequences $\{\lev{P}{n}\}_{n},\{\lev{P'}{n}\}_{n} \subset \hat{V}$ and two points $\lev{P}{},\lev{P'}{}\in \hat{V}$ with $\lev{P}{}\in \hat{L}_{V}(\{\lev{P}{n}\}_{n})$ and $\lev{P'}{}\in \hat{L}_{V}(\{\lev{P'}{n}\}_{n})$ and such that $\hat{\pi}(\lev{P}{n})=\hat{\pi}(\lev{P'}{n})$. Our aim is to prove that $\hat{\pi}(\lev{P}{})=\hat{\pi}(\lev{P'}{})$ as then no sequence with divergent lifts can exists.
	
	 Recalling the tameness of $\pi$, there exists a sequence $\{g_n\}_{n} \subset G$ such that $\lev{P'}{n}=g_n\,\lev{P}{n}$. Due the assumption that $G$ is finite, we can assume (up to a subsequence) that $g_n\equiv g_0$ for all $n$ and some constant $g_0\in G$. Therefore, $\lev{P}{}\in \hat{L}_{V}(\{\lev{P}{n}\}_{n})$ and $\lev{P'}{}\in \hat{L}_{V}(\{g_0\,\lev{P}{n}\}_{n})$. From the first inclusion and the fact that $G$ acts by isometries, we deduce that $g_0\,\lev{P}{}$ also belong to $\hat{L}_{V}(\{g_0\,\lev{P}{n}\}_{n})$ and recalling that $\hat{V}$ is Hausdorff (and so, for any sequence $\sigma$, $\hat{L}_V(\sigma)$ can contain at most one element, recall \eqref{eq:1}), it follows that $g_0\,\lev{P}{}=\lev{P'}{}$, as desired.
\end{proof}

%

\subsection{Proof of Theorem \ref{thm:main1}} Assertion (i) follows from Prop. \ref{thm:openhatj}, while (ii) from Prop. \ref{thm:hatjcont}. The last assertion is proved on Corollaries \ref{cor:Harrisresult} and \ref{cor:aux1}. 


\section{The C-completion under the action of the Group}\label{sec:total} 

Once we have determined the requirements to ensure the well behaviour of the partial boundaries, we are in conditions to study the (total) c-completion. As a first step, we will deal with the projection and lift of points of the corresponding c-completions, in order to define an extension $\overline{\pi}:\overline{V}\rightarrow \overline{M}$. Later, we will study the properties of such a map at both, the chronological and the topological level. 

\subsection{Point set level}\label{sec:totalpoint} Let us begin by considering $\lev{P}{}\in \hat{V}$ and $\lev{F}{}\in \check{V}$ two non empty terminal sets which are S-related; and let us denote by $\{p_n\}_{n}$ and $\{q_n\}_{n}$ the corresponding (future and past resp.) chronological sequences defining them. From the definition of the $S$-relation and the chronological limits, it follows that $\lev{P}{}\in \hat{L}_V(\{I^{-}(q_n)\}_{n})$ (see Thm \ref{teo:propcomple} (ii)). 
If the past chronological sequence $\{y_n\}_{n}$ (projection of $\{q_n\}_{n}$) does not admit future divergent lifts, then Prop. \ref{thm:hatjcont} ensures that $P\in \hat{L}_M(\{I^-(y_n)\}_{n})$. Then, taking into account that $\{y_n\}_{n}$ determines $F$, we obtain that $P\subset \downarrow F$ and it is maximal inside such a subset (see \eqref{def:futurelimit}). Analogously, assuming that the future chronological chain $\{x_n\}_{n}$ does not admit sequences with past divergent lifts, we can prove that $F\subset \uparrow P$ and it is maximal, so we have that:

%
 
\begin{prop}\label{prop:proypares}
	Let $\pi:V\rightarrow M$ be a spacetime covering projection. Assume that $M$ does not admit past chronological sequences with future divergent lifts nor future chronological sequences with past divergent lifts. If $(\lev{P}{},\lev{F}{})\in \overline{V}$ with $\lev{P}{}\neq\emptyset\neq\lev{F}{}$, then $(P,F)\in \overline{M}$, where $P=\hat{\pi}(\lev{P}{})$ and $F=\check{\pi}(\lev{F}{})$.
\end{prop}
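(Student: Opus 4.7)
The plan is to make rigorous the sketch given in the paragraph immediately preceding the proposition. Fix a future-directed chronological sequence $\{p_n\}_n$ defining $\lev{P}{}$ and a past-directed chronological sequence $\{q_n\}_n$ defining $\lev{F}{}$. By Prop.~\ref{prop:caracrel} their projections $\{x_n\}_n,\{y_n\}_n\subset M$ remain future- and past-directed chronological, so the sets $P=\hat{\pi}(\lev{P}{})=I^-(\{x_n\}_n)$ and $F=\check{\pi}(\lev{F}{})=I^+(\{y_n\}_n)$ are respectively a TIP and a TIF of $M$. The only content of the proposition is therefore that the pair $(P,F)$ is $S$-related.

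To check that $P$ is a maximal IP in $\downarrow F$, I would proceed as follows. From $(\lev{P}{},\lev{F}{})\in\overline{V}$ and $\lev{F}{}=I^+(\{q_n\}_n)$, Theorem~\ref{teo:propcomple}(ii) applied to the past chronological sequence $\{q_n\}_n$ forces $\lev{P}{}\in\hat{L}_V(\{I^-(q_n)\}_n)$. The $\hat{\pi}$-projection of this sequence is $\{I^-(y_n)\}_n$, which by hypothesis has no future divergent lifts (this is precisely what ``$M$ admits no past chronological sequence with future divergent lifts'' means here), so Prop.~\ref{thm:hatjcont} transfers the limit to $\hat{M}$: $P\in\hat{L}_M(\{I^-(y_n)\}_n)$. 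Invoking \eqref{def:futurelimit}, $P$ is maximal among IPs contained in $\mathrm{LS}(\{I^-(y_n)\}_n)$. Since $\{y_n\}_n$ is past-directed the sets $I^-(y_n)$ form a decreasing family, so $\mathrm{LI}=\mathrm{LS}=\bigcap_n I^-(y_n)$, and an elementary argument using the openness of the IP $P$ gives both $\downarrow F\subset\bigcap_n I^-(y_n)$ and $P\subset\downarrow F$. Maximality of $P$ inside the larger set $\bigcap_n I^-(y_n)$ then forces maximality inside $\downarrow F$.

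The dual statement, that $F$ is a maximal IF in $\uparrow P$, is proved symmetrically starting from $\lev{F}{}\in\check{L}_V(\{I^+(p_n)\}_n)$ via Theorem~\ref{teo:propcomple}(ii) applied to $\{p_n\}_n$, then applying the past analogue of Prop.~\ref{thm:hatjcont} together with the second half of the hypothesis (no future chronological sequence with past divergent lifts). Combining the two halves yields $P\sim_S F$, i.e.\ $(P,F)\in\overline{M}$. The only (mild) obstacle is bookkeeping: the two halves of the hypothesis are tailored precisely to match the sequences that appear when unpacking the $S$-relation, since the witness sequence for the future-set half of the $S$-relation is built from the past chronological sequence defining $\lev{F}{}$, and vice versa; no genuinely new ingredient beyond Prop.~\ref{thm:hatjcont} and Theorem~\ref{teo:propcomple}(ii) is required.
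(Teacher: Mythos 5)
Your proposal is correct and follows essentially the same route as the paper, whose proof is precisely the paragraph preceding the proposition: use Theorem \ref{teo:propcomple}(ii) to get $\lev{P}{}\in\hat{L}_V(\{I^-(q_n)\}_n)$, transfer the limit to $\hat{M}$ via Prop.~\ref{thm:hatjcont} under the no-divergent-lifts hypothesis, and read off maximality of $P$ in $\downarrow F$ from \eqref{def:futurelimit}, then argue dually for $F$. Your only addition is to spell out the final step (that $\mathrm{LI}=\mathrm{LS}=\bigcap_n I^-(y_n)\supset\downarrow F\supset P$ and that maximality descends), which the paper leaves implicit and which you verify correctly.
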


At the point set level, previous proposition is the only case where points are well projected in general. In fact, Example \ref{ex:exe1} and \ref{ex:exe2} show cases of points in $\overline{V}$ with no natural projection in $\overline{M}$. Moreover, these examples also show that the lifts of points from $\overline{M}$ are not, in general, well behaved either. Concretely, as we can see in Example \ref{ex:exe2}, the point $(P_2,\emptyset)$ has no natural lift in $\overline{V}$. The only possible candidate is the point $(\lev{P}{2},\lev{F}{})$, but we are including causal information due to $\lev{F}{}$ and different problems, even at the topological level, appear.

However, if we characterize the conditions under which the lift of points $(P,F)\in \overline{M}$ with both components non empty is well defined, then we will be in conditions to define the projection between $\overline{V}$ and $\overline{M}$.

\begin{prop}\label{prop:levpares}
Consider a point $(P,F)\in \overline{M}$ with $P\neq\emptyset\neq F$. The point $(P,F)$ has a {\em lift} in $\overline{V}$, i.e., a pair $(\lev{P}{},\lev{F}{})\in\overline{V}$ with $P=\hat{\pi}(\lev{P}{})$ and $F=\check{\pi}(\lev{F}{})$ if and only if there exist lifts $\lev{P'}{}$ and $\lev{F'}{}$ of $P$ and $F$ resp. such that $\lev{P'}{}\subset \downarrow \lev{F'}{}$ and $\lev{F'}{}\subset \uparrow \lev{P'}{}$.
\end{prop}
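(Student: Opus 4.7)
The forward direction follows immediately from the definition of the S-relation: if $(\lev{P}{},\lev{F}{})\in\overline{V}$ is any lift of $(P,F)$, then by (\ref{eSz}) we have $\lev{P}{}\subset \downarrow \lev{F}{}$ and $\lev{F}{}\subset \uparrow \lev{P}{}$, so the choice $\lev{P'}{}:=\lev{P}{}$ and $\lev{F'}{}:=\lev{F}{}$ already satisfies the containment hypotheses.

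For the converse, the plan is to upgrade the initial data $(\lev{P'}{},\lev{F'}{})$ to an S-related pair by a double maximization in the spirit of Szabados. First, I would apply Zorn's lemma to the family of IPs $\lev{Q}{}$ with $\lev{P'}{}\subset \lev{Q}{}\subset \downarrow \lev{F'}{}$ to obtain a maximal element $\lev{P}{}$; the Zorn hypothesis is met because a union of a chain of IPs containing a fixed sub-IP $\lev{P'}{}$ is still indecomposable (any decomposition into two proper past subsets would force every member of the chain into one side, contradicting its common inclusion of $\lev{P'}{}$) and still lies in the past set $\downarrow \lev{F'}{}$. This $\lev{P}{}$ is automatically a lift of $P$: using $\pi(\lev{F'}{})=F$ set-theoretically (a short verification using that $\pi$ preserves $\ll$) one gets $\pi(\downarrow \lev{F'}{})\subset \downarrow F$, so $\hat{\pi}(\lev{P}{})=\pi(\lev{P}{})$ is an IP in $\downarrow F$ containing $P$; the maximality of $P$ as an IP in $\downarrow F$, part of $(P,F)\in\overline{M}$, forces $\hat{\pi}(\lev{P}{})=P$. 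A symmetric Zorn argument then supplies $\lev{F}{}\supset \lev{F'}{}$ maximal IF in $\uparrow \lev{P}{}$, which is a lift of $F$ by the dual identification. The S-relation for $(\lev{P}{},\lev{F}{})$ follows: since $\lev{F}{}\supset \lev{F'}{}$ yields $\downarrow \lev{F}{}\subset \downarrow \lev{F'}{}$, any IP strictly containing $\lev{P}{}$ inside $\downarrow \lev{F}{}$ would already contradict the first-step maximality, while $\lev{F}{}$ is maximal in $\uparrow \lev{P}{}$ by construction.

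I expect the main obstacle to be initializing the second Zorn step: one needs $\lev{F'}{}\subset \uparrow \lev{P}{}$ in order to start the maximization, but the operator $\uparrow(\cdot)$ is inclusion-reversing, so enlarging $\lev{P'}{}$ to $\lev{P}{}$ shrinks the target, and the hypothesis only supplies $\lev{F'}{}\subset \uparrow \lev{P'}{}$. Handling this will require invoking the downstairs S-relation $F\subset \uparrow P$ (together with its maximality) along with the covering structure, essentially replacing the interpolating points witnessing $\lev{F'}{}\subset \uparrow \lev{P'}{}$ by suitable $G$-translates compatible with the enlarged $\lev{P}{}$; if this compatibility cannot be arranged with the original $\lev{F'}{}$, one substitutes it by an appropriate translate $g\,\lev{F'}{}$, which is still a lift of $F$ and is what ultimately makes the double maximization close.
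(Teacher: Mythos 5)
Your proposal follows the same route as the paper's own proof: enlarge $\lev{P'}{}$ to a maximal IP $\lev{P}{}$ inside $\downarrow\lev{F'}{}$, identify $\hat{\pi}(\lev{P}{})=P$ from the maximality of $P$ inside $\downarrow F$ in \eqref{eSz}, repeat dually to produce $\lev{F}{}$, and close the S-relation by noting $\downarrow\lev{F}{}\subset\downarrow\lev{F'}{}$. The one step you leave open, however, is not actually an obstacle, and the detour you sketch through the downstairs S-relation, the covering structure and $G$-translates is unnecessary. For a past set $A$ and a future set $B$ one has the elementary equivalence $A\subset\downarrow B\iff B\subset\uparrow A$: if $A\subset\downarrow B$ and $q\in B$, choose $q'\in B$ with $q'\ll q$ (recall $B=I^+(B)$); every $p\in A$ admits $s$ with $p\ll s$ and $s\ll q''$ for all $q''\in B$, in particular $p\ll s\ll q'$, so $q'$ lies to the chronological future of all of $A$ and hence $q\in\uparrow A$. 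Applying this with $A=\lev{P}{}$ and $B=\lev{F'}{}$ — using only that $\lev{P}{}\subset\downarrow\lev{F'}{}$ by construction, not the hypothesis $\lev{F'}{}\subset\uparrow\lev{P'}{}$ — gives exactly $\lev{F'}{}\subset\uparrow\lev{P}{}$, so the second maximization starts from the original $\lev{F'}{}$ with no modification. (The same equivalence, applied to $\lev{F}{}\subset\uparrow\lev{P}{}$, yields $\lev{P}{}\subset\downarrow\lev{F}{}$, which your closing step implicitly uses.)

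It is in fact important that no replacement of $\lev{F'}{}$ by a translate $g\,\lev{F'}{}$ is needed: the maximality of $\lev{P}{}$ was established relative to $\downarrow\lev{F'}{}$, so after such a swap your final argument (which needs $\downarrow\lev{F}{}$ to be contained in the set in which $\lev{P}{}$ is maximal) would no longer apply without extra work; the duality above removes the issue entirely. Two minor remarks: your justification that the union of a chain of IPs containing $\lev{P'}{}$ is again an IP is slightly garbled — the contradiction is not with the common inclusion of $\lev{P'}{}$, but with the properness of the side that then contains the whole union (each IP of the chain lies entirely in one of the two proper past subsets, and by the chain property the union does too) — and, as in the paper, one should note that $\hat{\pi}(\lev{P}{})$ coincides with the set-theoretic image $\pi(\lev{P}{})$, which is what makes the inclusion $\hat{\pi}(\lev{P}{})\subset\downarrow F$ immediate from Prop.~\ref{prop:caracrel}. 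With these points settled, your argument is complete and coincides with the paper's.
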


\begin{proof}
The right implication is trivial, so we only need to focus on the left one, that is, consider a point $(P,F)\in \overline{M}$ and suppose that there exist lifts $\lev{P'}{}$ and $\lev{F'}{}$ such that $\lev{P'}{}\subset \downarrow \lev{F'}{}$ and $\lev{F'}{}\subset\uparrow\lev{P'}{}$. 
We can ensure then the existence of an IP $\lev{P}{}$ with $\lev{P'}{}\subset\lev{P}{}$ and maximal among the indecomposable sets contained in $\downarrow \lev{F'}{}$. Recalling that the projection is well behaved with contentions, we deduce that $P\subset \hat{\pi}(\lev{P}{})\subset \downarrow F$. However $P\sim_{S}F$, so the maximality on \eqref{eSz} implies that $P=\hat{\pi}(\lev{P}{})$.

Reasoning in the same way, we can prove that there exists $\lev{F}{}$ with $\check{\pi}(\lev{F}{})=F$ and being a maximal IP contained in $\uparrow \lev{P}{}$. In conclusion, $\lev{P}{}\sim_S\lev{F}{}$ and the pair $(\lev{P}{},\lev{F}{})$ belongs to $\overline{V}$. Moreover, from construction $\hat{\pi}(\lev{P}{})=P$ and $\check{\pi}(\lev{F}{})=F$, as desired. 
\end{proof}

\begin{rem}\label{rem:conjuntosSrelated}
	Recall that previous proof does not imply that the initial $\lev{P'}{}$ and $\lev{F'}{}$ are $S$-related, but that there exist others indecomposable sets $S$-related $\lev{P}{}$ and $\lev{F}{}$ such that: (a) $\lev{P'}{}\subset \lev{P}{}$, $\lev{F'}{}\subset \lev{F}{}$ and (b) $\hat{\pi}(\lev{P}{})=\hat{\pi}(\lev{P'}{})$ and $\check{\pi}(\lev{F}{})=\check{\pi}(\lev{F'}{})$.
	
\end{rem}

Now, we are in conditions to extend the projection to the c-completions. 
However, the definition of the projection is more technical than the partial cases. In fact, we have to begin by defining some identifications on $\overline{V}$ a priori, and a completely explicit definition for the projection will not be provided on the general case. When we suppose additionally that the projection is tame\footnote{As a general convention, when there is no mention to the future nor past of the definition, it will mean that we are considering both concepts at the same time.} (Def. \ref{rem:tameextension}), all previous technicalities disappear, being able to follow an analogous procedure to the definition of $\hat{\pi}$ and $\check{\pi}$.     	
	
\smallskip 	
	
Let us start by defining an equivalence relation on $\overline{V}$. Such a relation $\sim_{G}$ is defined as the smallest equivalence relation satisfying that $(\lev{P}{},\emptyset)\sim_{G} (\lev{P'}{},\lev{F'}{})$ (resp. $(\emptyset,\lev{F}{})\sim_{G} (\lev{P'}{},\lev{F'}{})$) if $\hat{\pi}(\lev{P}{})=\hat{\pi}(\lev{P'}{})$ (resp. $\check{\pi}(\lev{F}{})=\check{\pi}(\lev{F'}{})$). Denote by $\overline{V}/G$ the associated quotient space and $\overline{\i}:\overline{V}\rightarrow \overline{V}/G$ the natural projection to the quotient space. We will endow $\overline{V}/G$ with the quotient topology. Then, 

\begin{defi}\label{def:proycompleto}
Consider $\pi:V\rightarrow M$ be a spacetime covering projection. Take $(\lev{P}{},\lev{F}{})\in \overline{V}$ and define 

\begin{equation}\label{def:proy}
\overline{\pi}(\lev{P}{},\lev{F}{}):=\left\{\begin{array}{ll}
(\hat{\pi}(\lev{P'}{}),\check{\pi}(\lev{F'}{})) & \hbox{if there exists $(\lev{P'}{},\lev{F'}{})\in \overline{V}$ with}\\ &  \hbox{$\lev{P'}{}\neq\emptyset\neq \lev{F'}{}$ and $(\lev{P}{},\lev{F}{})\sim_{G}(\lev{P'}{},\lev{F'}{})$.}\\ & \\
(\hat{\pi}(\lev{P}{}),\check{\pi}(\lev{F}{})) & \hbox{Otherwise}
\end{array}   \right.
\end{equation}
%
where we are defining $\hat{\pi}(\emptyset)=\emptyset$ and $\check{\pi}(\emptyset)=\emptyset$.	
\end{defi} 
 
 This definition means that, among the elements of an equivalence class of $\overline{V}/G$, we will prioritize the projection of elements with both components non empty. It is straightforward that, with this definition, 

\[
(\lev{P}{},\lev{F}{})\sim_{G}(\lev{P'}{},\lev{F'}{}) \iff \overline{\pi}((\lev{P}{},\lev{F}{}))=\overline{\pi}((\lev{P'}{},\lev{F'}{})) 
\]
which is coherent with the procedure on the case of partial boundaries.

Observe that, defined on that way, $\overline{\pi}$ is not well defined in general, as it is not necessarily true that $\hat{\pi}(\lev{P}{})\sim_{S}\check{\pi}(\lev{F}{})$ (see Example \ref{ex:exe2}). However, we can overcome this problem under the assumptions of Propositions \ref{prop:proypares} and \ref{prop:levpares}.

\begin{prop}\label{prop:welldefpinotame}
	If we assume that the points $(P,F)\in \overline{M}$ with $P\neq\emptyset\neq F$ has lifts in $\overline{V}$ (see Prop. \ref{prop:levpares}) and that $M$ does not admit sequences with (future or past) divergent lifts, then the projection $\overline{\pi}$ is well defined and surjective.
\end{prop}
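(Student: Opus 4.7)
The plan is to prove the two claims (well-definedness and surjectivity) separately, and within each, to split along the two cases of Definition \ref{def:proycompleto}, using Propositions \ref{prop:proypares} and \ref{prop:levpares} as the main tools.

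For well-definedness, the task is to verify that for any $(\lev{P}{},\lev{F}{})\in \overline{V}$, the pair $\overline{\pi}((\lev{P}{},\lev{F}{}))$ actually lies in $\overline{M}$. In the first case of the definition (when some equivalent representative $(\lev{P'}{},\lev{F'}{})\in \overline{V}$ has both components non-empty), the output is $(\hat{\pi}(\lev{P'}{}),\check{\pi}(\lev{F'}{}))$, which belongs to $\overline{M}$ by a direct application of Proposition \ref{prop:proypares}; this is where the no-divergent-lifts hypothesis is used. In the second case, say $(\lev{P}{},\emptyset)$ (the other being symmetric), the output is $(\hat{\pi}(\lev{P}{}),\emptyset)$, and I must check that $P:=\hat{\pi}(\lev{P}{})$ is $S$-related to $\emptyset$, i.e., that $P$ admits no $S$-partner among the non-empty IFs of $M$. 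I would argue by contradiction: if $P\sim_S F$ for some $F\neq\emptyset$, then the lifting hypothesis (Proposition \ref{prop:levpares}) yields a pair $(\lev{P}{}^*,\lev{F}{}^*)\in \overline{V}$ with $\hat{\pi}(\lev{P}{}^*)=P$, $\check{\pi}(\lev{F}{}^*)=F$ and both components non-empty; but then $\hat{\pi}(\lev{P}{}^*)=\hat{\pi}(\lev{P}{})$ gives $(\lev{P}{},\emptyset)\sim_G(\lev{P}{}^*,\lev{F}{}^*)$ via the basic generating identification, contradicting that the second case applies.

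For surjectivity, given $(P,F)\in \overline{M}$, I would produce an explicit preimage according to whether either component is empty. If $P\neq\emptyset\neq F$, the lifting hypothesis gives at once $(\lev{P}{},\lev{F}{})\in \overline{V}$ with $\hat{\pi}(\lev{P}{})=P$ and $\check{\pi}(\lev{F}{})=F$, both non-empty, so the first case of Definition \ref{def:proycompleto} applies and $\overline{\pi}$ returns $(P,F)$. If $F=\emptyset$, I pick any lift $\lev{P}{}$ of $P$ and use completeness (Theorem \ref{teo:propcomple}(iii)) to get $\lev{F}{}\in \check{V}_\emptyset$ with $(\lev{P}{},\lev{F}{})\in \overline{V}$; then $\lev{F}{}$ must be empty, since otherwise Proposition \ref{prop:proypares} would make $(P,\check{\pi}(\lev{F}{}))\in \overline{M}$ with $\check{\pi}(\lev{F}{})\neq\emptyset$, contradicting $(P,\emptyset)\in \overline{M}$. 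The same kind of argument as in the well-definedness part then rules out Case 1 of the definition for $(\lev{P}{},\emptyset)$, and Case 2 gives $\overline{\pi}((\lev{P}{},\emptyset))=(P,\emptyset)$. The case $P=\emptyset$ is symmetric.

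The main subtlety is not the case split itself, but the interplay between the generating basic identifications of $\sim_G$ and the intrinsic $S$-relation on $\overline{M}$: each time the second case of Definition \ref{def:proycompleto} has to be invoked, one converts an assumed $S$-partner downstairs into a representative of the same $\sim_G$-class upstairs with both components non-empty, which is exactly what Proposition \ref{prop:levpares} provides. This is also what forces the two hypotheses to be used together; the no-divergent-lifts condition is what pushes indecomposable sets downward through Proposition \ref{prop:proypares}, while the lifting hypothesis pushes $S$-related pairs upward, and both directions are needed at different points of the case analysis.
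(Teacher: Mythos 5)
Your proposal is correct and follows essentially the same route as the paper's proof: Proposition \ref{prop:proypares} handles the case of a representative with both components non-empty, the lifting hypothesis converts a would-be $S$-partner of $P$ downstairs into a $\sim_G$-related pair with both components non-empty upstairs (contradicting the second case of Definition \ref{def:proycompleto}), and surjectivity is obtained exactly as in the paper via completeness of $\overline{V}$ plus Proposition \ref{prop:proypares} to force the lifted partner to be empty. No substantive difference from the paper's argument.
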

\begin{proof}
	Let us begin by showing that $\overline{\pi}$ is well defined. Take $(\lev{P}{},\lev{F}{})\in \overline{V}$ an arbitrary point and let us consider $\overline{\pi}((\lev{P}{},\lev{F}{}))$. If we are in the first case of \eqref{def:proy}, then \[\overline{\pi}((\lev{P}{},\lev{F}{}))=(\hat{\pi}(\lev{P'}{}),\check{\pi}(\lev{F'}{}))\]
	for some $\lev{P'}{}\neq \emptyset\neq \lev{F'}{}$. Prop. \ref{prop:proypares} ensures that $(\hat{\pi}(\lev{P'}{}),\check{\pi}(\lev{F'}{}))\in \overline{M}$ and the projection is well defined.
	
	Now, let us assume that we are on the second case of \eqref{def:proy}, and then, that no point $(\lev{P'}{},\lev{F'}{})$ with both components non empty can be $\sim_{G}$
	related with $(\lev{P}{},\lev{F}{})$. In particular, one of the components of the point should be empty, say $\lev{F}{}=\emptyset$ (the other case is analogous). In this case, $\overline{\pi}(\lev{P}{},\emptyset)=(P,\emptyset)$, and so, we have to prove that $P\sim_{S} \emptyset$. Otherwise, from the completeness of the c-completion (recall Thm. \ref{teo:propcomple} (iii)), the terminal set $P$ should be $S$-related with a terminal set $F\neq\emptyset$, determining the point $(P,F)\in \overline{M}$. From Prop. \ref{prop:levpares}, there are non empty lifts $\lev{P'}{}$ and $\lev{F'}{}$ of $P$ and $F$ such that $(\lev{P'}{},\lev{F'}{})\in \overline{V}$. However, we have that $\hat{\pi}(\lev{P}{})=\hat{\pi}(\lev{P'}{})$, and so, that $(\lev{P}{},\emptyset)\sim_{G}(\lev{P'}{},\lev{F'}{})$, a contradiction. In conclusion, $P\sim_{S}\emptyset$ and the projection is also well defined in this case.
	
	For the surjectivity, consider $(P,F)\in \overline{M}$. If $(P,F)$ has both components non empty, then by hypothesis admits a lift on $\overline{V}$ which projects on it. Otherwise, assume that $F=\emptyset$ (the other case is analogous) and take $\lev{P}{}$ any lift of $P$. From completeness of the c-completion, there exists $\lev{F}{}$ such that $(\lev{P}{},\lev{F}{})\in \overline{V}$. Moreover $\lev{F}{}$ has to be empty as, otherwise, recalling that $\overline{\pi}$ is well defined, $P\sim_{S}\check{\pi}(\lev{F}{})$ (which is not possible as $P\sim_{S} \emptyset$). Hence, any point $(\lev{P}{},\lev{F}{})\in \overline{V}$ with $\hat{\pi}(\lev{P}{})=P$ has $\lev{F}{}=\emptyset$ and, from the definition of $\overline{\pi}$, we deduce that $\overline{\pi}((\lev{P}{},\emptyset))=(P,\emptyset)$, as desired.
	
\end{proof} 

If we assume that $\pi$ is tame, the definition of $\overline{\pi}$ is simplified significantly due to the following result:
\begin{lemma}\label{wellprojectedtame}
	Assume that the projection $\pi:V\rightarrow M$ is tame. If $(\lev{P},\emptyset)\sim_{G} (\lev{P'}{},\lev{F'}{})$, then $\lev{F'}{}=\emptyset$ (and analogously for the case $(\emptyset,\lev{F}{})$).
\end{lemma}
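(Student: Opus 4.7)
The plan is to exploit the definition of $\sim_{G}$ as the smallest equivalence relation generated by the ``empty-future'' and ``empty-past'' generator pairs, and to show by induction along a chain of generators that one cannot escape the ``empty second component'' stratum of $\overline{V}$, thanks to tameness.

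To unpack the relation: any $\sim_{G}$-link from $(\lev{P}{},\emptyset)$ to $(\lev{P'}{},\lev{F'}{})$ is realised by a finite chain $A_{0}=(\lev{P}{},\emptyset), A_{1},\ldots, A_{n}=(\lev{P'}{},\lev{F'}{})$ in $\overline{V}$ where each consecutive pair arises from a single generator. I would prove by induction on $i$ that every $A_{i}$ has empty second component; the claim then follows by taking $i=n$.

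For the inductive step, suppose $A_{i}=(\lev{Q}{},\emptyset)$ with $\lev{Q}{}\neq\emptyset$ (the latter because $\emptyset\not\sim_{S}\emptyset$ excludes $(\emptyset,\emptyset)$ from $\overline{V}$). The generator linking $A_{i}$ to $A_{i+1}$ cannot be of ``empty-past'' type $(\emptyset,\lev{R}{})\sim_{G}(\lev{Q'}{},\lev{R'}{})$: matching $A_{i}$ to the first side is impossible since $\lev{Q}{}\neq\emptyset$, and matching $A_{i}$ to the second side forces $\lev{R'}{}=\emptyset$, hence by the projection condition $\check{\pi}(\lev{R}{})=\emptyset$, hence $\lev{R}{}=\emptyset$, contradicting $(\emptyset,\lev{R}{})\in\overline{V}$. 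Therefore $A_{i}$ participates in an ``empty-future'' generator and $A_{i+1}=(\lev{Q'}{},\lev{R'}{})$ with $\hat{\pi}(\lev{Q}{})=\hat{\pi}(\lev{Q'}{})$.

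Now the tameness of $\pi$ gives $g\in G$ with $\lev{Q}{}=g\,\lev{Q'}{}$. Since $G$ acts by time-orientation-preserving isometries, $g$ carries $S$-related pairs to $S$-related pairs, and from $\lev{Q'}{}\sim_{S}\lev{R'}{}$ we obtain $\lev{Q}{}=g\,\lev{Q'}{}\sim_{S}g\,\lev{R'}{}$. But $(\lev{Q}{},\emptyset)\in\overline{V}$ means precisely that $\lev{Q}{}\sim_{S}\emptyset$, i.e., $\lev{Q}{}$ is not $S$-related to any nonempty IF. Consequently $g\,\lev{R'}{}=\emptyset$ and thus $\lev{R'}{}=\emptyset$, closing the induction. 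The dual assertion for $(\emptyset,\lev{F}{})$ is verbatim with pasts and futures swapped.

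The only real subtlety to verify carefully is the case analysis of which generators are compatible with an intermediate point of the chain that already has an empty second component; the tameness hypothesis then converts the $S$-relation obstruction into the desired conclusion in a single step.
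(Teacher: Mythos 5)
Your proof is correct and its key step --- using tameness to produce $g\in G$ with $\lev{Q}{}=g\,\lev{Q'}{}$ and then transporting the $S$-relation by the isometry to contradict $\lev{Q}{}\sim_S\emptyset$ --- is exactly the paper's argument. The only difference is that you spell out the induction along a finite chain of generators of $\sim_G$ (and rule out the ``empty-past'' generators), bookkeeping the paper treats as immediate; this is a harmless, slightly more explicit rendering of the same proof.
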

\begin{proof}
	The result is straightforward, once we recall that in tame projections, if $\hat{\pi}(\lev{P}{})=\hat{\pi}(\lev{P'}{})$, then there exists $g\in G$ such that $\lev{P'}{}=g\,\lev{P}{}$. Therefore, if $\lev{F'}{}\neq\emptyset$ and $\lev{P'}{}\sim_{S}\lev{F'}{}$, it follows that $\lev{P}{}=g^{-1}\,\lev{P'}{}\sim_{S} g^{-1}\,\lev{F'}{}$, in contradiction with $\lev{P}{}\sim_{S}\emptyset$.
\end{proof}

Therefore, on the case of tame projection, the definition of $\overline{\pi}$ becomes  

\begin{equation}\label{eq:explpi} 
	\overline{\pi}((\lev{P}{},\lev{F}{}))=(\hat{\pi}(\lev{P}{}),\check{\pi}(\lev{F}{}))
\end{equation}

\smallskip 

In any case, considering $\pi$ tame or not, when the map $\overline{\pi}$ is well defined and surjective, we can proceed in complete analogy with the partial cases and obtain the following diagram:

\begin{center} 
	\begin{tikzpicture}
	\matrix (m) [matrix of math nodes,row sep=3em,column sep=4em,minimum width=2em]
	{
		\overline{V} & \\
		\overline{V}/G & \overline{M} \\};
	\path[-stealth]
	(m-1-1) edge node [left] {$\overline{\i}$} (m-2-1)
	(m-2-1.east|-m-2-2) edge node [below] {$\overline{\j}$} (m-2-2)
	(m-1-1) edge node [above] {$\overline{\pi}$} (m-2-2);
	\end{tikzpicture}
\end{center}
where two points in $\overline{V}$ are $G$-related if they project by $\overline{\pi}$ into the same point of $\overline{M}$.  

\subsection{At the chronological level}\label{sec:chronology}
Let us now study how is the behaviour of $\overline{\jmath}$ regarding the causal structure. As a first step, we need to define first a chronological relation on $\overline{V}/G$. For this, we will follow an approach inspired from \cite[Section 6.2]{FHSIso2}, where two equivalence classes $\overline{\i}((\lev{P}{},\lev{F}{})),\overline{\i}((\lev{P'}{},\lev{F'}{}))\in\overline{V}/G$ are chronologically related, $\overline{\i}((\lev{P}{},\lev{F}{}))\ll \overline{\i}((\lev{P'}{},\lev{F'}{}))$ if there exist $(\lev{P}{0},\lev{F}{0})\in \overline{\i}((\lev{P}{},\lev{F}{}))$ and $(\lev{P'}{0},\lev{F'}{0})\in \overline{\i}((\lev{P'}{},\lev{F'}{}))$ with $(\lev{P}{0},\lev{F}{0})\ll (\lev{P'}{0},\lev{F'}{0})$ in $\overline{V}$.

In general, and under the hypothesis that $\overline{\pi}$ is well defined and surjective, we can obtain that both spaces inherits the same causal structure.

\begin{prop}\label{prop:chronologicallevel}
	Let $\pi:V\rightarrow M$ a spacetime covering projection and assume that $\overline{\pi}$ is well defined and surjective. Denote by $\overline{\j}$ the corresponding map between $\overline{V}/G$ and $\overline{M}$. Then, the bijection $\overline{\j}$ is a chronological isomorphism, that is,
	
	\[
	(P,F)\ll (P',F') \iff \overline{\j}^{-1}((P,F))\ll \overline{\j}^{-1}((P',F'))
	\]
	 
\end{prop}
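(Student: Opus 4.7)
The plan is to verify both implications by exploiting: (i) Proposition~\ref{prop:caracrel}, which lets chronological relations pass between $V$ and $M$ modulo the $G$-action; (ii) the fact that $G$ acts by isometries on $\overline{V}$, preserving the $S$-relation and sending each pair to a $\sim_G$-equivalent one (so $G$-translates remain valid alternative representatives of the same class); and (iii) the following ``naturality'' of $\overline{\pi}$ on its non-empty components: if $(\lev{P}{0},\lev{F}{0})$ is a representative of $\overline{\jmath}^{-1}((P,F))$ with $\lev{F}{0}\neq\emptyset$, then $\check{\pi}(\lev{F}{0})=F$, and dually. Property (iii) is obtained from a case analysis of \eqref{def:proy}: when $\lev{P}{0}\neq\emptyset$ the first branch with $(\lev{P}{0},\lev{F}{0})$ itself gives $F=\check{\pi}(\lev{F}{0})$; when $\lev{P}{0}=\emptyset$ and $P=\emptyset$ the second branch does the same; and when $\lev{P}{0}=\emptyset$ with $P\neq\emptyset$, the basic identification $(\emptyset,\lev{F}{0})\sim_G(\lev{P'}{},\lev{F'}{})$ introduced just before Definition~\ref{def:proycompleto}, which requires $\check{\pi}(\lev{F}{0})=\check{\pi}(\lev{F'}{})$, combined with the well-definedness of $\overline{\pi}$ pins down $F=\check{\pi}(\lev{F}{0})$.

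For the forward direction, I would assume $F\cap P'\neq\emptyset$ and fix $x\in F\cap P'$. Using (iii) together with the surjectivity of $\overline{\pi}$, I select representatives $(\lev{P}{0},\lev{F}{0})$ and $(\lev{P'}{0},\lev{F'}{0})$ of $\overline{\jmath}^{-1}((P,F))$ and $\overline{\jmath}^{-1}((P',F'))$ with $\lev{F}{0}\neq\emptyset$, $\check{\pi}(\lev{F}{0})=F$ and $\lev{P'}{0}\neq\emptyset$, $\hat{\pi}(\lev{P'}{0})=P'$ (replacing an initial choice by a first-branch witness in \eqref{def:proy} if necessary to ensure the relevant component is non-empty). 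Then I lift $x$ to $p\in\lev{F}{0}$ and $p'\in\lev{P'}{0}$; since $\pi(p)=\pi(p')=x$, there is $g\in G$ with $p'=g\cdot p$, so $p\in\lev{F}{0}\cap g^{-1}\cdot\lev{P'}{0}$. By (ii), $g^{-1}\cdot(\lev{P'}{0},\lev{F'}{0})$ is still a representative of $\overline{\jmath}^{-1}((P',F'))$, and \eqref{eq:defcronorel} applied in $\overline{V}$ yields $\overline{\jmath}^{-1}((P,F))\ll\overline{\jmath}^{-1}((P',F'))$ in $\overline{V}/G$.

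For the backward direction, I start with witnesses $(\lev{P}{0},\lev{F}{0})$ and $(\lev{P'}{0},\lev{F'}{0})$ of the two classes and a point $p\in\lev{F}{0}\cap\lev{P'}{0}$. Writing $\lev{F}{0}=I^+(\{y_n\})$ for a past-directed chronological chain, $y_n\ll p$ for some $n$, so Proposition~\ref{prop:caracrel} gives $\pi(y_n)\ll\pi(p)=:x$, i.e.\ $x\in\check{\pi}(\lev{F}{0})$; dually $x\in\hat{\pi}(\lev{P'}{0})$. Invoking (iii) then yields $x\in F\cap P'$, hence $(P,F)\ll(P',F')$ by \eqref{eq:defcronorel}.

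The main obstacle I anticipate is the justification of (iii) in the non-tame setting, where a $\sim_G$-class can contain representatives crossing the empty components of $\overline{V}$ and it is not immediately obvious that $\check{\pi}(\lev{F}{0})$ (or $\hat{\pi}(\lev{P}{0})$) coincides with the corresponding component of $\overline{\pi}(\lev{P}{0},\lev{F}{0})$. The key idea is that the basic identifications generating $\sim_G$ always pair an empty-component representative with a non-empty witness sharing the same $\check{\pi}$ (or $\hat{\pi}$) value exactly, and the well-definedness of $\overline{\pi}$ propagates this equality across longer chains. Once (iii) is in hand, both implications reduce to the routine $G$-action and lifting argument above, closing the proof.
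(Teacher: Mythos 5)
Your proposal is correct and follows essentially the same route as the paper: for the forward implication you lift a point $x\in F\cap P'$ to $V$ and use Proposition~\ref{prop:caracrel} together with a $G$-translate of one representative to realize the relation in $\overline{V}$, and for the converse you project a common point of $\lev{F}{}\cap\lev{P'}{}$ back to $M$. The only difference is that you make explicit (your item (iii)) the compatibility between the components of a representative and the components of its $\overline{\pi}$-image, a point the paper absorbs into its choice of ``lifts'' and the ``without loss of generality'' step; your justification of it via the structure of the generators of $\sim_{G}$ and the well-definedness of $\overline{\pi}$ is sound.
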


\begin{proof}
	Let us start by fixing some notation. Consider $(P,F),(P',F')\in \overline{M}$ and denote by $(\lev{P}{},\lev{F}{}),(\lev{P'}{},\lev{F'}{})\in \overline{V}$ two corresponding lifts. It follows that $\overline{\j}^{-1}((P,F))=\overline{\i}((\lev{P}{},\lev{F}{}))$ and $\overline{\j}^{-1}((P',F'))=\overline{\i}((\lev{P'}{},\lev{F'}{}))$.
	
	Assume that $\overline{\i}((\lev{P}{},\lev{F}{}))\ll \overline{\i}((\lev{P'}{},\lev{F'}{}))$ and, without loss of generality, that $(\lev{P}{},\lev{F}{})\ll (\lev{P'}{},\lev{F'}{})$. Then, $\lev{F}{}\cap \lev{P'}{}\neq\emptyset$ and from the first bullet point of Prop. \ref{prop:caracrel}, that $F\cap P'\neq\emptyset$. Therefore, $(P,F)\ll (P',F')$ and the left implication follows.
	
	For the other implication, assume that $(P,F)\ll (P',F')$, i.e., $F\cap P'\neq\emptyset$ and let $x\in F\cap P'$. As $x\in F$ and $\check{\pi}(\lev{F}{})=F$, Prop. \ref{prop:caracrel} ensures that there exists a point $p\in V$ with $\pi(p)=x$ such that $p\in \lev{F}{}$. Reasoning in the same way but fixing this lifted $p\in V$ of $x$, we can show that there exists $g\in G$ such that $p\in g\,\lev{P'}{}$ (recall that $\hat{\pi}(\lev{P'}{})=P'$). In conclusion,  $p\in \lev{F}{}\cap g\,\lev{P'}{}$ and so $(\lev{P}{},\lev{F}{})\ll (g\,\lev{P'}{},g\,\lev{F'}{})$. Hence, $\overline{\i}((\lev{P}{},\lev{F}{}))\ll \overline{\i}((\lev{P'}{},\lev{F'}{}))(=\overline{\i}((g\,\lev{P'}{},g\,\lev{F'}{})))$ and the right implication follows.
\end{proof}

\subsection{At the topological level}\label{sec:totaltopo} 
Finally, in this section we will compare the topological structures of both, $\overline{V}/G$ and $\overline{M}$. Let us start by fixing some notation. $\overline{M}$ and $\overline{V}$ will be endowed with the corresponding chronological topology, while $\overline{V}/G$ will be with the induced quotient topology from $\overline{V}$. In concordance with Section \ref{sec:Partial}, we will denote by $L_{M}$ the chronological limit on $\overline{M}$, by $L_{V}$ the chronological limit on $\overline{V}$ and by $L_G$ the quotient limit operator on $\overline{V}/G$ induced from $L_{V}$ (recall equation \eqref{deflimitQ}). 

In spite of the partial cases where the openness of the map $\hat{\jmath}$ is always ensured,  in general the map $\overline{\jmath}$ is neither continuous nor open. In fact, the following result summarize the only cases where $\overline{\jmath}$ is well behaved in general respect the limit operator.

\begin{prop}\label{prop:topgeneral}
	Let $\pi:V\rightarrow M$ be a spacetime covering and assume that $\overline{\pi}$ is well defined and surjective. Then
	
	\begin{itemize}
		\item[(a)] If $(\lev{P}{},\lev{F}{})\in L_{V}(\lev{\sigma}{})$ for some sequence $\lev{\sigma}{}\subset \overline{V}$ with $\lev{P}{}\neq\emptyset\neq \lev{F}{}$, then $(P,F)\in L_{M}(\sigma)$, where $(P,F)=\overline{\pi}((\lev{P}{},\lev{F}{}))$ and $\sigma=\overline{\pi}(\lev{\sigma}{})$.
		
		\item[(b)] If $(P,\emptyset)\in L_M(\sigma)$ (analogously for $(\emptyset,F)\in L_{M}(\sigma)$) for some sequence $\sigma\subset \overline{M}$, then there exist a subsequence $\kappa\subset \sigma$ and lifts $(\lev{P}{},\emptyset)$ and $\lev{\kappa}{}$ of $(P,\emptyset)$ and $\kappa$ respectively such that $(\lev{P}{},\emptyset)\in L_{V}(\lev{\kappa}{})$.
	\end{itemize}
\end{prop}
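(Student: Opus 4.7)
The plan is to treat the two parts separately: (a) is a direct inclusion chase based on the simplified limit criterion \eqref{eq:limitsimple}, while (b) generalises the subsequence/maximal-element strategy from the proof of Proposition \ref{thm:openhatj} to the total completion.

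For part (a), both coordinates of $(\lev{P}{},\lev{F}{})$ being non-empty triggers \eqref{eq:limitsimple}, giving $\lev{P}{}\subset{\rm LI}(\{\lev{P}{n}\}_n)$ and $\lev{F}{}\subset{\rm LI}(\{\lev{F}{n}\}_n)$. The non-emptiness of these inferior limits forces $\lev{P}{n}\neq\emptyset\neq\lev{F}{n}$ for all but finitely many $n$; after discarding the exceptions, Definition \ref{def:proycompleto} simplifies to $(P_n,F_n)=(\hat{\pi}(\lev{P}{n}),\check{\pi}(\lev{F}{n}))$ and $(P,F)=(\hat{\pi}(\lev{P}{}),\check{\pi}(\lev{F}{}))$, since $(\lev{P}{},\lev{F}{})$ itself witnesses the first case of \eqref{def:proy}. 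The key inclusion $P\subset{\rm LI}(\{P_n\}_n)$ is then verified pointwise: for $x\in P$ with $x\ll x_m$ along a defining chain of $P$, lift $x_m$ to $p_m\in\lev{P}{}$; by the LI hypothesis $p_m\in\lev{P}{n}$ for $n$ large, so $x_m=\pi(p_m)\in P_n$, and hence $x\in P_n$ because $P_n$ is a past set. The dual argument handles $F$, and \eqref{eq:limitsimple} applied in $\overline{M}$ delivers $(P,F)\in L_M(\sigma)$; well-definedness of $\overline{\pi}$ ensures $(P,F)\in\overline{M}$.

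For part (b), write $\sigma=\{(P_n,F_n)\}_n$. The hypothesis gives $P\in\hat{L}_M(\{P_n\}_n)$, so $P\neq\emptyset$, $P\subset{\rm LI}(\{P_n\}_n)$, $P$ is maximal in ${\rm LS}(\{P_n\}_n)$, and $P_n\neq\emptyset$ for $n$ large. The first move is to use surjectivity of $\overline{\pi}$ together with a case analysis of \eqref{def:proy} to produce a lift $(\lev{P^*}{},\emptyset)\in\overline{V}$ of $(P,\emptyset)$; a crucial by-product of well-definedness is that any $\hat{\pi}$-lift of $P$ is automatically $S$-related to $\emptyset$, for otherwise pairing with a non-empty $\lev{F'}{}$ would place $(P,\check{\pi}(\lev{F'}{}))\in\overline{M}$ with non-empty second coordinate, contradicting $P\sim_S\emptyset$. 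In parallel, for each large $n$ select a lift $(\lev{Q}{n},\lev{G}{n})\in\overline{V}$ of $(P_n,F_n)$ with $\lev{Q}{n}\neq\emptyset$, moving within its $\sim_G$-class if the initial choice fails this. Applying the lemma immediately preceding Proposition \ref{thm:openhatj} then produces $\{g_n\}\subset G$ with $\lev{P^*}{}\subset{\rm LI}(\{g_n\,\lev{Q}{n}\}_n)$, and setting $(\lev{P}{n},\lev{F}{n}):=(g_n\,\lev{Q}{n},g_n\,\lev{G}{n})$ yields lifts in $\overline{V}$ that still project to $(P_n,F_n)$, because $G$ acts isometrically and $\overline{\pi}$ is constant on $\sim_G$-classes.

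The argument closes with a dichotomy on $\lev{P^*}{}$: either it is already maximal in ${\rm LS}(\{\lev{P}{n}\}_n)$, in which case $\kappa=\sigma$ and $\lev{P}{}=\lev{P^*}{}$ suffice; or we choose a maximal $\lev{P'}{}\supsetneq\lev{P^*}{}$ in ${\rm LS}$ and extract a subsequence $\{n_k\}$ with $\lev{P'}{}\in\hat{L}_V(\{\lev{P}{n_k}\}_k)$, then set $\lev{P}{}=\lev{P'}{}$. The main obstacle is certifying the two defining properties of this new candidate: for the first coordinate, the chain $P\subset\hat{\pi}(\lev{P'}{})\subset{\rm LI}(\{P_{n_k}\}_k)\subset{\rm LS}(\{P_{n_k}\}_k)\subset{\rm LS}(\{P_n\}_n)$ combined with maximality of $P$ forces $\hat{\pi}(\lev{P'}{})=P$; for the $S$-relation, the same well-definedness argument used to produce $\lev{P^*}{}$ gives $\lev{P'}{}\sim_S\emptyset$. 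In both branches, \eqref{eq:deflimcrono} delivers $(\lev{P}{},\emptyset)\in L_V(\lev{\kappa}{})$ at once, since the empty second coordinate imposes no further constraint.
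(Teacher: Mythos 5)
Your proposal is correct and follows essentially the paper's own route: part (a) is the same containment-preserving projection argument concluded via \eqref{eq:limitsimple}, and part (b) reuses the lifting machinery behind Remark \ref{rem:aux1} (the technical lemma before Proposition \ref{thm:openhatj} plus the maximality argument) together with the well-definedness fact that any element of $\overline{\pi}^{-1}((P,\emptyset))$ has empty second component. The only difference is that you unpack the paper's one-line proof of (b) in more detail, in particular using surjectivity and the $G$-action to guarantee that the lifted sequence consists of genuine $\overline{\pi}$-preimages of the $(P_n,F_n)$, which is a welcome clarification rather than a change of method.
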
 
\begin{proof}
	Assertion (b) is a direct consequence of \eqref{eq:deflimcrono}, Rem. \ref{rem:aux1} and the fact that any lift $(\lev{P}{},\lev{F}{})\in \overline{\pi}^{-1}((P,\emptyset))$ has $\lev{F}{}=\emptyset$, so let us focus on assertion (a). For this, recall that from the definition of the chronological limit, $\lev{P}{}\subset {\rm LI}(\{\lev{P}{n}\}_{n})$ and $\lev{F}{}\subset {\rm LI}(\{\lev{F}{n}\}_{n})$. As the projection is well behaved with contentions, we have that $P\subset {\rm LI}(\{P_n\}_{n})$ and $F \subset {\rm LI}(\{F_n\}_{n})$, which is enough to ensure that $(P,F)\in L_{M}(\{(P_n,F_n)\}_{n})$ (recall \eqref{eq:limitsimple}).
	
\end{proof}

The other cases (that is, when $(\lev{P}{},\lev{F}{})$ has one empty component or when $P\neq\emptyset\neq F$) are false in general, as it is proved by Examples \ref{ex:exemHarris} and \ref{ex:exe3}. On the first one there exists a sequence $\{q_n\}_{n}\subset V$ converging to a point of the form $(\lev{P}{},\emptyset)$, while its projection converges to a point $(P',\emptyset)$ with $\hat{\pi}(\lev{P}{})=P\subsetneq P'$. On the second example, the sequence $\{x_n\}_{n}$ converges to $(P,F)$ in $\overline{M}$, however $\{x_n\}_{n}$ has no convergent lift on the corresponding $\overline{V}$.

The first case is directly related with the non continuity of $\hat{\jmath}$. In fact, we can easily prove that:

\begin{prop}\label{prop:contparimtot}
	Let $\pi:V\rightarrow M$ a spacetime covering with $\overline{\pi}$ well defined and surjective. If $\overline{\pi}(\lev{P}{},\emptyset)=(P,\emptyset)$, $\overline{\pi}(\emptyset,\lev{F}{})=(\emptyset,F)$ and $M$ has no sequence with divergent lifts, the map $\overline{\pi}$ (and so, $\overline{\j}$) is continuous.
\end{prop}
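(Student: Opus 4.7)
The plan is to apply Prop. \ref{prop:limcont}: I will show that for every $\lev{\sigma}{}=\{(\lev{P}{n},\lev{F}{n})\}_n\subset \overline{V}$ and every $(\lev{P}{},\lev{F}{})\in L_V(\lev{\sigma}{})$ there is a subsequence $\kappa\subset \lev{\sigma}{}$ with $\overline{\pi}((\lev{P}{},\lev{F}{}))\in L_M(\overline{\pi}(\kappa))$. The natural division of cases is by which components of $(\lev{P}{},\lev{F}{})$ are empty.

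If $\lev{P}{}\neq\emptyset\neq\lev{F}{}$, Prop. \ref{prop:topgeneral}(a) immediately yields $\overline{\pi}((\lev{P}{},\lev{F}{}))\in L_M(\overline{\pi}(\lev{\sigma}{}))$, without even needing to pass to a subsequence.

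The substantive case is when exactly one component of the limit is empty; by symmetry, assume $\lev{P}{}=\emptyset$ and $\lev{F}{}\neq\emptyset$. By hypothesis, $\overline{\pi}((\emptyset,\lev{F}{}))=(\emptyset,F)$ with $F=\check{\pi}(\lev{F}{})$. From the definition of $L_V$ together with $\lev{F}{}\neq\emptyset$, one has $\lev{F}{}\in \check{L}_V(\{\lev{F}{n}\}_n)$, and in particular $\lev{F}{}\subset {\rm LI}(\{\lev{F}{n}\}_n)$, which forces $\lev{F}{n}\neq\emptyset$ for all but finitely many $n$. After restricting to this tail, the hypotheses of the proposition guarantee that each projected pair $(P_n,F_n):=\overline{\pi}((\lev{P}{n},\lev{F}{n}))$ satisfies $F_n=\check{\pi}(\lev{F}{n})$: if $\lev{P}{n}\neq\emptyset$ this is the first branch of \eqref{def:proy}, while if $\lev{P}{n}=\emptyset$ it is the second hypothesis $\overline{\pi}((\emptyset,\lev{F}{n}))=(\emptyset,\check{\pi}(\lev{F}{n}))$ of the proposition. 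Because $M$ admits no past divergent lifts, the past analogue of Prop. \ref{thm:hatjcont} gives continuity of $\check{\pi}:\check{V}\to \check{M}$; then Prop. \ref{prop:limcont} produces a further subsequence along which $F\in \check{L}_M(\{F_{n_k}\}_k)$. Since $(\emptyset,F)\in \overline{M}$ (guaranteed by $\overline{\pi}$ being well defined), definition \eqref{eq:deflimcrono} gives $(\emptyset,F)\in L_M(\{(P_{n_k},F_{n_k})\}_k)$, closing this case. The case $\lev{P}{}\neq\emptyset=\lev{F}{}$ is fully symmetric, relying on continuity of $\hat{\pi}$ instead.

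Once continuity of $\overline{\pi}$ is established, continuity of $\overline{\j}$ is automatic: $\overline{\i}$ is a quotient map and $\overline{\pi}=\overline{\j}\circ\overline{\i}$. The only real subtlety, and what I expect to be the main point requiring care, is the observation that $\lev{F}{}\neq\emptyset$ in the limit forces non-emptiness of $\{\lev{F}{n}\}$ along a tail, which is exactly what lets the partial-boundary continuity result be invoked term-by-term. Everything else is a clean assembly of Prop. \ref{prop:topgeneral}(a), the past version of Prop. \ref{thm:hatjcont}, and the subsequential continuity criterion Prop. \ref{prop:limcont}.
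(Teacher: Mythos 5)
Your argument is essentially the paper's own proof: split according to which components of the limit pair are empty, handle the case $\lev{P}{}\neq\emptyset\neq\lev{F}{}$ by Prop.~\ref{prop:topgeneral}(a), and reduce the one-empty-component case to the partial-boundary result, concluding via \eqref{eq:deflimcrono} together with the hypothesis $\overline{\pi}(\emptyset,\lev{F}{})=(\emptyset,F)$. Your extra check that $\lev{F}{}\subset{\rm LI}(\{\lev{F}{n}\}_n)$ forces $\lev{F}{n}\neq\emptyset$ along a tail, and that $F_n=\check{\pi}(\lev{F}{n})$ in every branch of \eqref{def:proy}, is a detail the paper leaves implicit, and it is a welcome addition.

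One step is mis-justified, although the repair is immediate. You argue: no past divergent lifts implies $\check{\pi}$ is continuous (past analogue of Prop.~\ref{thm:hatjcont}), and then ``Prop.~\ref{prop:limcont} produces a further subsequence along which $F\in \check{L}_M(\{F_{n_k}\}_k)$.'' Prop.~\ref{prop:limcont} is only a sufficient criterion for continuity; it has no converse, and continuity of a map between sequential spaces does not by itself convert topological convergence into membership in the limit operator, even along a subsequence --- this is precisely the first-order issue discussed in Section~\ref{sec:prellimits}. What you should invoke is the first implication in (the past analogue of) Prop.~\ref{thm:hatjcont}: since $M$ admits no sequences with divergent lifts, $\lev{F}{}\in\check{L}_{V}(\{\lev{F}{n}\}_n)$ yields $F\in\check{L}_{M}(\{F_n\}_n)$ directly, with no subsequence and no detour through continuity; this is exactly how the paper argues. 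With that substitution your proof is correct and coincides with the paper's.
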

  
  \begin{proof}
  For the continuity of $\overline{\jmath}$ is enough to show that, given a point $(\lev{P}{},\lev{F}{})\in \overline{V}$ and a sequence $\{(\lev{P}{n},\lev{F}{n})\}_{n} \subset \overline{V}$ with $(\lev{P}{},\lev{F}{})\in L_{V}(\{(\lev{P}{n},\lev{F}{n})\}_{n})$, then $(P,F)\in L_{M}(\{(P_n,F_n)\}_{n})$, where $(P,F)=\overline{\pi}(\lev{P}{},\lev{F}{})$ and $(P_n,F_n)=\overline{\pi}(\lev{P}{n},\lev{F}{n})$. If $\lev{P}{}\neq\emptyset\neq \lev{F}{}$, the result follows from Prop. \ref{prop:topgeneral} (a). If $\lev{F}{}=\emptyset$ (the other case is analogous) we have that $\lev{P}{}\in \hat{L}_{V}(\{\lev{P}{n}\}_{n})$ and so, from Prop. \ref{thm:hatjcont}, that $P\in \hat{L}_{M}(\{P_n\}_{n})$. Finally, from hypothesis, $\overline{\pi}(\lev{P}{},\emptyset)=(P,\emptyset)\in\overline{M}$, so $(P,\emptyset)\in L_{M}(\{(P_n,F_n)\}_{n})$.
  \end{proof}

\begin{rem}\label{rem:aux3}
	It is important to note that both $\overline{\pi}(\lev{P}{},\emptyset)=(P,\emptyset)$ and $\overline{\pi}(\emptyset,\lev{F}{})=(\emptyset,F)$ follow when $\pi$ is (future and past) tame, as it was proved by Lemma \ref{wellprojectedtame} and \eqref{def:proy}.
\end{rem}

Let us give a closer look to the previous proof. Observe that the non existence of divergent lifts is used precisely when we deal with limit points of the form $(\lev{P}{},\emptyset)$ or $(\emptyset,\lev{F}{})$. Therefore, and recalling that the existence of divergent lifts  can occur only when a limit terminal set is contained in other (bigger) terminal set (see Def. \ref{def:divlif}), we deduce that the continuity of $\hat{\jmath}$ and $\check{\jmath}$ is not necessary when the boundary on $M$ is formed only by timelike and spatial points (see also Corollary \ref{cor:Harrisresult}). Hence, 

\begin{cor}\label{prop:contnolight}
	Let $\pi:V\rightarrow M$ be a projection satisfying: (i) $\overline{\pi}$ is well defined and surjective, (ii) $\overline{\pi}(\lev{P}{},\emptyset)=(P,\emptyset)$ and $\overline{\pi}(\emptyset,\lev{F}{})=(\emptyset,F)$; and (iii) $\overline{M}$ has no lightlike boundary points. Then, the map $\overline{\jmath}$ is continuous. 
\end{cor}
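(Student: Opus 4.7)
The plan is to reduce continuity of $\overline{\jmath}:\overline{V}/G\to\overline{M}$ to continuity of the composition $\overline{\pi}=\overline{\jmath}\circ\overline{\imath}:\overline{V}\to\overline{M}$. Since by definition $\overline{V}/G$ carries the quotient topology induced by $\overline{\imath}$, the universal property of the quotient topology guarantees that $\overline{\jmath}$ is continuous if and only if $\overline{\jmath}\circ\overline{\imath}=\overline{\pi}$ is continuous. Consequently the remaining task is to establish sequential continuity of $\overline{\pi}$, which I will do through Prop. \ref{prop:limcont}.

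Concretely, fix any sequence $\{(\lev{P}{n},\lev{F}{n})\}_{n}\subset\overline{V}$ and a chronological limit $(\lev{P}{},\lev{F}{})\in L_V(\{(\lev{P}{n},\lev{F}{n})\}_{n})$, and set $(P_n,F_n):=\overline{\pi}(\lev{P}{n},\lev{F}{n})$, $(P,F):=\overline{\pi}(\lev{P}{},\lev{F}{})$. I need to show $(P,F)\in L_M(\{(P_n,F_n)\}_{n})$, and I case-split on the components of the limit. If $\lev{P}{}\neq\emptyset\neq\lev{F}{}$, Prop. \ref{prop:topgeneral}(a) applies directly and delivers the conclusion. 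If $\lev{F}{}=\emptyset$ (the case $\lev{P}{}=\emptyset$ being symmetric), hypothesis (ii) forces $(P,F)=(P,\emptyset)$ with $P=\hat{\pi}(\lev{P}{})$, and the definition of $L_V$ when one component vanishes yields $\lev{P}{}\in\hat{L}_V(\{\lev{P}{n}\}_{n})$, so in particular $\lev{P}{}\subset\mathrm{LI}(\{\lev{P}{n}\}_{n})$. Projecting and using that $\hat{\pi}$ preserves containments, $P\subset\mathrm{LI}(\{P_n\}_{n})$.

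The remaining ingredient is maximality of $P$ in $\mathrm{LS}(\{P_n\}_{n})$ among IPs, which is precisely where hypothesis (iii) enters. Since $(P,\emptyset)\in\overline{M}$, the footnote defining ``no lightlike boundary points'' (read with $\overline{M}$ in place of $\overline{V}$) asserts that no IP $P'$ strictly contains $P$, so $P$ is vacuously maximal among IPs in any subset, in particular in $\mathrm{LS}(\{P_n\}_{n})$. Therefore $P\in\hat{L}_M(\{P_n\}_{n})$, and by \eqref{eq:deflimcrono} this unpacks to $(P,\emptyset)\in L_M(\{(P_n,F_n)\}_{n})$, as required. Conceptually, hypothesis (iii) here takes over the role played by ``no sequences with divergent lifts'' in Prop. \ref{prop:contparimtot}: both obstruct exactly the same failure mode, namely that the projected $P$ sits strictly inside a larger IP in $\mathrm{LS}$. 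I anticipate no substantive obstacle; the only point requiring care is translating the footnote's wording into the maximality statement needed above, after which the quotient-topology reduction wraps up the continuity of $\overline{\jmath}$.
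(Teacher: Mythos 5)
Your proof is correct and follows essentially the same route as the paper: the paper derives this corollary by rerunning the proof of Prop.~\ref{prop:contparimtot} and observing that the no-divergent-lifts hypothesis is needed only to secure maximality of the projected IP in the ${\rm LS}$ when the limit has an empty component, which hypothesis (iii) supplies vacuously since no IP strictly contains $P$ when $(P,\emptyset)\in\overline{M}$. Your quotient-topology reduction and the case analysis via Prop.~\ref{prop:topgeneral}(a) match the paper's argument.
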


\smallskip

As we have mention at the beginning of the section, and in spite of the continuity, the openness of the partial maps $\hat{\jmath}$ and $\check{\jmath}$ is not enough to ensure the openness of $\overline{\jmath}$, as we can see on Example \ref{ex:exe3}. This means that an additional condition has to be imposed to obtain such an openness. In this sense, we will consider the condition of {\em finite chronology} whose properties will be studied in the following section.

\subsection{Group actions with the finite chronology property} \label{sec:totalfinite}
First of all, let us introduce the definition of finite chronology.
	
	\begin{defi}\label{def:finitelyachronal}
		Let $V$ be a spacetime and $G$ a group of isometries. We will say that the pair $(V,G)$ is {\em finitely chronological} if given two points $p,q\in V$ with $p\ll q$, there exist only a finite number of elements $g\in G$ such that $p\ll g\,q$. 
	\end{defi}
	
	The finite chronology property will be enough to ensure the openness of $\overline{\jmath}$  and it will also simplify the conditions to ensure when the map $\overline{\pi}$ is well defined and surjective. However, such a condition will not be enough to prove the continuity of $\hat{\jmath}$ or $\check{\jmath}$, as it is showed by Example \ref{ex:exe2}. Let us begin with a crucial lemma:

\begin{lemma}\label{lem:sequence}
	Assume that $(V,G)$ is finitely chronological and consider a point $p\in V$, a past-directed (resp. future-directed) chronological chain $\{p_n\}_{n} \subset V$ and a sequence $\{g_n\}_{n} \subset G$. If for all $n\in \mathbb{N}$, $p\ll g_n\,p_n$ (resp. $g_n\,p_n\ll p$), then there exists $n_0\in \mathbb{N}$ and a finite family $\{h_1,\dots,h_r\}\subset G$ such that for $n\geq n_0$, $g_n=h_i$ for some $i=1,\dots,r$. In fact, $n_0$ can be taken in such a way that $\{h_1,\dots,h_r\}\subset G(p,\{p_n\}_{n})$ (resp. $\{h_1,\dots,h_r\}\subset G(\{p_n\}_{n},p)$, where
	
	\[\begin{array}{c}
	G(p,\{p_n\}_{n}):=\{g\in G: p\ll g\,p_n\hbox{ for all }n\}\vspace{0.2cm}\\ 
	\left(G(\{p_n\}_{n},p):=\{g\in G: g\,p_n\ll p\hbox{ for all }n\} \right)
	\end{array}\]
	is a non empty finite set.
	
\end{lemma}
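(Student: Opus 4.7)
The plan is to reduce the problem to a fixed reference point, apply finite chronology there, and then extract the infinitely-repeating $g_n$-values as the finite set $\{h_1,\dots,h_r\}$. I will handle only the past-directed case explicitly; the future-directed case follows dually.

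\emph{Step 1 (reduction to a fixed reference).} Fix any index $N\in\mathbb{N}$. Since $\{p_n\}_n$ is past-directed chronological, $p_n\ll p_N$ for every $n>N$, and since the $G$-action is by time-oriented isometries, $g_n\,p_n\ll g_n\,p_N$. Combining with the hypothesis $p\ll g_n\,p_n$ gives $p\ll g_n\,p_N$ for every $n\geq N$.

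\emph{Step 2 (finite chronology applied to $p_N$).} I claim the set $S_N:=\{g\in G:\,p\ll g\,p_N\}$ is finite and non-empty. It contains $g_N$, so it is non-empty. For finiteness, put $q:=g_N\,p_N$; then $p\ll q$, and finite chronology gives that $\{h\in G:\,p\ll h\,q\}$ is finite. The change of variable $h=g\,g_N$ is a bijection between $S_N$ and this finite set, so $S_N$ is finite. By Step~1, $\{g_n\}_{n\geq N}\subset S_N$ takes only finitely many values, say $\{f_1,\dots,f_s\}$.

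\emph{Step 3 (isolating the infinitely-repeating values).} Among $\{f_1,\dots,f_s\}$, let $\{h_1,\dots,h_r\}$ be those taken by $\{g_n\}_n$ infinitely often. Each remaining $f_i$ appears only finitely often, so there exists $n_0\geq N$ such that $g_n\in\{h_1,\dots,h_r\}$ for all $n\geq n_0$. It remains to verify $\{h_1,\dots,h_r\}\subset G(p,\{p_n\}_n)$ and that this set is itself non-empty and finite. Fix $h_j$ and an arbitrary $k\in\mathbb{N}$. Pick $n$ with $g_n=h_j$ and $n\geq k$ (possible by infinite repetition). If $n>k$ then $p_n\ll p_k$ (past-directed chain), so $h_j\,p_n=g_n\,p_n\ll g_n\,p_k=h_j\,p_k$; combined with $p\ll g_n\,p_n$ this yields $p\ll h_j\,p_k$. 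If $n=k$ the conclusion is immediate. Hence $h_j\in G(p,\{p_n\}_n)$. Finally, $G(p,\{p_n\}_n)\subset\{g:\,p\ll g\,p_1\}$, which is finite by the argument of Step~2 (with $N=1$, using any $h_j$ in place of $g_N$ to guarantee non-emptiness), so $G(p,\{p_n\}_n)$ is a non-empty finite set containing $\{h_1,\dots,h_r\}$.

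The main subtlety is that finite chronology is phrased for pairs $p\ll q$, while what we really want to control is $\{g:\,p\ll g\,p_N\}$, for which we do not know whether $p\ll p_N$. The trick of replacing $p_N$ by the translate $q=g_N\,p_N$ (which satisfies $p\ll q$ by hypothesis) and using the left-regular action of $G$ to biject the two sets is the only non-routine point; everything else is a bookkeeping argument with the past-directed monotonicity of $\{p_n\}_n$. The future-directed case is proved in the same way using the dual observation that, whenever $\{g\in G:\, g\,q\ll p\}$ is non-empty, it is finite (which follows from the stated finite chronology via the bijection $g\leftrightarrow g^{-1}$ combined with a suitable change of reference point as above).
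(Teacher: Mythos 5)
Your proof is correct and follows essentially the same route as the paper's: reduce to a fixed index of the chain, use finite chronology to conclude that $\{g_n\}_{n}$ eventually takes only finitely many values, keep those values occurring infinitely often as $\{h_1,\dots,h_r\}$, and use the monotonicity of the chain to show they lie in $G(p,\{p_n\}_{n})$, which is finite since it sits inside the set associated to a single fixed index; in fact you are slightly more careful than the paper in justifying the finiteness of $S_N=\{g\in G: p\ll g\,p_N\}$ by translating with $g_N$, since $p\ll p_N$ need not hold. The only (trivial) slip is the direction of the change of variable in Step 2: it should read $g=h\,g_N$, i.e.\ $h=g\,g_N^{-1}$, rather than $h=g\,g_N$.
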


\begin{proof}
	The proof follows essentially by recalling that, for a fixed $k_0\in \mathbb{N}$ and $n\geq k_0$,
	
	\begin{equation}\label{eq:aux1}
	p\ll g_{n}\,p_{n}\ll g_{n}\,p_{k_0}.
	\end{equation}
	
	In particular, as there exist a finite number of elements $g\in G$ such that $p\ll g\,p_{k_0}$, $g_n$ should belong to a finite family of elements in $G$ for $n$ big enough. Moreover, we can take $\{h_1\dots,h_r\}\subset G$ such that, for all $h_i$, there exists a subsequence $\{g_{n^i_k}\}_{k}$ with $g_{n_k^i}=h_i$. In particular, there exists $n_0$ such that for each $n\geq n_0$ there exists $i(\equiv i(n))$ with $g_n=h_i$.
	
	For the second assertion, recall that the set $G(p,\{p_n\}_{n})$ is finite by the finitely chronological property. Moreover for each previous $h_i$, we know from \eqref{eq:aux1} that $p\ll h_i\,p_n$ for all $n$. Therefore, $\{h_1,\dots,h_r\}\subset G(p,\{p_n\}_{n})$ as desired.
\end{proof}

If we consider two points $p,p'\in V$ with $p\ll p'$, then it follows that $G(p',\{p_n\}_{n})\subseteq G(p,\{p_n\}_{n})$. This relation allow us to prove that the lifts of terminal sets are well behaved, at least when $(V,G)$ is finitely chronological, with respect the future and common pasts. Concretely,

\begin{lemma}\label{lem:pairachronal}
Consider $P\subset \downarrow F$ and take $\lev{P}{}, \lev{F}{}$ the corresponding lifts. If $(V,G)$ is finitely chronological then the set $G(\lev{P}{},\lev{F}{})$ defined by

\begin{equation}
G(\lev{P}{},\lev{F}{})=\{g\in G:\lev{P}{}\subset\downarrow g\,\lev{F}{}\}
\end{equation}
is non empty and finite.	
\end{lemma}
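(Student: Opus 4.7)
My plan is to prove the two assertions of Lemma \ref{lem:pairachronal} separately, with non-emptiness being the main content and finiteness following almost directly from the hypothesis. Write $\lev{P}{}=I^-(\{p_n\}_n)$ with $\{p_n\}$ future-directed and $\lev{F}{}=I^+(\{q_m\}_m)$ with $\{q_m\}$ past-directed, so $x_n:=\pi(p_n)$ defines $P$ and $y_m:=\pi(q_m)$ defines $F$. I would first observe that the common past of $g\,\lev{F}{}$ equals $\{r\in V:r\ll g\,q_m$ for every $m\}$ (each element of $\lev{F}{}$ lies above some $q_m$), so $\lev{P}{}\subset\downarrow g\,\lev{F}{}$ is equivalent to the existence of witnesses $r_n$ along a cofinal subset of the $p_n$'s satisfying $p_n\ll r_n\ll g\,q_m$ for every $m$.

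For non-emptiness, I would exploit $P\subset\downarrow F$ to pick, for each $n$, a point $z_n\in M$ with $x_n\ll z_n$ and $z_n\ll y_m$ for every $m$. By Proposition \ref{prop:caracrel}, a suitable $G$-translate of any lift of $z_n$ produces $r_n$ with $\pi(r_n)=z_n$ and $p_n\ll r_n$; applying the same proposition to $z_n\ll y_m$ yields elements $g_{n,m}\in G$ with $r_n\ll g_{n,m}\,q_m$. Fixing $n$ and letting $m$ vary, Lemma \ref{lem:sequence} (past-directed case) delivers a single $g_n$ with $r_n\ll g_n\,q_m$ for \emph{all} $m$. Chaining across $n$ gives $p_1\ll p_n\ll r_n\ll g_n\,q_1$, so $p_1\ll g_n\,q_1$ for every $n$; the finite chronology of $(V,G)$ then forces $\{g_n\}_n$ to take only finitely many values, and a pigeonhole argument produces a subsequence $\{n_k\}$ with $g_{n_k}\equiv g$ constant. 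Since $\{p_{n_k}\}$ is cofinal in $\{p_n\}$ and $\downarrow g\,\lev{F}{}$ is a past set, $\lev{P}{}=\bigcup_k I^-(p_{n_k})\subset\downarrow g\,\lev{F}{}$, hence $g\in G(\lev{P}{},\lev{F}{})$.

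For finiteness, any $g\in G(\lev{P}{},\lev{F}{})$ satisfies $p_1\in\downarrow g\,\lev{F}{}$, so there is $r\in V$ with $p_1\ll r\ll g\,q_1$, and in particular $p_1\ll g\,q_1$. Hence $G(\lev{P}{},\lev{F}{})\subseteq\{g\in G:p_1\ll g\,q_1\}$, which is finite by hypothesis. The main obstacle is the synchronization step in the non-emptiness argument: per index one can readily find $g_n$ placing $p_n$ inside $\downarrow g_n\,\lev{F}{}$, but extracting a single $g$ that works for a cofinal family of indices requires both the past-directed structure of $\{q_m\}$ through Lemma \ref{lem:sequence} (which controls the dependence on $m$) and the finite chronology hypothesis applied to $(p_1,q_1)$ (which controls the dependence on $n$); weakening either ingredient would appear to break the argument.
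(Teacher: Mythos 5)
Your proof is correct and follows essentially the same route as the paper's: both reduce the containment $\lev{P}{}\subset\downarrow g\,\lev{F}{}$ to the relations $p_n\ll g\,q_m$, lift chronological relations via Proposition \ref{prop:caracrel}, and invoke Lemma \ref{lem:sequence} to control the dependence on $m$. The only difference is in the final bookkeeping over $n$: the paper writes $G(\lev{P}{},\lev{F}{})=\bigcap_n G(p_n,\{q_m\}_m)$ and uses that a nested sequence of non-empty finite sets has non-empty finite intersection, whereas you pin the chosen elements $g_n$ inside the finite set $\{g\in G: p_1\ll g\,q_1\}$ and extract a constant subsequence by pigeonhole together with cofinality --- the same combinatorial fact in a slightly different guise.
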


\begin{proof}

As a first step, we are going to characterize the set $G(\lev{P}{},\lev{F}{})$ in terms of the sequences defining $\lev{P}{}$ and $\lev{F}{}$. In this sense, let $\{x_n\}_{n}$ and $\{y_n\}_{n}$  be chronological sequences defining $P$ and $F$ resp., and $\{p_n\}_{n}, \{q_n\}_{n}$ the corresponding chronological lifts defining $\lev{P}{}$ and $\lev{F}{}$. Observe that the following chain of equivalences follow

\[
\begin{array}{rl}
g\in G(\lev{P}{},\lev{F}{}) & \iff \lev{P}{}\subset \downarrow g\, \lev{F}{}\\
 & \iff p_n\in g\,\lev{F}{} \hbox{ for all $n\in \mathbb{N}$}\\
 & \iff p_n\ll g\, q_m \hbox{ for all $n,m\in \mathbb{N}$}\\
 & \iff g\in G(p_n,\{q_m\}_{m}) \hbox{ for all $n\in \mathbb{N}$}
\end{array} 
\]

In particular, 

\begin{equation}\label{eq:aux2}
G(\lev{P}{},\lev{F}{})=\cap_{n\in \mathbb{N}} G(p_n,\{q_m\}_{m}).
\end{equation}

As a second step, recall that from hypothesis $P\subset \downarrow F$, and so, $x_n\ll y_m$ for all $n,m\in \mathbb{N}$. Hence, Prop. \ref{prop:caracrel} ensures that there exists a sequence $\{g_m\}_{m} \subset G$ such that $p_n \ll g_m\,q_m$ and so, from Lemma \ref{lem:sequence}, $G(p_n,\{q_m\}_{n})$ is non empty and finite for all $n$.

Then, $G(\lev{P}{},\lev{F}{})$ is the intersection of a numerable family of non empty and finite sets ordered by $G(p_{n+1},\{q_m\}_{m})\subset G(p_{n},\{q_m\}_{m})$. Therefore, it is a non empty and finite set.
\end{proof}

In particular, and as a consequence of previous Lemma and Props. \ref{prop:levpares} and \ref{prop:welldefpinotame}, we have that:

\begin{cor}\label{cor:welldefinedachronal}
	Let $\pi:V\rightarrow M$ be a spacetime covering with $(V,G)$ finitely chronological and assume that $M$ does not admit sequences with (future or past) divergent lifts. Then, the map $\overline{\pi}:\overline{V}\rightarrow \overline{M}$ is well defined and surjective.
\end{cor}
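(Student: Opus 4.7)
The strategy is to realize this as a direct application of Proposition \ref{prop:welldefpinotame}. That proposition yields the well-definedness and surjectivity of $\overline{\pi}$ under two hypotheses: (i) $M$ admits no sequences with future or past divergent lifts, and (ii) every $(P,F)\in\overline{M}$ with $P\neq\emptyset\neq F$ admits a lift in $\overline{V}$. Hypothesis (i) is granted by assumption, so the whole proof reduces to extracting (ii) from the finite chronology of $(V,G)$.

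To verify (ii), fix $(P,F)\in\overline{M}$ with both components non-empty and choose arbitrary lifts $\lev{P}{}\in\hat{V}$ of $P$ and $\lev{F}{}\in\check{V}$ of $F$. Because $P\sim_{S}F$ forces $P\subset \downarrow F$, Lemma \ref{lem:pairachronal} (whose proof is the precise place where finite chronology is consumed) guarantees that the set
\[
G(\lev{P}{},\lev{F}{})=\{g\in G:\lev{P}{}\subset \downarrow g\,\lev{F}{}\}
\]
is non-empty. Pick $g_{0}\in G(\lev{P}{},\lev{F}{})$ and set $\lev{F'}{}:=g_{0}\,\lev{F}{}$; then $\lev{P}{}$ and $\lev{F'}{}$ are lifts of $P$ and $F$ respectively, with $\lev{P}{}\subset\downarrow\lev{F'}{}$.

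To invoke Proposition \ref{prop:levpares} it still remains to produce the companion inclusion $\lev{F'}{}\subset \uparrow\lev{P}{}$, but this is automatic for non-empty indecomposable sets. Indeed, since $\lev{F'}{}=g_{0}\,\lev{F}{}$ is an IF and hence coincides with $I^{+}(\lev{F'}{})$, every $q\in\lev{F'}{}$ admits an $r\in\lev{F'}{}$ with $r\ll q$. The inclusion $\lev{P}{}\subset\downarrow\lev{F'}{}$ unpacks to $p\ll q'$ for all $p\in\lev{P}{}$ and $q'\in\lev{F'}{}$, so in particular $p\ll r$ for every $p\in\lev{P}{}$; thus $r$ is a common future point of $\lev{P}{}$, and $q\in I^{+}(r)\subset \uparrow\lev{P}{}$. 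Proposition \ref{prop:levpares} therefore supplies a genuine lift $(\lev{P}{}_{*},\lev{F}{}_{*})\in\overline{V}$ projecting to $(P,F)$, which is (ii); applying Proposition \ref{prop:welldefpinotame} closes the argument.

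No serious obstacle is anticipated: the two substantial ingredients (finite chronology, funnelled through Lemma \ref{lem:pairachronal}, and the absence of divergent lifts, funnelled through Proposition \ref{prop:welldefpinotame}) are already packaged, and only the short compatibility step turning $\lev{P}{}\subset\downarrow\lev{F'}{}$ into $\lev{F'}{}\subset\uparrow\lev{P}{}$ needs an explicit line of reasoning.
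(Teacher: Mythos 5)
Your proposal is correct and follows essentially the same route as the paper, which obtains the corollary precisely by combining Lemma \ref{lem:pairachronal} (where finite chronology is used) with Propositions \ref{prop:levpares} and \ref{prop:welldefpinotame}. The only detail the paper leaves implicit—that $\lev{P}{}\subset\downarrow\lev{F'}{}$ already forces $\lev{F'}{}\subset\uparrow\lev{P}{}$ because $\lev{F'}{}$ is a future set—is exactly what you verify, and your verification is correct.
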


At this point a natural question arise at the point set level: is there any relation between $\overline{\pi}^{-1}((P,F))$ and the set $G(\lev{P}{},\lev{F}{})$. Intuitively, one can expect that for a fixed lift $\lev{P}{}$, the set $G(\lev{P}{},\lev{F}{})$ determines all the pairs of the form $(\lev{P}{},g\,\lev{F}{})\in \overline{V}$ with projection $(P,F)$. However, as we recall in Rem. \ref{rem:conjuntosSrelated}, it is not clear that, in general, all the lifts preserving the relation with the common future (or past) are $S$-related. Again, the finite chronology condition will be enough for this, as we will see on Lemma \ref{lem:lem4.12}. In order to prove such a lemma, we need first the following technical result:

\begin{lemma}\label{lem:aux1} 
	Let $\lev{P}{},\lev{P'}{}\in \hat{V}$ (resp. $\lev{F}{},\lev{F'}{}\in \check{V}$) be two points of the future (past) causal completion projecting to the same set $P\in \hat{M}$ ($F\in\check{M}$). Suppose one of the following situations:
	\begin{itemize}
		\item[(H1)] $(V,G)$ is finitely chronological and there exists $p\in V$ such that $\lev{P}{},\lev{P'}{}\subset I^-(p)$ ($\lev{F}{},\lev{F'}{}\subset I^+(p)$).
		
		\item[(H2)] $G$ is finite.
		
	\end{itemize}
	Then there exists $h'$ such that $\lev{P}{}=h'\,\lev{P'}{}$ ($\lev{F}{}=h'\,\lev{F'}{}$).
	In particular, it follows that if $G$ is finite, the projection $\pi$ is future (past) tame.

\end{lemma}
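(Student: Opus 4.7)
My plan is to treat the two hypotheses (H1) and (H2) in parallel, in both cases starting from the identification $\hat{\pi}^{-1}(P)=\bigcup_{g\in G}g\,\lev{P}{}$ established at the beginning of Section \ref{sec:Partial}, which yields $\lev{P'}{}\subset\bigcup_{g\in G}g\,\lev{P}{}$. The goal then splits into two sub-steps: (a) find $g^{*}\in G$ with $\lev{P'}{}\subset g^{*}\,\lev{P}{}$, and (b) upgrade this inclusion to the equality $\lev{P'}{}=g^{*}\,\lev{P}{}$. The past case is completely analogous, so I would only write up the future one.

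For step (a) under hypothesis (H2), the union $\bigcup_{g\in G}g\,\lev{P}{}$ is finite, so writing $\lev{P'}{}=\bigcup_{g\in G}(\lev{P'}{}\cap g\,\lev{P}{})$ exhibits $\lev{P'}{}$ as a finite union of past sets, and the indecomposability of $\lev{P'}{}$ (in the finite-union version) immediately provides the required $g^{*}$. Under hypothesis (H1), $G$ may be infinite, so I would use a nested finite sets argument. Fixing chains $\{p'_n\}$ for $\lev{P'}{}$ and $\{p_m\}$ for $\lev{P}{}$, I define $G_n:=\{g\in G : p'_n\in g\,\lev{P}{}\}$; each $G_n$ is nonempty, and the sequence is decreasing because any past set containing $p'_{n+1}$ also contains $p'_n$. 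Finiteness of $G_n$ follows from finite chronology: if $g\in G_n$, then $p'_n\ll g\,p_m$ for some $m$, and $p_m\ll p$ gives $p'_n\ll g\,p$; since $p'_n\ll p$, Definition \ref{def:finitelyachronal} bounds the set $\{g : p'_n\ll g\,p\}$ to be finite. A nested sequence of nonempty finite sets has nonempty intersection, and any $g^{*}$ in it satisfies $p'_n\in g^{*}\,\lev{P}{}$ for every $n$, hence $\lev{P'}{}\subset g^{*}\,\lev{P}{}$ by the past-set property.

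For step (b), I would run the symmetric argument with $\lev{P}{}$ and $\lev{P'}{}$ swapped (which works in both hypotheses, since $\lev{P}{}\subset I^{-}(p)$ as well) to obtain $h^{*}\in G$ with $\lev{P}{}\subset h^{*}\,\lev{P'}{}$. The chain $\lev{P'}{}\subset g^{*}\,\lev{P}{}\subset g^{*}h^{*}\,\lev{P'}{}$ then reduces the problem to showing that $g^{*}h^{*}$ has finite order in $G$. Under (H2) this is automatic. Under (H1), I would iterate $(g^{*}h^{*})^{-1}\lev{P'}{}\subset\lev{P'}{}$ to deduce $(g^{*}h^{*})^{-k}p'_n\ll p$ for all $k\geq 0$, which rewrites as $(g^{*}h^{*})^{k}\in\{g : p'_n\ll g\,p\}$; by finite chronology this target set is finite, so the powers $\{(g^{*}h^{*})^{k}\}_{k}$ must repeat, forcing $(g^{*}h^{*})^{r}=e$ for some $r>0$. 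Then the chain $\lev{P'}{}\subset g^{*}h^{*}\,\lev{P'}{}\subset\cdots\subset(g^{*}h^{*})^{r}\,\lev{P'}{}=\lev{P'}{}$ collapses all inclusions to equalities, and sandwiching $g^{*}\,\lev{P}{}$ between $\lev{P'}{}$ and $g^{*}h^{*}\,\lev{P'}{}=\lev{P'}{}$ yields $\lev{P'}{}=g^{*}\,\lev{P}{}$, so $h'=g^{*}$ works.

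The main obstacle I anticipate is step (b) under hypothesis (H1): while step (a) is a clean nested-intersection argument, closing the loop to an equality requires trapping an a priori infinite orbit of powers of $g^{*}h^{*}$ inside a finite set. This is precisely where the common upper bound $p$ is essential, as it provides a single target point against which finite chronology can be applied uniformly for every $k$. The final ``in particular'' statement is then immediate: hypothesis (H2) places no requirement on $\lev{P}{}$ and $\lev{P'}{}$ beyond their common projection, so it directly gives future tameness of $\pi$ whenever $G$ is finite.
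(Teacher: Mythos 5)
Your argument is correct, and its overall architecture matches the paper's: obtain two mutual inclusions $\lev{P'}{}\subset g^{*}\,\lev{P}{}$ and $\lev{P}{}\subset h^{*}\,\lev{P'}{}$, then show the composite $g^{*}h^{*}$ has finite order by trapping its powers in the finite set $\{g\in G: p'_n\ll g\,p\}$ supplied by finite chronology against the common upper bound $p$, and collapse the resulting chain of inclusions — this closing step is essentially identical to the paper's iteration $\lev{P}{}\supset g\,\lev{P}{}\supset (g)^2\,\lev{P}{}\supset\dots$ with $(g)^{i_0}=e$. Where you genuinely differ is in how the first inclusion is produced: the paper works with the generating chains, uses Prop.~\ref{prop:caracrel} to get $g_n\,p_n\ll p'_{m(n)}$ and then extracts a constant subsequence via Lemma~\ref{lem:sequence} (trivially under (H2)), whereas you avoid subsequence extraction altogether, using under (H1) a nested-intersection-of-nonempty-finite-sets argument on $G_n=\{g:p'_n\in g\,\lev{P}{}\}$ (the same device the paper reserves for Lemma~\ref{lem:pairachronal}), and under (H2) the fact that an IP cannot be a finite union of proper past subsets, which rests on the standard observations that the intersection of two past sets is a past set and that two-set indecomposability propagates to finite unions by induction — facts not stated in the paper but easily supplied. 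Your route buys a cleaner, subsequence-free step (a) at the cost of invoking these auxiliary facts; the paper's route stays entirely within its Lemma~\ref{lem:sequence} machinery. Two cosmetic points: your conclusion reads $\lev{P'}{}=g^{*}\,\lev{P}{}$, so the element demanded by the statement is $h'=(g^{*})^{-1}$ rather than $g^{*}$; and in step (a) under (H2) you should discard the group elements for which $\lev{P'}{}\cap g\,\lev{P}{}=\emptyset$ before applying indecomposability, which is harmless since at least one intersection is nonempty.
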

\begin{proof}
	Let $\{p_n\}_{n},\{p'_n\}_{n}$ be future chronological chains defining $\lev{P}{}$ and $\lev{P'}{}$ resp. As both sets project onto the same $P$, it follows that the projection of such sequences $\{x_n\}_{n},\{x'_n\}_{n}$ generate $P$. In particular, for each $n$ there exists $m(n)$ big enough such that $x_n\ll x'_{m(n)}$. We will consider $\{m(n)\}_{n}$ a strictly increasing sequence, so $\{x'_{m(n)}\}_{n}$ is a subsequence of $\{x'_n\}_{n}$ and generates the same $P$ (and, accordingly, $\{p'_{m(n)}\}_{n}$ generates $\lev{P'}{}$). From Prop. \ref{prop:caracrel} it follows that there exists a sequence $\{g_n\}_{n} \subset G$ such that $g_n\,p_n\ll p'_{m(n)}$ for all $n$.
	
	Now observe that, in either situation (H1) nor (H2), and up to a subsequence, $\{g_n\}_{n}$ can be considered a constant sequence (say $g_n=h\in G$ for all $n$). In the case that $G$ is finite the argument is straightforward. In the other case, recall that from (H1) we have that $g_n\,p_n\ll p'_{m(n)}\ll p$, and so the assertion follows from Lemma \ref{lem:sequence}. Therefore, $h\,p_n\ll p'_{m(n)}$ for all $n$, and hence, $h\,\lev{P}{}\subset \lev{P'}{}$.
	By interchanging the roles of $\lev{P}{}$ and $\lev{P'}{}$ we find another $h'$ such that $h'\,\lev{P'}{}\subset \lev{P}{}$. 
	
	Now, we can join both contentions together in the following way 
	
	\begin{equation}\label{eq:aux3} 
	g\,\lev{P}{}\subset h'\,\lev{P'}{}\subset \lev{P}{} 
	\end{equation} 
	for $g=h'h$; and then construct the chain:
\[
	\lev{P}{}\supset g\,\lev{P}{}\supset\,(g)^2\,\lev{P}{}\supset\dots\supset (g)^{n}\,\lev{P}{}\supset \dots
	\] 
	where $(g)^i$ denotes the iteration of the action by $g$ $i$-times. Now observe that under the hypothesis of the lemma, there exists $i_0$ such that $(g)^{i_0}=e$. This assertion is again straightforward under the assumption of $G$ finite, so let us focus on the hypothesis (H1). If by contradiction $(g)^{i}\neq(g)^{j}$ for all $i\neq j$, and recalling that $\lev{P}{}\subset I^-(p)$, we deduce that $(g)^i\,\lev{P}{}\subset I^-(p)$ for all $i$. which contradicts that $(V,G)$ is finitely chronological (the point $p$ will be chronologically related with  $(g)^i\,q$ for any $q\in \lev{P}{}$ and $i\in \mathbb{N}$).
	
	Summarizing we deduce that $g\,\lev{P}{}=\lev{P}{}$ and from \eqref{eq:aux3} we obtain that $\lev{P}{}=h'\,\lev{P'}{}$, as desired.	
\end{proof}

\begin{rem}\label{rem:aux2}
	Observe that we have also proved in previous lemma that if $g\,\lev{P}{}\subset \lev{P}{}$ for some $g$ and, or $G$ is finite, or $(V,G)$ is finitely chronological and there exists $p\in V$ with $\lev{P}{}\subset I^-(p)$, then $g\, \lev{P}{}=\lev{P}{}$ (an analogous result for past sets follows). 
\end{rem}

\begin{lemma}\label{lem:lem4.12}
	Assume that $(V,G)$ is finitely chronological. If $\lev{P}{}$ and $\lev{F}{}$ are terminal sets with $\check{\pi}(\lev{P}{})=P\sim_S F=\hat{\pi}(\lev{F}{})$, then  $\lev{P}{}\sim_S g\,\lev{F}{}$ for all $g\in G(\lev{P}{},\lev{F}{})$.
\end{lemma}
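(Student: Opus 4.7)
The plan is to unpack the definition of the $S$-relation \eqref{eSz} and verify the four conditions for $\lev{P}{}\sim_{S} g\,\lev{F}{}$, exploiting the finitely chronological hypothesis through Lemma \ref{lem:aux1} and Remark \ref{rem:aux2}. The containment $\lev{P}{}\subset \downarrow g\,\lev{F}{}$ is built into the definition of $G(\lev{P}{},\lev{F}{})$, and the dual containment $g\,\lev{F}{}\subset \uparrow\lev{P}{}$ follows from the usual duality argument (any $q\in g\,\lev{F}{}$ admits $q'\in g\,\lev{F}{}$ with $q'\ll q$, and every $p\in \lev{P}{}$ satisfies $p\ll q'$ because $\lev{P}{}\subset \downarrow g\,\lev{F}{}$, so $q'\in \uparrow\lev{P}{}$ and hence $q\in \uparrow\lev{P}{}$). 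The substantive content is the two maximality conditions, and I would address them by essentially the same argument applied to past and future sets respectively.

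For the maximality of $\lev{P}{}$ as an IP in $\downarrow g\,\lev{F}{}$, I would argue by contradiction: suppose $\lev{P}{}\subsetneq \lev{P'}{}$ with $\lev{P'}{}\subset \downarrow g\,\lev{F}{}$. Projecting by $\hat{\pi}$ preserves both inclusions, so $P\subset \hat{\pi}(\lev{P'}{})\subset \downarrow F$. Since $P\sim_S F$ forces $P$ to be maximal in $\downarrow F$, we conclude $\hat{\pi}(\lev{P}{})=\hat{\pi}(\lev{P'}{})=P$. Now, picking any $q\in g\,\lev{F}{}$, the containment $\lev{P'}{}\subset \downarrow g\,\lev{F}{}$ implies $\lev{P'}{}\subset I^-(q)$, so in particular $\lev{P}{},\lev{P'}{}\subset I^-(q)$. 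We are then in situation (H1) of Lemma \ref{lem:aux1}, so there exists $h\in G$ with $\lev{P}{}=h\,\lev{P'}{}$. But then $h\,\lev{P'}{}\subsetneq \lev{P'}{}$, contradicting Remark \ref{rem:aux2} (which applies because $\lev{P'}{}\subset I^-(q)$ and $(V,G)$ is finitely chronological).

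The maximality of $g\,\lev{F}{}$ as an IF in $\uparrow\lev{P}{}$ proceeds dually. If $g\,\lev{F}{}\subsetneq \lev{F'}{}\subset \uparrow\lev{P}{}$, projecting by $\check{\pi}$ and using that $F$ is maximal in $\uparrow P$ yields $\check{\pi}(g\,\lev{F}{})=\check{\pi}(\lev{F'}{})=F$. For any fixed $p\in \lev{P}{}$, the common-future description forces $\uparrow\lev{P}{}\subset I^+(p)$, so both $g\,\lev{F}{}$ and $\lev{F'}{}$ sit in $I^+(p)$; applying Lemma \ref{lem:aux1}(H1) (past version) produces $h\in G$ with $g\,\lev{F}{}=h\,\lev{F'}{}\subsetneq \lev{F'}{}$, again contradicting Remark \ref{rem:aux2}.

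The main obstacle I anticipate is book-keeping rather than conceptual: one must verify that all the hypotheses of Lemma \ref{lem:aux1}(H1) are in place (in particular finding the auxiliary point $p$ or $q$ containing both candidates in its past or future), and that the projections genuinely coincide so that the lemma is applicable. Everything else reduces to the defining properties of common pasts/futures and to the maximality built into the $S$-relation between $P$ and $F$ downstairs.
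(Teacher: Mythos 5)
Your proof is correct and follows essentially the paper's own route: the key mechanism in both is Lemma \ref{lem:aux1}(H1) combined with Remark \ref{rem:aux2}, together with the maximality of $P$ (resp.\ $F$) inside $\downarrow F$ (resp.\ $\uparrow P$) to see that the competing lifts share the same projection. If anything, your write-up is slightly more complete, since you verify both maximality conditions of \eqref{eSz} explicitly, whereas the paper reduces WLOG to $g=e$ and only details the case of a strictly larger IF inside $\uparrow \lev{P}{}$.
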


\begin{proof}
	 Assume without loss of generality that $e\in G(\lev{P}{},\lev{F}{})$, and so, that $\lev{F}{}\subset \uparrow \lev{P}{}$. By contradiction, let us assume that $\lev{P}{}$ is not $S$-related with $\lev{F}{}$. Recalling Rem. \ref{rem:conjuntosSrelated}, we ensure the existence of a terminal set $\lev{F'}{}$ with $\lev{F}{}\subsetneq \lev{F'}{}\subset\uparrow \lev{P}{}$ and satisfying that $\check{\pi}(\lev{F}{})=\check{\pi}(\lev{F'}{})$. As $(V,G)$ is finite chronological and there exists $p\in V$ such that $\lev{F'}{},\lev{F}{}\subset I^+(p)$ (take any $p\in \lev{P}{}$), Lemma \ref{lem:aux1} ensures that there exist $h\in G$ such that $\lev{F'}{}=h\,\lev{F}{}$. But then, recalling Rem \ref{rem:aux2}, we arrive to a contradiction with $\lev{F}{}\subsetneq \lev{F'}{}=h\,\lev{F}{}$.
\end{proof}

With all previous machinery set, we are now in conditions to prove the openness of $\overline{\jmath}$ under the assumption of finite chronology:

\begin{prop}\label{prop:jopen}
	Let $\pi:V\rightarrow M$ be spacetime covering projection with $(V,G)$ finitely chronological and assume that $\overline{\pi}$ is well defined and surjective. Then, the map $\overline{\pi}$ induces an open map $\overline{\j}$ from $\overline{V}/G$ to $\overline{M}$.
\end{prop}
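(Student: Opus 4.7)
The plan is to apply Prop.~\ref{prop:limcont} to $\overline{\jmath}^{-1}$. That is, given $\sigma=\{(P_n,F_n)\}_n\subset\overline{M}$ and $(P,F)\in L_M(\sigma)$, I want to exhibit a subsequence $\{n_k\}$, $S$-related lifts $(\lev{P}{n_k},\lev{F}{n_k})\in\overline{V}$ of $(P_{n_k},F_{n_k})$, and an $S$-related lift $(\lev{P}{},\lev{F}{})\in\overline{V}$ of $(P,F)$ satisfying $(\lev{P}{},\lev{F}{})\in L_V(\{(\lev{P}{n_k},\lev{F}{n_k})\}_k)$. By \eqref{deflimitQ} this yields $\overline{\jmath}^{-1}((P,F))\in L_G(\overline{\jmath}^{-1}(\kappa))$ and therefore continuity of $\overline{\jmath}^{-1}$, i.e.\ openness of $\overline{\jmath}$.

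I would split by the shape of $(P,F)$. Suppose first that one component is empty, say $F=\emptyset$. Then by \eqref{eq:deflimcrono} the limit condition reduces to $P\in\hat{L}_M(\{P_n\})$. Openness of $\hat{\jmath}$ (Prop.~\ref{thm:openhatj}) supplies lifts $\lev{P}{n_k}$ and $\lev{P}{}$ with $\lev{P}{}\in\hat{L}_V(\{\lev{P}{n_k}\}_k)$. Completeness (Thm.~\ref{teo:propcomple}(iii)) extends each $\lev{P}{n}$ to a pair in $\overline{V}$; the past coordinate is unconstrained in \eqref{eq:deflimcrono}, and the well-definedness of $\overline{\pi}(\lev{P}{},\emptyset)=(P,\emptyset)$ forces the completion of $\lev{P}{}$ to have empty past, giving the correct representative. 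The case $P=\emptyset$ is symmetric.

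The core case is $P\neq\emptyset\neq F$, where \eqref{eq:limitsimple} turns the hypothesis into $P\subset\mathrm{LI}(\{P_n\})$ and $F\subset\mathrm{LI}(\{F_n\})$. Start by fixing any lift $(\lev{P}{}_0,\lev{F}{}_0)\in\overline{V}$ of $(P,F)$ (available by surjectivity of $\overline{\pi}$). Apply the lemma preceding Prop.~\ref{thm:openhatj} and its past analogue to obtain lifts $\lev{P}{n}$ of $P_n$ with $\lev{P}{}_0\subset\mathrm{LI}(\{\lev{P}{n}\})$ and lifts $\lev{F}{n}^{(0)}$ of $F_n$ with $\lev{F}{}_0\subset\mathrm{LI}(\{\lev{F}{n}^{(0)}\})$. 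These two families are not a priori $S$-related pairwise, so for each $n$ I select $g_n\in G(\lev{P}{n},\lev{F}{n}^{(0)})$, which is non-empty and finite by Lemma~\ref{lem:pairachronal}; Lemma~\ref{lem:lem4.12} then gives $(\lev{P}{n},g_n\lev{F}{n}^{(0)})\in\overline{V}$.

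The main obstacle is to control the translates $g_n$ so that they collectively respect a consistent lift of the limit. Fix $p_0\in\lev{P}{}_0$ and $q_0\in\lev{F}{}_0$ with $p_0\ll q_0$ (available since $\lev{P}{}_0\subset\downarrow\lev{F}{}_0$). For $n$ large, $p_0\in\lev{P}{n}$ and $q_0\in\lev{F}{n}^{(0)}$; combined with $\lev{P}{n}\subset\downarrow g_n\lev{F}{n}^{(0)}$ this yields $p_0\ll g_n q_0$. Finite chronology of $(V,G)$ confines $\{g_n\}$ to a finite subset of $G$, so along a subsequence $g_n\equiv g_0$. By equivariance $g_0\lev{F}{}_0\subset\mathrm{LI}(\{g_0\lev{F}{n}^{(0)}\})$, and a direct check (take $p\in\lev{P}{}_0$, note $p\in\lev{P}{n}\subset\downarrow g_0\lev{F}{n}^{(0)}$ for large $n$, and use that any $q\in\lev{F}{}_0$ eventually lies in $\lev{F}{n}^{(0)}$) shows $\lev{P}{}_0\subset\downarrow g_0\lev{F}{}_0$, so $g_0\in G(\lev{P}{}_0,\lev{F}{}_0)$. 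A last application of Lemma~\ref{lem:lem4.12} gives $(\lev{P}{}_0,g_0\lev{F}{}_0)\in\overline{V}$, and \eqref{eq:limitsimple} produces $(\lev{P}{}_0,g_0\lev{F}{}_0)\in L_V(\{(\lev{P}{n_k},g_0\lev{F}{n_k}^{(0)})\}_k)$, which is the required convergence.
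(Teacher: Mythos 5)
Your proof is correct and follows essentially the paper's own argument: the empty-component case is handled exactly as in Prop.~\ref{prop:topgeneral}(b), and in the core case $P\neq\emptyset\neq F$ you use Lemmas \ref{lem:pairachronal} and \ref{lem:lem4.12} to produce $S$-related lifts of the sequence, finite chronology to stabilize the group elements along a subsequence, and \eqref{eq:limitsimple} together with Prop.~\ref{prop:limcont} (via \eqref{deflimitQ}) to conclude openness. The only deviation is organizational and harmless: you anchor the construction on a fixed $S$-related lift of the limit and stabilize the $g_n$ by applying Definition~\ref{def:finitelyachronal} directly to a fixed pair $p_0\ll q_0$, whereas the paper works from the generating chronological chains and invokes Lemma~\ref{lem:sequence}.
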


\begin{proof}
	Let $\{(P_n,F_n)\}_{n}\subset \overline{M}$ be a sequence and $(P,F)\in \overline{M}$ a point such that $(P,F)\in L_{M}(\{(P_n,F_n)\}_{n})$.  Our aim is to show that, up to a subsequence, $(P_n,F_n)$ and $(P,F)$ admit lifts $(\lev{P'}{n},\lev{F'}{n})$ and $(\lev{P'}{},\lev{F'}{})$ with  $(\lev{P'}{},\lev{F'}{})\in L_{V}(\{(\lev{P'}{n},\lev{F'}{n})\}_{n})$, and hence, that $\overline{\j}^{-1}((P,F))\in L_G(\{\overline{\j}^{-1}(P_n,F_n)\}_{n})$ (recall \eqref{deflimitQ}). 
	Observe that the case where $F$ or $P$ is empty follows from Prop. \ref{prop:topgeneral} (b), so we only need to focus on the case where both sets are non empty.
	
Assume that $P\neq\emptyset\neq F$ and let $\lev{P}{},\lev{F}{},\lev{P}{n},\lev{F}{n}$ be some fixed lifts of $P,F,P_n,F_n$ respectively. Consider $\{x_n\}_{n}$ and $\{y_n\}_{n}$ chronological sequences defining $P$ and $F$ and, as usual, denote by $\{p_n\}_{n}$ and $\{q_n\}_{n}$ the corresponding lifts defining $\lev{P}{}$ and $\lev{F}{}$. Let us denote by $\{m(n)\}_{n}$ a sequence in $\mathbb{N}$ with $m(n+1)\geq m(n)+1$ and satisfying that $x_n\subset P_{m(n)}$ and $y_n\in F_{m(n)}$. Now, as $x_n\in P_{m(n)}$, Prop. \ref{prop:caracrel} ensures that $p_n\in g_n\,\lev{P}{m(n)}$ for some $g_n\in G$. From Lemma \ref{lem:pairachronal}, we know that the set $G(g_n\,\lev{P}{m(n)},\lev{F}{m(n)})$ is non empty and, from Lemma \ref{lem:lem4.12}, that for any $g'_{n}\in G(g_n\,\lev{P}{m(n)},\lev{F}{m(n)})$,  $g_n\,\lev{P}{m(n)}\sim_S g'_n\,\lev{F}{m(n)}$. Finally, again from Prop. \ref{prop:caracrel} and $y_n\in F_{m(n)}$, there exists $h_n\in G$ such that $h_n\,q_n\in g'_n\,\lev{F}{m(n)}$. 
	
Now, let us observe that from $g_n\,\lev{P}{m(n)}\subset \downarrow g'_n\,\lev{F}{m(n)}$, it follows that $p_n\ll h_n\,q_n$. In particular, we have the chain
	
	\[
	p_1\ll p_n\ll h_n q_n
	\]
	and then, from Lemma \ref{lem:sequence}, we can ensure that, up to a subsequence, $\{h_n\}_{n}$ is constant, say $h_n=h\in G$ for all $n$. In particular, for any $i$ and all $n>i$, it follows that 
	\[
	p_i\ll p_n\ll h\,q_n.
	\]
	
	In particular, $\lev{P}{}\subset \downarrow h\,\lev{F}{}$ and so $h\in G(\lev{P}{},\lev{F}{})$. Hence, Lemma \ref{lem:lem4.12} ensures that both sets $\lev{P}{}$ and $h\,\lev{F}{}$ are S-related. 
	
Summarizing:
	
	\begin{itemize}
		\item The pairs $(\lev{P}{}, h\, \lev{F}{})$ and $(g_n\,\lev{P}{m(n)},g'_n\,\lev{F}{m(n)})$ belongs to $\overline{V}$.
		\item[]
		\item $\lev{P}{}\subset {\rm LI}(\{g_{n}\lev{P}{m(n)}\}_{n})$ and $h\,\lev{F}{}\subset {\rm LI}(\{g'_n\,\lev{F}{m(n)}\}_{n})$, thus \[(\lev{P}{},h\,\lev{F}{})\in L_V(\{(g_n\lev{P}{m(n)},g'_n\lev{F}{m(n)})\}_{n}).\]
	\end{itemize}
	
	In conclusion, and always up to a subsequence, if $(P,F)\in L_{M}(\{(P_n,F_n)\}_{n})$ we can always obtain appropriate lifts $(\lev{P'}{},\lev{F'}{})$ and $\{(\lev{P'}{n},\lev{F'}{n})\}_{n}$ such that $(\lev{P'}{},\lev{F'}{})\in L_{V}(\{(\lev{P'}{n},\lev{F'}{n})\}_{n})$. The result follows then as a consequence of Prop. \ref{prop:limcont} applied to $\overline{\jmath}^{-1}$.
	
\end{proof}

\smallskip

As a final remark of this section, we will show how finite chronology let us simplify some of our previous hypothesis for the well definition and continuity of $\overline{\jmath}$. 
 In fact, the condition of $M$ having no sequence with divergent lifts (which is almost equivalent to the continuity of $\hat{\pi}$ and $\check{\pi}$, recall Prop. \ref{thm:hatjcont}) imposed in Prop. \ref{prop:welldefpinotame} can be substituted by a topological requirement on $\overline{V}$:
	
	\begin{cor}\label{prop:welldefHaus}
		Assume that $(V,G)$ is finitely chronological and that $\overline{V}$ is Hausdorff. Then, $\overline{\pi}$ is well defined and surjective.
	\end{cor}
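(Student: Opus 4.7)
The plan is to follow the structure of Prop.~\ref{prop:welldefpinotame} (equivalently, Corollary~\ref{cor:welldefinedachronal}), but with the ``no divergent lifts'' hypothesis on $M$ replaced by the Hausdorffness of $\overline{V}$. As a preliminary observation, the lift-existence hypothesis of Prop.~\ref{prop:welldefpinotame}---namely that every $(P,F)\in\overline{M}$ with $P\neq\emptyset\neq F$ admits a lift in $\overline{V}$---follows from finite chronology alone: given arbitrary lifts $\lev{P}{},\lev{F}{}$ of $P$ and $F$, Lemma~\ref{lem:pairachronal} produces some $g\in G(\lev{P}{},\lev{F}{})$, and then Lemma~\ref{lem:lem4.12} (whose hypothesis $P\sim_S F$ is given) yields the $S$-relation $\lev{P}{}\sim_S g\,\lev{F}{}$ in $V$, hence the lift $(\lev{P}{},g\,\lev{F}{})\in \overline{V}$.

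The heart of the proof is to establish well-definedness in the first case of Prop.~\ref{prop:welldefpinotame}: given $(\lev{P}{},\lev{F}{})\in \overline{V}$ with both components non-empty, we must show $(P,F)=(\hat{\pi}(\lev{P}{}),\check{\pi}(\lev{F}{}))\in \overline{M}$. Arguing by contradiction and, up to exchanging future and past, assuming that $F$ is not maximal as an IF in $\uparrow P$, the completeness of $\overline{M}$ provides $(P',F')\in \overline{M}$ with $P\subseteq P'$ and $F\subsetneq F'$. Using the preliminary observation, we lift to $(\lev{P'}{},\lev{F'}{})\in \overline{V}$. Then, applying finite chronology to the future and past chronological chains defining $\lev{P'}{},\lev{P}{}$ and $\lev{F'}{},\lev{F}{}$---in the spirit of the proofs of Lemma~\ref{lem:pairachronal} and Prop.~\ref{thm:openhatj}, with iterated applications of Lemma~\ref{lem:sequence} to control the defining group elements---we extract a single $g\in G$ such that simultaneously $\lev{P}{}\subseteq g\,\lev{P'}{}$ and $\lev{F}{}\subsetneq g\,\lev{F'}{}$. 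This would yield $g\,\lev{F'}{}\subset \uparrow g\,\lev{P'}{}\subset \uparrow \lev{P}{}$ (the second inclusion by the anti-monotonicity of $\uparrow$), producing an IF strictly containing $\lev{F}{}$ inside $\uparrow \lev{P}{}$ and contradicting the maximality of $\lev{F}{}$ as an IF in $\uparrow \lev{P}{}$ (which holds because $\lev{P}{}\sim_S \lev{F}{}$).

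The main obstacle is precisely the simultaneous choice of the group element $g$ above, since the natural finite chronology arguments produce a priori independent group elements from each containment. This is where the Hausdorffness of $\overline{V}$ becomes essential: it enforces the uniqueness of the $S$-related IF of each TIP (by the uniqueness of limits of future chronological chains via Thm.~\ref{teo:propcomple} (ii)), which, combined with the $G$-equivariance of the $S$-relation and the rigidity provided by Remark~\ref{rem:aux2} (applicable because $\lev{F}{}\subset I^+(p_0)$ for any $p_0\in \lev{P}{}$), makes the apparently independent choices compatible and forces them to coincide. Once well-definedness has been secured, surjectivity follows verbatim from the corresponding argument in the proof of Prop.~\ref{prop:welldefpinotame}, completing the proof.
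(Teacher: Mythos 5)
Your reduction is set up sensibly, and the preliminary observation is correct: finite chronology alone yields the lift--existence hypothesis of Prop.~\ref{prop:welldefpinotame} via Lemma~\ref{lem:pairachronal} and Lemma~\ref{lem:lem4.12} (this is exactly what Cor.~\ref{cor:welldefinedachronal} and (PS1) record), and the second case of well-definedness and the surjectivity argument indeed carry over verbatim once the first case is settled. The heart of your argument, however, has two genuine gaps. First, completeness of $\overline{M}$ does not provide a pair $(P',F')\in\overline{M}$ with $P\subseteq P'$, $F\subsetneq F'$ and $P'\neq\emptyset$: if $F$ fails to be maximal in $\uparrow P$ and $F'\supseteq F$ is a larger IF inside $\uparrow P$, completeness only yields \emph{some} $S$-partner of $F'$, which need not contain $P$ and may well be empty; note in particular that $F'\subseteq\uparrow P$ does not imply $P\subseteq\downarrow F'$, so you cannot even run the maximization that would produce $P'\supseteq P$. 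Without $P'\neq\emptyset$ and $P\subseteq P'$, your lift $(\lev{P'}{},\lev{F'}{})$ and the inclusion $\uparrow g\,\lev{P'}{}\subseteq\uparrow\lev{P}{}$ are unavailable. Second, and decisively, the extraction of a single $g\in G$ with $\lev{P}{}\subseteq g\,\lev{P'}{}$ and $\lev{F}{}\subsetneq g\,\lev{F'}{}$ simultaneously---which you yourself identify as the main obstacle---is only asserted, not proved. The paper's stabilization tools do not cover this configuration: Lemma~\ref{lem:sequence} needs the translated chain to be chronologically bounded by a fixed point on the correct side, and Lemma~\ref{lem:aux1} (H1) together with Remark~\ref{rem:aux2} requires the two terminal sets to project onto the \emph{same} set and to share a chronological bound, whereas here $\hat{\pi}(\lev{P}{})=P$ may be strictly smaller than $P'=\hat{\pi}(\lev{P'}{})$ and no common bound is in sight. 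Moreover, Hausdorffness of $\overline{V}$ acts on the chronological topology only through uniqueness of limits of concrete sequences, and your sketch never exhibits a sequence with two limits to which it could be applied; ``equivariance plus rigidity forces the choices to coincide'' is a slogan, and since Example~\ref{ex:exe2} shows finite chronology alone cannot close this step, the missing, precise use of Hausdorffness is exactly the content of the corollary.

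For comparison, the paper does not attack the maximality directly: it verifies the hypothesis of Prop.~\ref{prop:proypares}, namely that chronological chains of $M$ admit no divergent lifts. Given a past-directed chain $\{y_n\}_n$ with lifts $\{h_n\,q_n\}_n$ and $\{g_n\,q_n\}_n$ whose associated IP sequences have limits $\lev{P}{}$ and $\lev{P'}{}$, one picks $p\in\lev{P}{}$, so that $p\ll h_n\,q_n$ for large $n$; Lemma~\ref{lem:sequence} (this is where finite chronology acts, on the past-directed chain $\{q_n\}_n$) allows one to take $h_n\equiv h_0$ and likewise $g_n\equiv g_0$ up to subsequences, and then $(g_0h_0^{-1})\,\lev{P}{}$ and $\lev{P'}{}$ are limits of the \emph{same} sequence $\{I^-(g_0\,q_n)\}_n$, so Hausdorffness identifies them and condition (ii) of Def.~\ref{def:divlif} cannot hold. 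Well-definedness and surjectivity then follow from Prop.~\ref{prop:welldefpinotame} combined with the lift existence you already established. If you want to keep your contradiction scheme you must supply actual proofs of the two missing steps; the simplest repair is to redirect the argument to the divergent-lift criterion as above.
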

	\begin{proof}   
		We only need to show, according to Prop. \ref{prop:proypares}, that any past-directed chronological chain on $M$ has no future divergent lifts (the other case will be completely analogous). Let $\{y_n\}_{n}$ be a past-directed chronological chain and consider $\{q_n\}_{n}$ a past chronological sequence in $V$ with $\pi(q_n)=y_n$. Suppose that there exist $\{h_n\}_{n},\{g_n\}_{n} \subset G$ and $\overline{P}{},\overline{P'}{}\in \hat{V}$ such that $\lev{P}{}\in \hat{L}_{V}(\{I^-(h_n\,q_n)\}_{n})$ and $\lev{P'}{}\in \hat{L}_{V}(\{I^-(g_n\,q_n)\}_{n})$.
		
		Take $p\in \lev{P}{}$. From $\lev{P}{}\in \hat{L}_{V}(\{I^-(h_n\,q_n)\}_{n})$ we have that $p\ll h_n\,q_n$ for $n$ big enough. As $(V,G)$ is finitely chronological, Lemma \ref{lem:sequence} ensures that, up to a subsequence, $h_n=h_0$ for some fixed $h_0\in G$. Reasoning in the same way with $\lev{P'}{}$ and $\{g_n\}_{n}$, we can ensure that $g_n=g_0$ for some fixed $g_0\in G$.
		
		Hence, we have that $\lev{P}{}\in \hat{L}_{V}(\{I^-(h_0\,q_n)\}_{n})$ and $\lev{P'}{}\in \hat{L}_{V}(\{I^-(g_0\,q_n)\}_{n})$. From the first inclusion, we deduce that $(g_0h^{-1}_0)\,\lev{P}{}\in \hat{L}_{V}(\{I^-(g_0\,q_n)\}_{n})$ and recalling both, the second inclusion and the hypothesis that $\overline{V}$ is Hausdorff, we ensure that $(g_0h^{-1}_0)\,\lev{P}{}=\lev{P'}{}$ (and so both sets project into the same set in $\hat{M}$). In conclusion, the sequence $\{y_n\}_{n}$ cannot admit future divergent lifts.   
	\end{proof}

At the topological level, we also have to impose some conditions on $\overline{M}$, obtaining:

\begin{cor}\label{cor:aplicacion}
	Assume that $(V,G)$ is finitely chronological, $\overline{V}$ it is Hausdorff and $\overline{M}$ has no lightlike points. Then, $\overline{V}/G\equiv \overline{M}$, i.e., both $\overline{V}/G$ and $\overline{M}$ are homeomorphic and chronologically isomorphic.
\end{cor}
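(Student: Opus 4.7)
The plan is to assemble the corollary as the conjunction of the earlier results in this section. First I would observe that well-definedness and surjectivity of $\overline{\pi}$ are immediate from Corollary \ref{prop:welldefHaus}, since finite chronology of $(V,G)$ together with Hausdorffness of $\overline{V}$ are precisely its hypotheses. This yields a bijection $\overline{\jmath}:\overline{V}/G\rightarrow \overline{M}$. Proposition \ref{prop:chronologicallevel} then automatically upgrades $\overline{\jmath}$ to a chronological isomorphism, and openness of $\overline{\jmath}$ follows from Proposition \ref{prop:jopen}, which needs only finite chronology together with well-definedness of $\overline{\pi}$.

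For continuity I would apply Corollary \ref{prop:contnolight}. Its hypothesis (i) is already established and (iii) is part of our assumption on $\overline{M}$. The only non-immediate point is (ii), namely that $\overline{\pi}(\lev{P}{},\emptyset)=(P,\emptyset)$ and the dual statement. By the definition \eqref{def:proy} this reduces to showing that a pair $(\lev{P}{},\emptyset)\in\overline{V}$ is never $\sim_G$-related to some $(\lev{P'}{},\lev{F'}{})\in\overline{V}$ with $\lev{F'}{}\neq\emptyset$ (and dually). This is precisely the main obstacle of the proof, since neither tameness nor the hypotheses of Remark \ref{rem:aux3} are available.

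I would prove this claim by contradiction, exploiting finite chronology twice. Assuming such a relation, the lifts $\lev{P}{}$ and $\lev{P'}{}$ share the same projection $P$ in $M$. Fixing chronological chains $\{p_n\}_{n}$, $\{p'_n\}_{n}$, $\{q'_n\}_{n}$ defining $\lev{P}{}$, $\lev{P'}{}$, $\lev{F'}{}$ respectively, and using that $P=\hat{\pi}(\lev{P}{})=\hat{\pi}(\lev{P'}{})$ together with Proposition \ref{prop:caracrel}, I would produce a sequence $\{g_n\}_{n}\subset G$ with $g_n\,p_n\ll p'_{m(n)}$; since also $p'_{m(n)}\ll q'_1$ by $\lev{P'}{}\subset\downarrow\lev{F'}{}$, the chain $g_n\,p_n\ll q'_1$ and Lemma \ref{lem:sequence} force $g_n$ to be eventually constant at some $g_0\in G$, which in turn gives the inclusion $g_0\,\lev{P}{}\subset\lev{P'}{}$. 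Then for any chosen $q'\in\lev{F'}{}$, both $g_0\,\lev{P}{}$ and $\lev{P'}{}$ lie inside $I^-(q')$, so hypothesis (H1) of Lemma \ref{lem:aux1} is satisfied and yields some $h\in G$ with $\lev{P'}{}=h^{-1}g_0\,\lev{P}{}$.

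To conclude, I would use that $(\lev{P}{},\emptyset)\in\overline{V}$ means $\lev{P}{}\sim_S\emptyset$ in $V$, so isometry-invariance of the $S$-relation gives $\lev{P'}{}=h^{-1}g_0\,\lev{P}{}\sim_S\emptyset$, in contradiction with $\lev{P'}{}\sim_S\lev{F'}{}\neq\emptyset$ from $(\lev{P'}{},\lev{F'}{})\in\overline{V}$. This establishes (ii), and hence Corollary \ref{prop:contnolight} provides continuity of $\overline{\jmath}$, closing the argument.
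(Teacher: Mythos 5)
Your proposal is correct and follows the paper's skeleton exactly: Cor.~\ref{prop:welldefHaus} for well-definedness and surjectivity of $\overline{\pi}$, Prop.~\ref{prop:chronologicallevel} for the chronological isomorphism, Prop.~\ref{prop:jopen} for openness, and Cor.~\ref{prop:contnolight} for continuity. The only divergence is how you verify hypothesis (ii) of Cor.~\ref{prop:contnolight}, i.e.\ $\overline{\pi}(\lev{P}{},\emptyset)=(P,\emptyset)$ and its dual. The paper disposes of this by citing Lemma~\ref{lem:lem4.12}: if $(\lev{P}{},\emptyset)$ were $\sim_{G}$-related to some $(\lev{P'}{},\lev{F'}{})$ with $\lev{F'}{}\neq\emptyset$, then $P\sim_S F=\check{\pi}(\lev{F'}{})$ (Prop.~\ref{prop:proypares}), the set $G(\lev{P}{},\lev{F'}{})$ is non-empty (Lemma~\ref{lem:pairachronal}), and Lemma~\ref{lem:lem4.12} would make $\lev{P}{}$ $S$-related to a non-empty IF, contradicting $\lev{P}{}\sim_S\emptyset$. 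You instead re-derive the needed statement from scratch: the constant-subsequence argument via Lemma~\ref{lem:sequence} to get $g_0\,\lev{P}{}\subset\lev{P'}{}$, then hypothesis (H1) of Lemma~\ref{lem:aux1} (the translation step is indeed what furnishes the common bound $g_0\,\lev{P}{}\subset\lev{P'}{}\subset I^-(q')$, and your use of it is correct) to conclude that $\lev{P'}{}$ is a $G$-translate of $\lev{P}{}$, and finally $G$-invariance of the $S$-relation — exactly the mechanism of Lemma~\ref{wellprojectedtame} in the tame case. Both routes ultimately rest on the same finite-chronology machinery (Lemma~\ref{lem:aux1}); yours avoids Lemma~\ref{lem:pairachronal}, Rem.~\ref{rem:aux2} and the full strength of Lemma~\ref{lem:lem4.12}, at the cost of redoing arguments the paper had already packaged. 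One small point worth making explicit in your write-up: since $\sim_G$ is only generated by the relations of Section~\ref{sec:totalpoint}, you should note (as the paper does tacitly in Lemma~\ref{wellprojectedtame}) that ruling out the direct generating relation $\hat{\pi}(\lev{P}{})=\hat{\pi}(\lev{P'}{})$ with $\lev{F'}{}\neq\emptyset$ already confines the whole $\sim_G$-class of $(\lev{P}{},\emptyset)$ to pairs with empty second component, so that the second case of \eqref{def:proy} applies.
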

\begin{proof}
From Cor. \ref{prop:welldefHaus} follows that $\overline{\pi}$ is well defined and surjective. Then, Prop. \ref{prop:chronologicallevel} and \ref{prop:jopen} ensure both, that $\overline{\jmath}$ is a chronological isomorphism and an open map.

Hence, it only rest to show that $\overline{\j}$ is continuous. But this follows from Cor. \ref{prop:contnolight}, recalling that Lemma \ref{lem:lem4.12} ensures that $\overline{\pi}(\lev{P}{},\emptyset)=(P,\emptyset)$ and $\overline{\pi}(\emptyset,\lev{F}{})=(\emptyset,F)$.  
\end{proof}

Ideally, one would like to impose conditions only on $\overline{V}$ in order to ensure that $\overline{V}/G$ and 
$\overline{M}$ have the same structures. For example, and in the spirit of Cor. \ref{cor:aplicacion},
we would like to impose on $\overline{V}$ the non-existence of lightlike boundary points to obtain the non-existence of lightlike boundary points on $\overline{M}$, and therefore the continuity of $\overline{\jmath}$. However, the lack of lightlike points in $\overline{V}$ is not enough to ensure the same property on $\overline{M}$ (see Example \ref{ex:ex5}). Nevertheless the situation is very controlled and it is related again with the existence of very particular divergent lifts. In fact, we can prove that (compare with Prop. \ref{prop:new}):

\begin{lemma}\label{lem:lemnew} 
Let $\pi:V\rightarrow M$ be a spacetime projection. Assume that $\overline{\pi}$ is well-defined and it satisfies that $\overline{\pi}(\lev{P}{},\emptyset)=(P,\emptyset)$ and $\overline{\pi}(\emptyset,\lev{F}{})=(\emptyset,F)$. If $\overline{V}$ has no lightlike points and the $G$-orbits for both $\hat{V}$ and $\check{V}$ are closed (with the corresponding topologies), then $\overline{M}$ has no lightlike points.
\end{lemma}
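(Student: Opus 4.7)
The plan is to argue by contradiction, reducing to the past case by symmetry. Suppose $\overline{M}$ has a lightlike point, say $(P,\emptyset)\in\overline{M}$ together with an IP $P'$ in $M$ strictly containing $P$ (the case $(\emptyset,F)$ is completely dual, using the closedness of $G$-orbits in $\check{V}$ instead of $\hat{V}$). The first task is to produce a lift $(\lev{P}{},\emptyset)\in\overline{V}$ of $(P,\emptyset)$. Since $\hat{\pi}$ is surjective onto $\hat{M}$, pick any $\lev{P}{}\in\hat{V}$ with $\hat{\pi}(\lev{P}{})=P$; by completeness of the c-completion there is $\lev{F}{}\in\check{V}_\emptyset$ with $(\lev{P}{},\lev{F}{})\in\overline{V}$. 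If $\lev{F}{}\neq\emptyset$, then the hypotheses $\overline{\pi}(\lev{P}{},\emptyset)=(P,\emptyset)$ and $\overline{\pi}$ well-defined force $\overline{\pi}(\lev{P}{},\lev{F}{})$ to be a point of $\overline{M}$ of the form $(P,F')$ with $F'\neq\emptyset$, contradicting $P\sim_S\emptyset$ in $M$ (a TIP $S$-related to $\emptyset$ cannot be $S$-related to any non-empty IF). Hence $(\lev{P}{},\emptyset)\in\overline{V}$ and, by the hypothesis that $\overline{V}$ has no lightlike points, no IP strictly contains $\lev{P}{}$.

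Next, I would build a sequence of $G$-translates of a chosen lift $\lev{P'}{}$ of $P'$ that approximates $\lev{P}{}$ from above. Fix future chronological chains $\{x_n\}\to P$ and $\{x'_n\}\to P'$ in $M$; from $P\subsetneq P'$ and the fact that $\{x_n\}\subset P'$, we may reindex so that $x_n\ll x'_{m(n)}$ with $m(n)$ strictly increasing. Lift to $\{p_n\}\to\lev{P}{}$ and $\{p'_n\}\to\lev{P'}{}$. By Proposition \ref{prop:caracrel} there exist $g_n\in G$ with $p_n\ll g_n\,p'_{m(n)}$. Because $p_k\ll p_n$ for every $n>k$, this gives $p_k\in g_n\,\lev{P'}{}$ for all $n>k$, hence $p_k\in \mathrm{LI}(\{g_n\,\lev{P'}{}\}_n)$ for every $k$, and therefore $\lev{P}{}\subset\mathrm{LI}(\{g_n\,\lev{P'}{}\}_n)$.

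The central step is to upgrade this inferior-limit containment to a genuine chronological-topology limit. Following exactly the Zorn-and-subsequence argument used in the proof of Proposition \ref{thm:openhatj} (and recorded in Remark \ref{rem:aux1}), I would choose an IP $\lev{Q}{}$ which is maximal in $\mathrm{LS}(\{g_n\,\lev{P'}{}\}_n)$ among IPs containing $\lev{P}{}$, and then pass to a subsequence to ensure $\lev{Q}{}\subset\mathrm{LI}$ and remains maximal in $\mathrm{LS}$, so that $\lev{Q}{}\in\hat{L}_V(\{g_{n_k}\,\lev{P'}{}\}_k)$. Because the $G$-orbit of $\lev{P'}{}$ in $\hat{V}$ is closed (by hypothesis), the limit $\lev{Q}{}$ lies in that orbit, hence $\lev{Q}{}=g\,\lev{P'}{}$ for some $g\in G$.

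The contradiction now closes quickly: we have $\lev{P}{}\subset g\,\lev{P'}{}$, and since $\overline{V}$ has no lightlike points and $(\lev{P}{},\emptyset)\in\overline{V}$, no IP strictly contains $\lev{P}{}$, forcing $\lev{P}{}=g\,\lev{P'}{}$. Applying $\hat{\pi}$ and using its $G$-invariance yields $P=\hat{\pi}(\lev{P}{})=\hat{\pi}(g\,\lev{P'}{})=P'$, contradicting $P\subsetneq P'$. The main obstacle I anticipate is the extraction in the third paragraph, namely producing an IP $\lev{Q}{}\supset\lev{P}{}$ as an honest chronological limit along a subsequence; but this is precisely the mechanism already developed inside the proof of Proposition \ref{thm:openhatj}, so it can be invoked verbatim without new ideas.
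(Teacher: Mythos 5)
Your proposal is correct and follows essentially the same route as the paper: lift $(P,\emptyset)$ to $(\lev{P}{},\emptyset)\in\overline{V}$, use Prop.~\ref{prop:caracrel} to get $\lev{P}{}\subset {\rm LI}(\{g_n\,\lev{P'}{}\}_n)$, and combine the no-lightlike-points hypothesis with closedness of the $G$-orbit to force $\lev{P}{}=g\,\lev{P'}{}$ and hence $P=P'$. The only (immaterial) difference is ordering: the paper applies the no-lightlike-points hypothesis first, so that $\lev{P}{}$ itself is already maximal in the superior limit and no auxiliary maximal IP $\lev{Q}{}$ or subsequence extraction is needed before invoking orbit closedness.
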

\begin{proof}
	Assume by contradiction that $\overline{M}$ has lightlike points, that is, that there exists $(P,\emptyset)\in\overline{M}$ and $P'\in\hat{M}$ such that $P\subsetneq P'$ (the case with past sets will be analogous). Let $\{x_{n}\}_{n}$ and $\{x_{n}'\}_{n}$ be chronological chains generating $P$ and $P'$ resp. and consider $\lev{P}{}$, $\lev{P'}{}$, $\{p_{n}\}_{n}$ and $\{p_{n}'\}_{n}$ the corresponding lifts on $\hat{V}$. From hypothesis, it follows that $(\lev{P}{},\emptyset)\in \overline{V}$.
	
	As $P\subset P'$ we deduce that for all $n$ $x_n\ll x'_{n'}$ for $n'$ big enough, so Prop. \ref{prop:caracrel} ensures that there exists $g_n$ such that $p_n\ll g_n\,p'_n\in g_n\, \lev{P'}{}$. It follows then that $\lev{P}{}\subset {\rm LI}(\{g_n\, \lev{P'}{}\}_{n})$. Moreover, it also follows that $\bar{P}\in \hat{L}_{V}(\{g_n \,\bar{P'}\}_{n})$ as, otherwise, there exists $\bar{P''}$ such that $\bar{P}\subsetneq \bar{P''}$ and this is not possible as $\overline{V}$ has no lightlike points.
	
	Finally, and from the hypothesis that the $G$-orbits are closed on $\hat{V}$ with the future chronological topology, it follows that $\bar{P}\in \{g\, \lev{P'}{}\}_{g\in G}$, i.e., there exists $g_0  \in G$ such that $\bar{P}=g_0\,\bar{P'}$. In conclusion, and taking projections, we obtain that $P=P'$, a contradiction.	
\end{proof}

\medskip

As a consequence of both, Cor. \ref{cor:aplicacion} and Lemma \ref{lem:lemnew}, we obtain the following result:  

\begin{cor}
\label{cor:expansiblyhomeomorphism}
Assume that $(V,G)$ is finitely chronological, $\overline{V}$ satisfies that it is Hausdorff, has no lightlike points and the $G$-orbits in both, $\hat{V}$ and $\check{V}$ are closed. Then, $\overline{V}/G\equiv \overline{M}$, i.e., both $\overline{V}/G$ and $\overline{M}$ are homeomorphic and chronologically isomorphic.
\end{cor}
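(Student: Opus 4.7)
The plan is to observe that this corollary is a short chaining of two earlier results, Cor.~\ref{cor:aplicacion} and Lemma~\ref{lem:lemnew}, once we verify that the hypotheses of the former can be deduced from the hypotheses we are given. The only nontrivial point of Cor.~\ref{cor:aplicacion} that is not already in our assumption list is the absence of lightlike boundary points in $\overline{M}$, and this is precisely what Lemma~\ref{lem:lemnew} provides.

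So first I would check that $\overline{\pi}$ is well defined and surjective: since $(V,G)$ is finitely chronological and $\overline{V}$ is Hausdorff, this is given directly by Cor.~\ref{prop:welldefHaus}. Second, I would verify the compatibility of $\overline{\pi}$ with empty components, namely that $\overline{\pi}(\lev{P}{},\emptyset)=(P,\emptyset)$ and $\overline{\pi}(\emptyset,\lev{F}{})=(\emptyset,F)$; under finite chronology this is exactly the content extracted from Lemma~\ref{lem:lem4.12} and already invoked inside the proof of Cor.~\ref{cor:aplicacion}. With these two ingredients in hand, together with the hypotheses that $\overline{V}$ has no lightlike points and that the $G$-orbits of both $\hat{V}$ and $\check{V}$ are closed, all hypotheses of Lemma~\ref{lem:lemnew} are met, so I can conclude that $\overline{M}$ has no lightlike boundary points.

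At this stage all of the hypotheses of Cor.~\ref{cor:aplicacion} are verified: $(V,G)$ is finitely chronological, $\overline{V}$ is Hausdorff, and $\overline{M}$ has no lightlike boundary points. Applying that corollary directly yields that $\overline{\j}\colon \overline{V}/G\to\overline{M}$ is simultaneously a chronological isomorphism (Prop.~\ref{prop:chronologicallevel}), an open map (Prop.~\ref{prop:jopen}) and a continuous map (via Cor.~\ref{prop:contnolight}), hence a homeomorphism compatible with the chronological structure. There is essentially no new computation to perform; the entire argument is a bookkeeping exercise confirming that the cascade $(V,G)\text{ fin.\ chron.}+\overline{V}\text{ Haus.}+\text{no lightlike in }\overline{V}+\text{closed }G\text{-orbits}\Longrightarrow \text{no lightlike in }\overline{M}\Longrightarrow \overline{V}/G\equiv\overline{M}$ goes through.

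If there were any subtlety, it would be at the step invoking Lemma~\ref{lem:lemnew}: one must make sure the closedness hypothesis of the $G$-orbits is understood with respect to the correct (future, resp.\ past) chronological topology on $\hat{V}$ and $\check{V}$, since Lemma~\ref{lem:lemnew} uses the closed-orbit assumption precisely to upgrade the inclusion $\lev{P}{}\in \hat{L}_V(\{g_n\lev{P'}{}\}_n)$ into the equality $\lev{P}{}=g_0\lev{P'}{}$. This matches verbatim the hypothesis we have been given, so no extra work is required, and the corollary follows.
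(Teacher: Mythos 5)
Your proposal is correct and follows essentially the same route as the paper, which derives Corollary \ref{cor:expansiblyhomeomorphism} precisely as a consequence of Cor. \ref{cor:aplicacion} and Lemma \ref{lem:lemnew} (with well-definedness from Cor. \ref{prop:welldefHaus} and the behaviour on empty components from Lemma \ref{lem:lem4.12}, exactly as you spell out). No discrepancy to report.
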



\subsection{Proof of Theorem \ref{thm:main2}}

At the point set level, the first assertion on (PS1) is proved on Prop. \ref{prop:welldefpinotame} and the second one follows from Lemma \ref{lem:pairachronal}. (PS2) is a consequence of Lemma \ref{wellprojectedtame} and (PS3) is proved in 
Cor. \ref{prop:welldefHaus}.

At the chronological level, assertion (CH) is proved on Prop. \ref{prop:chronologicallevel}.

At the topological level, (TP1) (i) is proved in Prop. \ref{prop:contparimtot} (see also Rem. \ref{rem:aux3}), (TP1) (ii) is Cor. \ref{prop:contnolight} and (TP2) follows from Prop. \ref{prop:jopen}. 

For the last paragraph, (a) follows from (PS1), (CH) and (TP1), while for (b) we have to consider (PS3) and Cor. \ref{cor:aplicacion} instead of (PS1) and (TP1). The last assertion (c), is proved in Cor. \ref{cor:expansiblyhomeomorphism}.

\section{On the optimality of the results: Some examples}\label{sec:examples}

Along this section, we will include some examples showing that our main results are optimal. It is worth pointing out that in all the examples $\# L_{M}(\sigma)$ will be bounded, and so, according to Lemma \ref{lem:firstUTS}, that $L_{M}$ will be of first order UTS. This is specially relevant recalling Prop. \ref{thm:hatjcont}, as it means that in all our examples the non existence of divergent lifts characterize the continuity of $\hat{\pi}$ and $\check{\pi}$. 

\smallskip 

Let us start with the example due to Harris where $\hat{\j}$ is not continuous. Here, we will include only the main properties of his example, referring the reader to \cite{H} for details.

\begin{exe}\label{ex:exemHarris}
\begin{figure}
	\centering
	\ifpdf
	\setlength{\unitlength}{1bp}%
	\begin{picture}(451.71, 281.81)(0,0)
	\put(0,0){\includegraphics{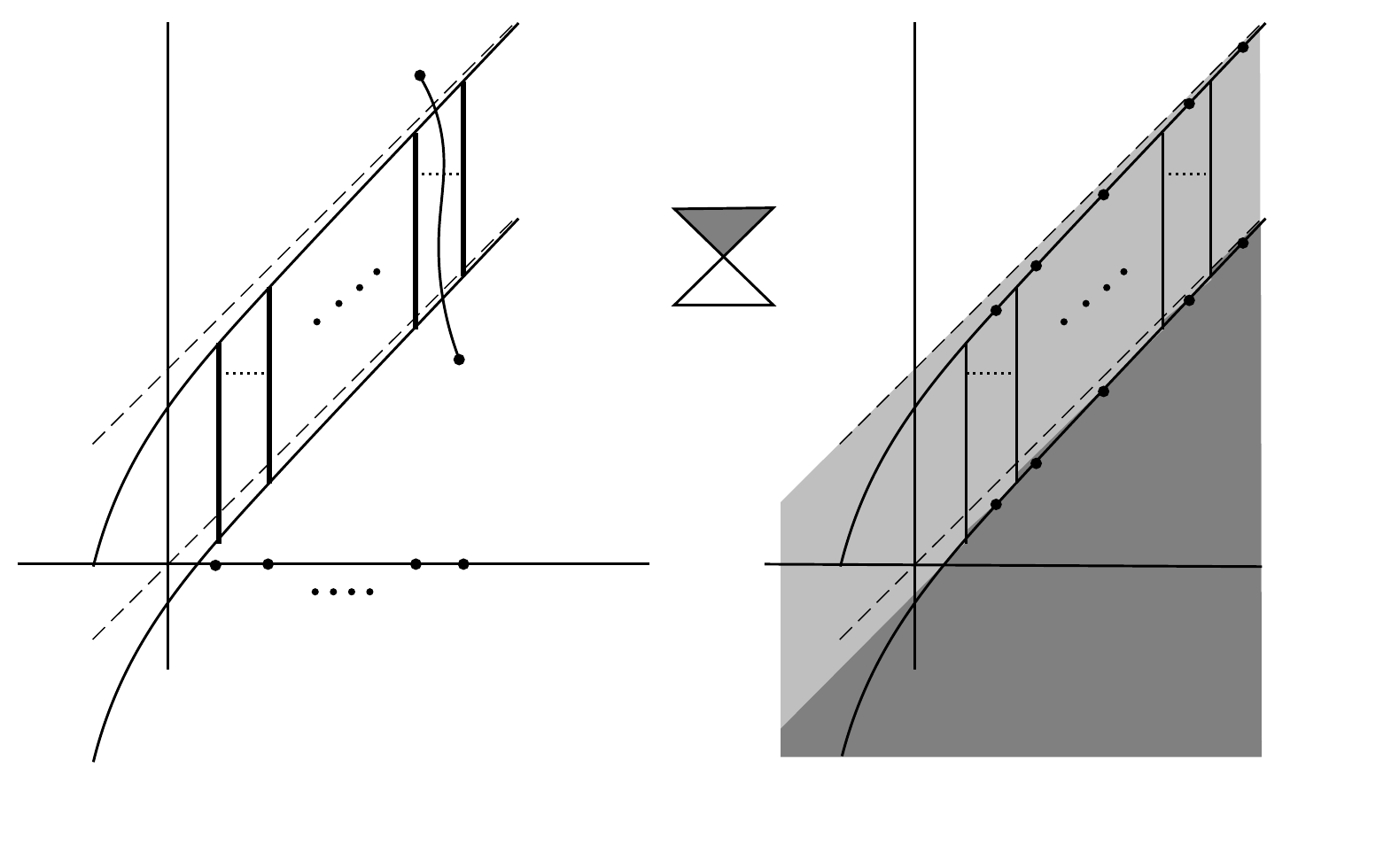}}
	\put(66.09,86.81){\fontsize{8.03}{9.64}\selectfont $1$}
	\put(82.77,86.81){\fontsize{8.03}{9.64}\selectfont $2$}
	\put(132.02,86.41){\fontsize{8.03}{9.64}\selectfont $n$}
	\put(146.32,86.81){\fontsize{8.03}{9.64}\selectfont $n+1$}
	\put(60.16,133.67){\fontsize{8.03}{9.64}\selectfont $S_1$}
	\put(88.29,146.78){\fontsize{8.03}{9.64}\selectfont $S_2$}
	\put(150.68,212.60){\fontsize{8.03}{9.64}\selectfont $S_{n+1}$}
	\put(150.99,226.41){\fontsize{8.03}{9.64}\selectfont $H_n$}
	\put(25.94,125.10){\fontsize{8.03}{9.64}\selectfont $\sigma_+$}
	\put(26.53,61.13){\fontsize{8.03}{9.64}\selectfont $\sigma_-$}
	\put(72.66,162.45){\fontsize{8.03}{9.64}\selectfont $H_1$}
	\put(150.73,160.64){\fontsize{6.42}{7.71}\selectfont $x$}
	\put(137.95,259.21){\fontsize{6.42}{7.71}\selectfont $y$}
	\put(145.64,178.09){\fontsize{8.03}{9.64}\selectfont $\gamma$}
	\put(322.02,109.99){\fontsize{8.03}{9.64}\selectfont $x_1$}
	\put(339.50,125.99){\fontsize{8.03}{9.64}\selectfont $x_2$}
	\put(389.25,178.11){\fontsize{8.03}{9.64}\selectfont $x_n$}
	\put(409.84,200.69){\fontsize{8.03}{9.64}\selectfont $x_{n+1}$}
	\put(310.48,187.20){\fontsize{8.03}{9.64}\selectfont $y_1$}
	\put(323.73,201.88){\fontsize{8.03}{9.64}\selectfont $y_2$}
	\put(373.40,253.58){\fontsize{8.03}{9.64}\selectfont $y_n$}
	\put(380.60,269.87){\fontsize{8.03}{9.64}\selectfont $y_{n+1}$}
	\put(95.72,9.13){\fontsize{16.06}{19.27}\selectfont (A)}
	\put(331.84,9.72){\fontsize{16.06}{19.27}\selectfont (B)}
	\put(123.50,186.73){\fontsize{8.03}{9.64}\selectfont $S_n$}
	\end{picture}%
	\else
	\setlength{\unitlength}{1bp}%
	\begin{picture}(451.71, 281.81)(0,0)
	\put(0,0){\includegraphics{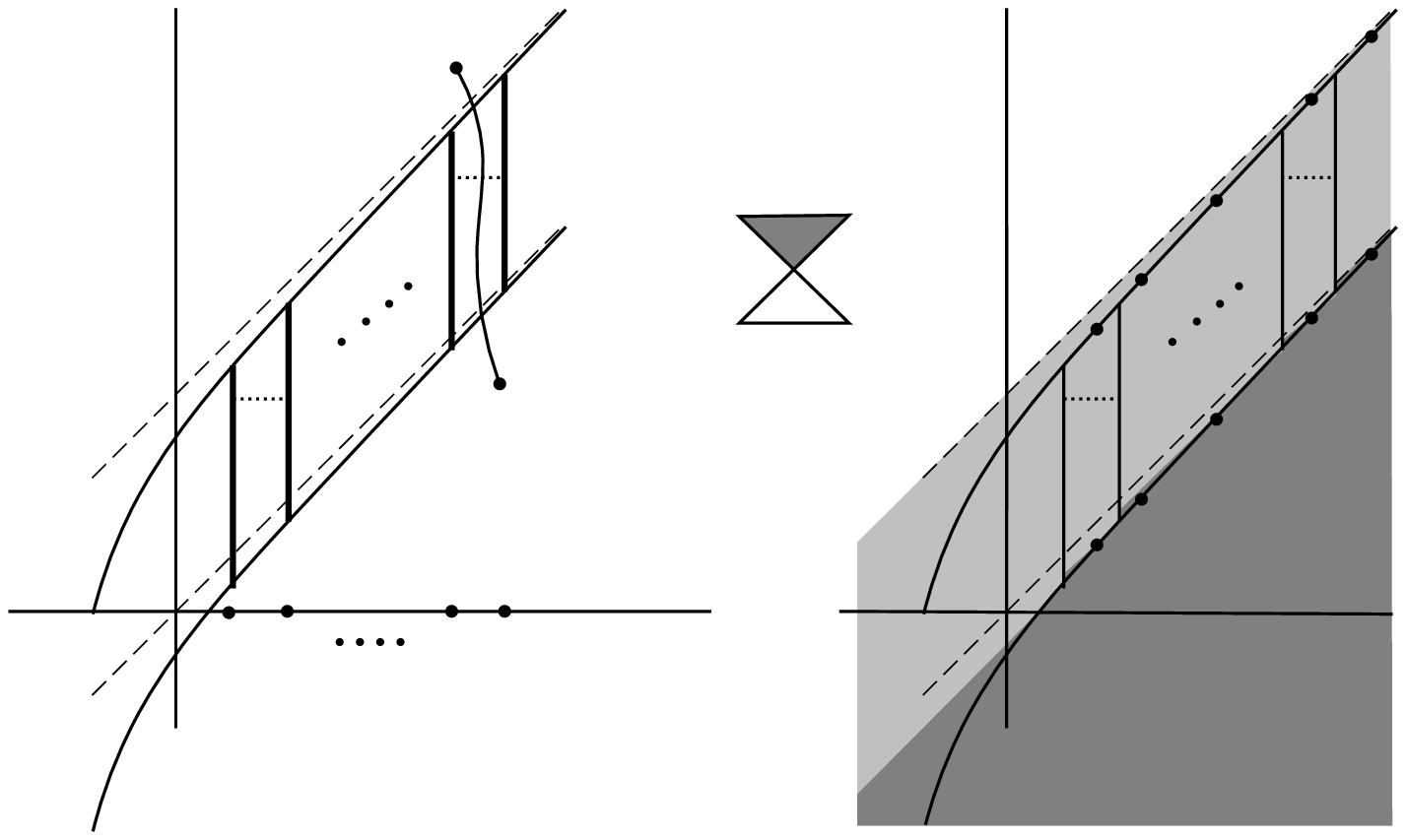}}
	\put(66.09,86.81){\fontsize{8.03}{9.64}\selectfont $1$}
	\put(82.77,86.81){\fontsize{8.03}{9.64}\selectfont $2$}
	\put(132.02,86.41){\fontsize{8.03}{9.64}\selectfont $n$}
	\put(146.32,86.81){\fontsize{8.03}{9.64}\selectfont $n+1$}
	\put(60.16,133.67){\fontsize{8.03}{9.64}\selectfont $S_1$}
	\put(88.29,146.78){\fontsize{8.03}{9.64}\selectfont $S_2$}
	\put(150.68,212.60){\fontsize{8.03}{9.64}\selectfont $S_{n+1}$}
	\put(150.99,226.41){\fontsize{8.03}{9.64}\selectfont $H_n$}
	\put(25.94,125.10){\fontsize{8.03}{9.64}\selectfont $\sigma_+$}
	\put(26.53,61.13){\fontsize{8.03}{9.64}\selectfont $\sigma_-$}
	\put(72.66,162.45){\fontsize{8.03}{9.64}\selectfont $H_1$}
	\put(150.73,160.64){\fontsize{6.42}{7.71}\selectfont $x$}
	\put(137.95,259.21){\fontsize{6.42}{7.71}\selectfont $y$}
	\put(145.64,178.09){\fontsize{8.03}{9.64}\selectfont $\gamma$}
	\put(322.02,109.99){\fontsize{8.03}{9.64}\selectfont $x_1$}
	\put(339.50,125.99){\fontsize{8.03}{9.64}\selectfont $x_2$}
	\put(389.25,178.11){\fontsize{8.03}{9.64}\selectfont $x_n$}
	\put(409.84,200.69){\fontsize{8.03}{9.64}\selectfont $x_{n+1}$}
	\put(310.48,187.20){\fontsize{8.03}{9.64}\selectfont $y_1$}
	\put(323.73,201.88){\fontsize{8.03}{9.64}\selectfont $y_2$}
	\put(373.40,253.58){\fontsize{8.03}{9.64}\selectfont $y_n$}
	\put(380.60,269.87){\fontsize{8.03}{9.64}\selectfont $y_{n+1}$}
	\put(95.72,9.13){\fontsize{16.06}{19.27}\selectfont (A)}
	\put(331.84,9.72){\fontsize{16.06}{19.27}\selectfont (B)}
	\put(123.50,186.73){\fontsize{8.03}{9.64}\selectfont $S_n$}
	\end{picture}%
	\fi		
		\caption{\label{fig:exHarris} The space $M$ is constructed in the following way: Consider in $\mathbb{L}^2$ two timelike curves $\sigma_-$ and $\sigma_+$ approaching two parallel lightlike lines, as we can see in (A). $M$ is obtained by removing from $\mathbb{L}^2$ the segments $S_n$ obtained by joining vertically the points $\sigma_-(n)$ and $\sigma_+(n)$. The universal cover $V$ of $M$ contains then a numerable family of copies of $M$ glued along the segments $H_i$ coherently (see details on Example \ref{ex:exemHarris} and \cite{H}).}
		
	\end{figure}

	The aim of this first example is twofold: On the one hand, we will present the main properties about the universal cover of a spacetime $M$ where we have removed a numerable family of compact segments. This behaviour will be used constantly on the forthcoming examples. On the other hand, it is a first example showing that $\hat{\pi}$ (and so, $\hat{\jmath}$) could be non-continuous in general. 
	
	Let us consider an spacetime $M$ as in the Figure \ref{fig:exHarris}, and let $V$ denote its universal cover. As it is described in the last example of \cite{H}, $V$ contains a numerable family of copies of $M$, that we will denote by $\{n\}\times M$ with $n\in \mathbb{Z}$, glued coherently along the segments $H_n$. For a given element $x\in M$, let us denote by $p$ its lift in $V$ living in the fibre $\{0\}\times M$. We will also denote by $n\cdot p$ the lift of $x$ in the fibre $\{n\}\times M$ (i.e., $p\equiv 0\cdot p$).
	
	In order to understand how the fibres are glued along $H_n$, let us show how the lifts of curves behave. Consider $\gamma$ a curve on $M$ as it is showed in Fig. \ref{fig:exHarris} (A), which is a timelike curve joining two points $x$ and $y$. Let $p$ and $q$ be the corresponding lifts in the fibre $\{0\}\times M$ and consider $\lev{\gamma}{}$ a lift of $\gamma$ on $V$ with start point $m\cdot p$. The fibres are glued in such a way that, as $\gamma$ intersects the segment $H_n$, the lifted curve $\lev{\gamma}{}$ moves from the fibre $\{m\}\times M$ to $\{m+n\}\times M$, being $(m+n)\cdot q$ its final point.
	
	\smallskip
	
	Once we have pointed out this behaviour, let us observe the particularities of the example regarding the continuity of $\hat{\pi}$. Let us observe now Fig. \ref{fig:exHarris} (B), where we have two TIPs $P\subsetneq P'$ defined by the sequences $\{x_n\}_{n}$ and $\{y_n\}_{n}$ ($P$ is filled in dark grey, while $P'$ has a lighter grey). It is not difficult to observe, due to the behaviour described before, that $m\cdot p_n\not\ll m\cdot q_n$ for any $m\in \mathbb{N}$. In fact, it follows that	
	\[
	m\cdot p_n\ll (m+n)\cdot q_n\quad\hbox{ for all $n\in\mathbb{N}$ }
	\]
as we can consider curve on $M$ joining $x_n$ with $y_n$ and intersecting $H_n$. From this, we can prove that: (a) the sequence $\lev{\sigma}{}=\{I^-(q_n)\}_{n}$ has $\lev{P'}{}$ (the lift of $P'$ on the fibre $\{0\}\times M$) on its limit, (b) the sequence $\{I^-(n\cdot q_n)\}_{n}$ has $\lev{P}{}$ on its limit (the inclusion on the inferior limit is straightforward, while the proof of the maximal character is detailed in \cite{H}) and (c) $\hat{\pi}(\lev{P}{})=P\subsetneq P'=\hat{\pi}(\lev{P'}{})$. In conclusion, and recalling Prop. \ref{thm:hatjcont}, $\hat{\pi}$ is not continuous.
	 
\end{exe}

\begin{exe}\label{ex:exe1}
	
	\begin{figure}
		\centering
		\ifpdf
		\setlength{\unitlength}{1bp}%
		\begin{picture}(275.17, 244.58)(0,0)
		\put(0,0){\includegraphics{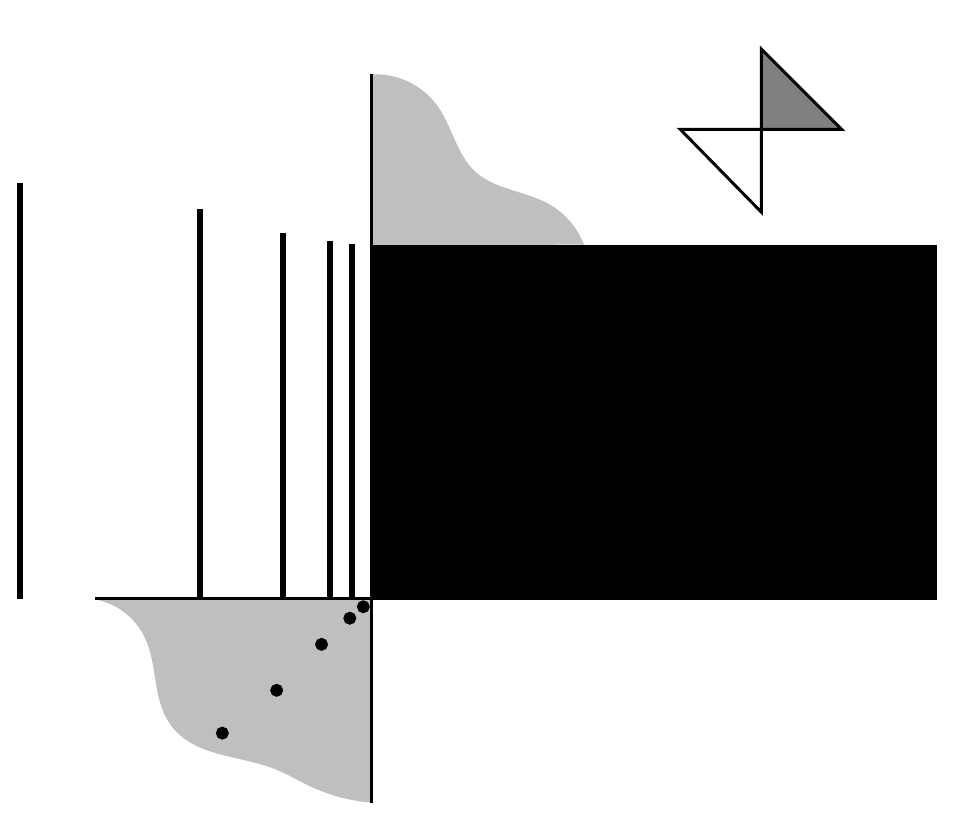}}
		\put(61.93,46.01){\fontsize{14.23}{17.07}\selectfont $P$}
		\put(115.17,175.07){\fontsize{14.23}{17.07}\selectfont $F$}
		\put(65.84,27.06){\fontsize{8.54}{10.24}\selectfont $x_n$}
		\put(60.07,175.03){\fontsize{11.38}{13.66}\selectfont $S_n$}
		\put(9.41,217.57){\fontsize{17.07}{20.49}\selectfont $M$}
		\end{picture}%
		\else
		\setlength{\unitlength}{1bp}%
		\begin{picture}(275.17, 244.58)(0,0)
		\put(0,0){\includegraphics{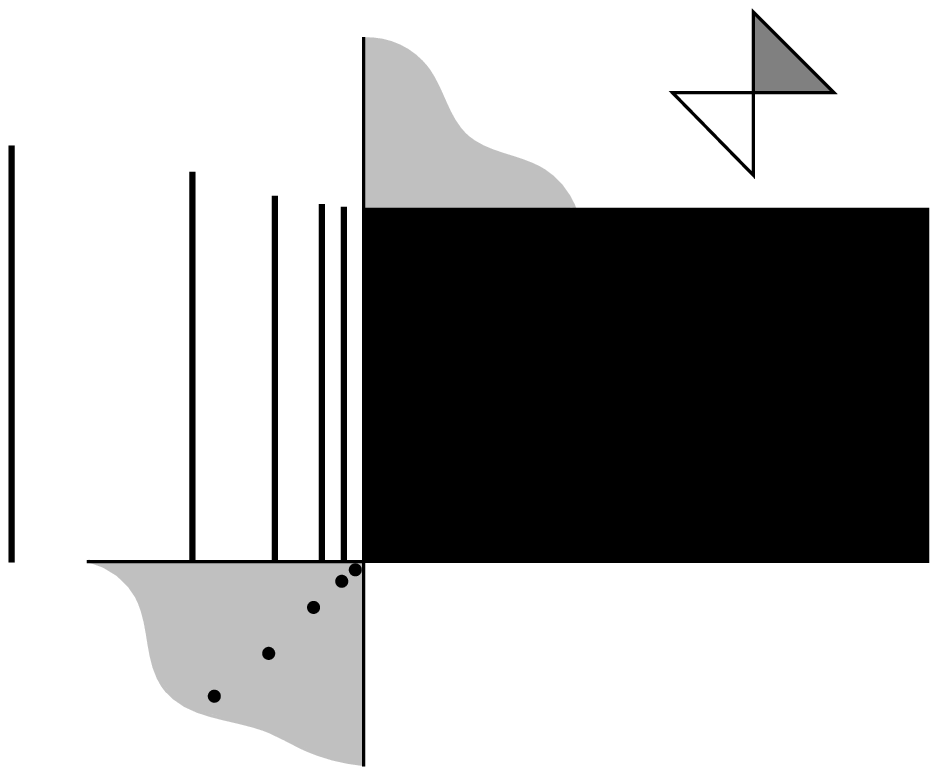}}
		\put(61.93,46.01){\fontsize{14.23}{17.07}\selectfont $P$}
		\put(115.17,175.07){\fontsize{14.23}{17.07}\selectfont $F$}
		\put(65.84,27.06){\fontsize{8.54}{10.24}\selectfont $x_n$}
		\put(60.07,175.03){\fontsize{11.38}{13.66}\selectfont $S_n$}
		\put(9.41,217.57){\fontsize{17.07}{20.49}\selectfont $M$}
		\end{picture}%
		\fi
		
		\caption{\label{fig:ex1} $M$ is constructed by removing from $\mathbb{L}^2$ the black square and the vertical segments $S_n$. As it was pointed out in \cite[Figure 11]{FHSFinalDef}, the terminal sets $P$ and $F$ are $S$-related, and so, they form a pair $(P,F)\in \overline{M}$. However, if $\lev{P}{}$ is a lift of $P$ to the universal cover $V$, it follows that $\uparrow \lev{P}{}=\emptyset$.}
		
	\end{figure}
	
Let us show now the optimality of Prop. \ref{prop:welldefpinotame} by showing that, even when the maps $\hat{\pi}$ and $\check{\pi}$ are continuous, it could happen that a point $(\lev{P}{},\emptyset)\in \overline{V}$ (resp. $(\emptyset,\lev{F}{})$) is not well projected (recall Prop. \ref{prop:proypares}). For this, in this example we will show a point $(P,F)\in M$ with no natural lift on $V$.  
	
	Let us consider $M$ a spacetime as described in Figure \ref{fig:ex1} and $V$ its universal cover. As it is pointed out in \cite[Figure 11]{FHSFinalDef}, both sets $P\sim_S F$ are $S$-related. Now, let us fix $\lev{P}{}$ and $\lev{F}{}$ lifts of the corresponding terminal sets on $\{0\}\times M\subset V$ as we have done on Example \ref{ex:exemHarris}; and denote by $\{p_n\}_{n} \subset \{0\}\times M$ a future chronological sequence which is lift of the sequence $\{x_n\}_{n}$ showed in Fig. \ref{fig:ex1}. Recall that the lifts on $V$ of timelike curves of $M$ moving between $S_n$ and $S_{n+1}$ behave essentially as described in Example \ref{ex:exemHarris}. Hence, it follows that 
	
	\[
	\cap_{n\in\mathbb{N}} I^+(p_n)=\emptyset. 
	\] 
	Therefore, the set $\uparrow \lev{P}{}$ is empty and $\lev{P}{}\sim_S \emptyset$. 
	
	However, it is not difficult to see that both $\hat{\pi}$ and $\check{\pi}$ are continuous. Recall that the non-continuity of such maps can only follow by the existence of a sequence $\{y_n\}_{n} \subset M$ admitting divergent lifts. 
	
	The only case we have to be concerned is when $\{y_n\}_{n}$ converges on $\mathbb{R}^2$, or to the point $(0,1)$ or to $(0,0)$ (in the other cases, the convergence is essentially the usual one in $\mathbb{R}^2$). Assume for instance that the sequence $\{y_n\}_{n}$ converges to the point $(0,1)$ (the other case is completely analogous). It is straightforward to check that any convergent lift with the past chronological topology of $\{y_n\}_{n}$ in $V$ are, up to a subsequence, of the form $\{m\cdot q_n\}_n$, with $m\in \mathbb{Z}$ constant and $q_n \in \{0\} \times M\subset V$ a fixed lift of $\{y_n\}_{n}$. In particular, their limits are of the form $m\cdot \lev{F}{}$. This is due the fact that the IFs involved will not have points between the segments $S_n$, and so, we do not have to move between different fibres of $V$. Therefore, any convergent lift of $\{y_n\}_{n}$ with the past topology converges to a terminal set on $\check{\pi}^{-1}(F)$, and so, $\{y_n\}_{n}$ does not have past divergent lifts (condition (ii) in Definition \ref{def:divlif} cannot be fulfilled). 
	
	For the future topology however the situation is a little more technical, as the involved IPs contain these points between segments $S_n$. With some effort, it can be proved that if ${\rm LI}(\{I^-(g_n\,q_n)\}_{n}) \neq \emptyset$ for some $\{g_n\}_{n} \subset \mathbb{Z}$, then ${\rm LI}(\{I^-(g_n\,q_n)\}_{n})=m\,\lev{P}{}$ for some $m\in \mathbb{Z}$. In particular, any convergent lift with the future topology of $\{y_n\}_{n}$ will converge to some TIP on $\hat{\pi}^{-1}(P)$, and so, reasoning as in previous case, $\{y_n\}_{n}$ does not admits future divergent lifts.
	
	In conclusion, $M$ does not admit (future or past) divergent lifts, and so, both $\hat{\pi}$ and $\check{\pi}$ are continuous. 
\end{exe} 

\begin{exe}\label{ex:exe2}
	
\begin{figure}
	\centering
	\ifpdf
	\setlength{\unitlength}{1bp}%
	\begin{picture}(431.46, 201.36)(0,0)
	\put(0,0){\includegraphics{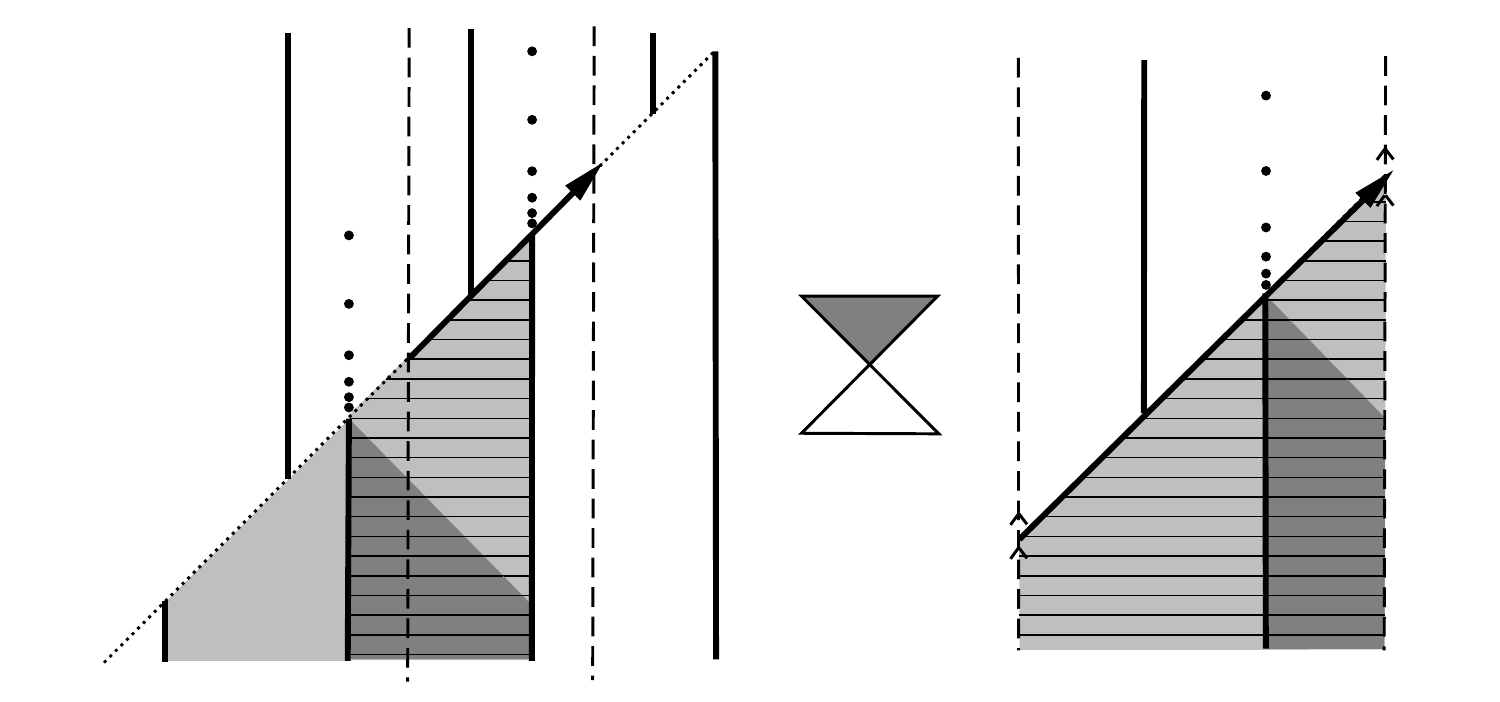}}
	\put(99.42,117.47){\fontsize{8.83}{10.60}\selectfont $p_n$}
	\put(151.47,170.48){\fontsize{8.83}{10.60}\selectfont $1\,p_n$}
	\put(70.45,35.66){\fontsize{8.83}{10.60}\selectfont $\lev{P}{1}$}
	\put(104.98,26.91){\fontsize{8.83}{10.60}\selectfont $\lev{P}{2}$}
	\put(128.16,98.58){\fontsize{8.83}{10.60}\selectfont $1\,\lev{P}{1}$}
	\put(324.48,58.42){\fontsize{8.83}{10.60}\selectfont $P_1$}
	\put(371.12,62.84){\fontsize{8.83}{10.60}\selectfont $P_2$}
	\put(365.43,154.07){\fontsize{8.83}{10.60}\selectfont $x_n$}
	\put(137.85,190.58){\fontsize{6.54}{7.85}\selectfont $r^1_0$}
	\put(154.73,15.40){\fontsize{6.54}{7.85}\selectfont $r^2_0$}
	\put(268.81,40.01){\fontsize{8.54}{10.24}\selectfont $(0,0)$}
	\put(400.99,149.17){\fontsize{8.54}{10.24}\selectfont $(1,1)$}
	\end{picture}%
	\else
	\setlength{\unitlength}{1bp}%
	\begin{picture}(431.46, 201.36)(0,0)
	\put(0,0){\includegraphics{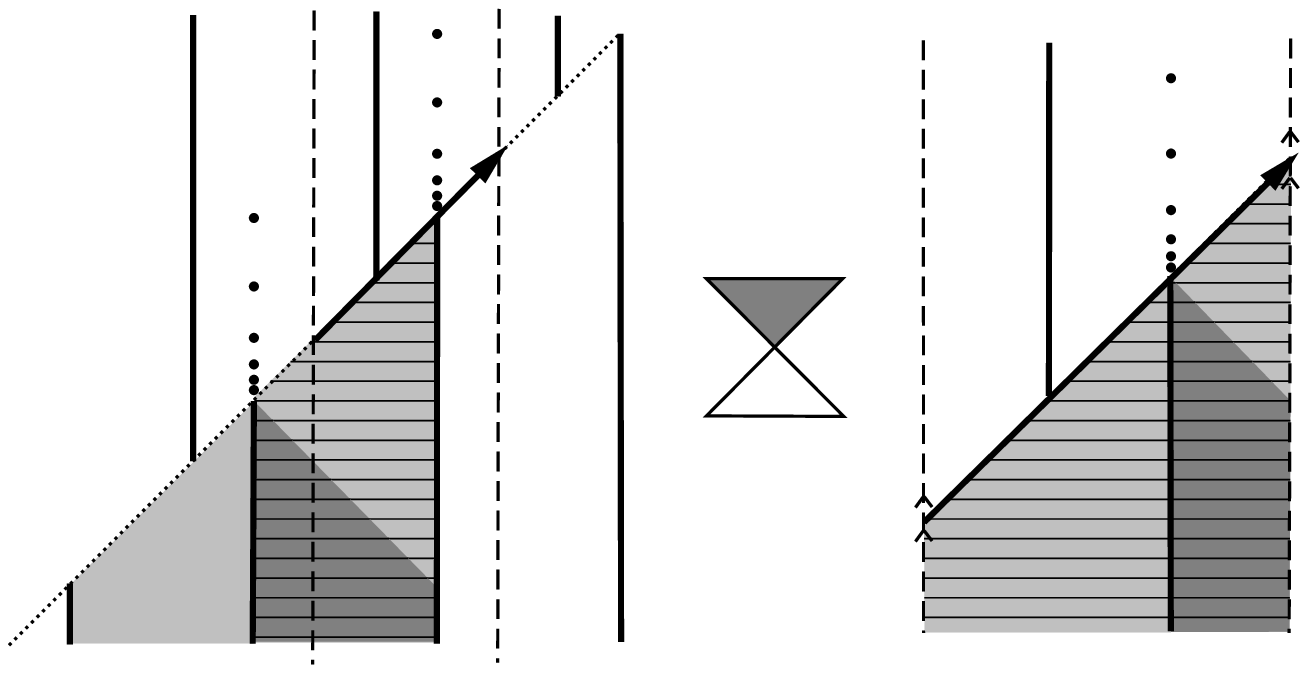}}
	\put(99.42,117.47){\fontsize{8.83}{10.60}\selectfont $p_n$}
	\put(151.47,170.48){\fontsize{8.83}{10.60}\selectfont $1\,p_n$}
	\put(70.45,35.66){\fontsize{8.83}{10.60}\selectfont $\lev{P}{1}$}
	\put(104.98,26.91){\fontsize{8.83}{10.60}\selectfont $\lev{P}{2}$}
	\put(128.16,98.58){\fontsize{8.83}{10.60}\selectfont $1\,\lev{P}{1}$}
	\put(324.48,58.42){\fontsize{8.83}{10.60}\selectfont $P_1$}
	\put(371.12,62.84){\fontsize{8.83}{10.60}\selectfont $P_2$}
	\put(365.43,154.07){\fontsize{8.83}{10.60}\selectfont $x_n$}
	\put(137.85,190.58){\fontsize{6.54}{7.85}\selectfont $r^1_0$}
	\put(154.73,15.40){\fontsize{6.54}{7.85}\selectfont $r^2_0$}
	\put(268.81,40.01){\fontsize{8.54}{10.24}\selectfont $(0,0)$}
	\put(400.99,149.17){\fontsize{8.54}{10.24}\selectfont $(1,1)$}
	\end{picture}%
	\fi
	\caption{\label{fig:ex2} The space $V$ (on the left) is $\mathbb{L}^2$ with two families of lines $\{r_n^1\}_n$ and $\{r_n^2\}_n$ removed, where $r_n^1=\{(1/3+n,y):y\geq 1/3+n\}$ and $r_n^2=\{(2/3+n,y):y\leq 2/3+n\}$. The action of an element $g$ of the group $\mathbb{Z}$ is just a translation of $g$-times the vector $(1,1)$. The quotient space $M=V/\mathbb{Z}$ (on the right) is the space $(0,1)\times \mathbb{R}$ with the points $(0,y)$ and $(1,y+1)$ identified.}
\end{figure}
	The optimality of several previous results is stressed here. For instance, it shows that the non-continuity of $\hat{\pi}$ can be obtained by considering a sequence $\{g_n\}_{n}$ of elements of the group constant. It also shows that if a past chronological sequence has future divergent lifts, then the thesis of Prop. \ref{prop:proypares} can fail. Finally, it shows that the finitely chronology property is not enough to ensure the continuity of the partial maps $\hat{\pi}$ and $\check{\pi}$, being necessary to include them additionally.
	
	\smallskip 
	
	Let us consider an space $V\subset \mathbb{R}^2$ as showed in Figure \ref{fig:ex2}. On such a space, consider $G\equiv \mathbb{Z}$ an isometry group given by the following action:
		
	\[\begin{array}{rcc}
	\mathbb{Z}\times V& \rightarrow& V\\
	(g,p)&\rightarrow& g\cdot p:=p+g(1,1).
\end{array}
	\]
        
	\noindent The quotient $M=V/\mathbb{Z}$ can be seen as a cylinder with some cuts on it (see Figure \ref{fig:ex2} (B)). Let us summarize the properties of the spacetime covering projection $\pi:V\rightarrow M$. On the one hand, and observing Figure \ref{fig:ex2} (B), it follows easily that $\overline{M}$ contains the pairs $(P_1,F)$ and $(P_2,\emptyset)$. Indeed, both sets $P_1,P_2$ are contained in $\downarrow F$, but thanks to the identification of both lateral sides, it follows that $P_2\subset P_1$, so only $P_1$ is maximal on the common past of $F$. However, on $\overline{V}$ we have both pairs $(\lev{P}{1},\lev{F}{})$ and $(\lev{P}{2},\lev{F}{})$, so the thesis on Prop. \ref{prop:proypares} is false on this case.
	
	On the other hand, we can see that the past chronological sequence $\{x_n\}_{n}$ depicted on Figure \ref{fig:ex2} (B) has future divergent lifts. In fact, consider the sequences $\{p_n\}_{n}$ and $\{1\,p_n\}_{n}$ in Figure \ref{fig:ex2} (A), which are both lifts of $\{x_n\}_{n}$. The first sequence converges with the future topology to $\lev{P}{2}$, while the second one converges to $1\,\lev{P_1}{}$. Moreover, it follows (as we have reasoned before) that $\hat{\pi}(\lev{P}{2})\subsetneq \hat{\pi}(\lev{P}{1})$. Therefore, the sequences $\{p_n\}_{n}, \{1\,p_n\}_{n}$ and the points $\lev{P}{1},\lev{P}{2}\in \hat{M}$ fulfil the conditions on Definition \ref{def:divlif} and $\{x_n\}_{n}$ admits future divergent lifts. In particular, we deduce that $\hat{\pi}$ is not continuous.
	
	Finally, it is quite straightforward to see that $(V,G)$ is finitely chronological. Observe that, if $p\ll q$ in $V$, it could exists (at most) one element in $g\in \mathbb{Z}$ such that $p\ll g\,q$ (specifically, $g=\pm 1$).
		
%
%
\end{exe}

\begin{exe}\label{ex:tame} 
	
	\begin{figure}
		\centering
		\ifpdf
		\setlength{\unitlength}{1bp}%
		\begin{picture}(484.81, 203.99)(0,0)
		\put(0,0){\includegraphics{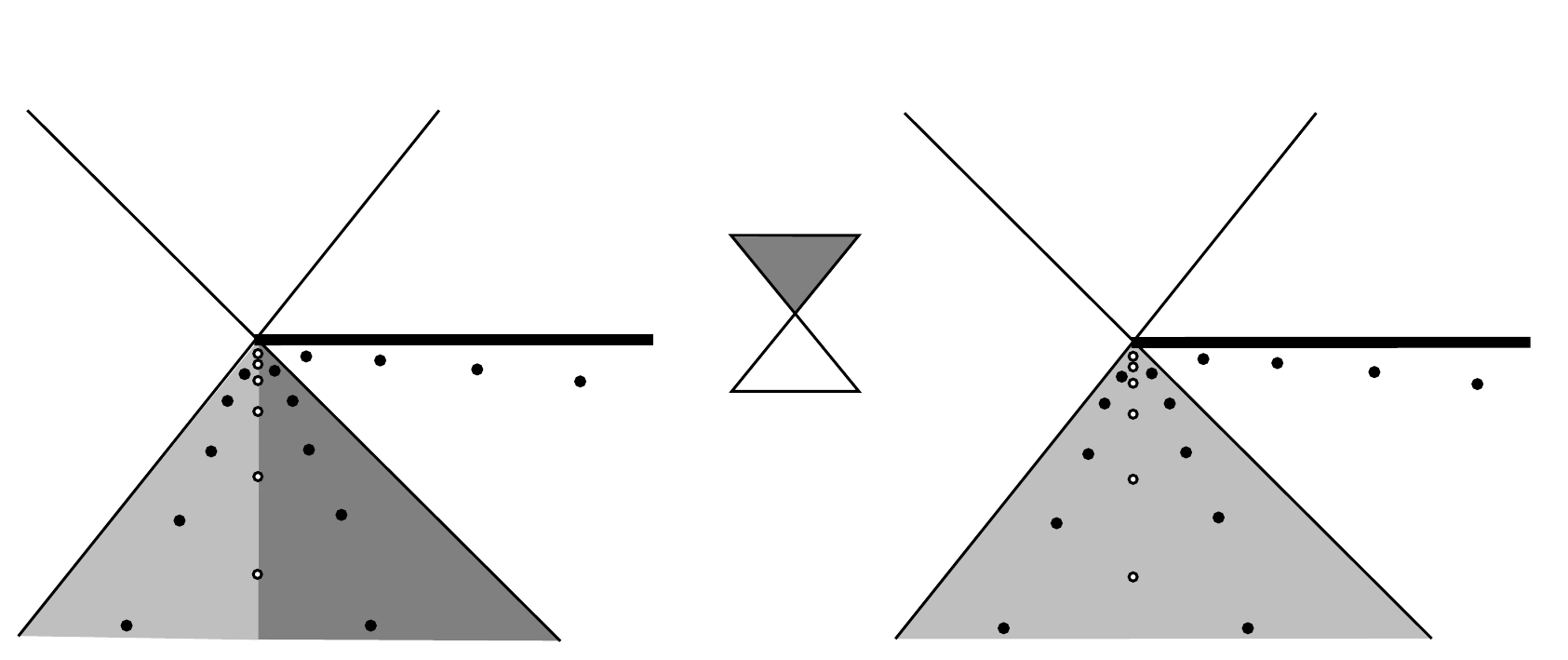}}
		\put(340.37,157.73){\fontsize{14.23}{17.07}\selectfont $F$}
		\put(311.15,12.56){\fontsize{8.54}{10.24}\selectfont $x'_1$}
		\put(327.53,45.56){\fontsize{8.54}{10.24}\selectfont $x'_2$}
		\put(337.34,66.58){\fontsize{8.54}{10.24}\selectfont $x'_3$}
		\put(386.85,12.06){\fontsize{8.54}{10.24}\selectfont $x_1$}
		\put(377.52,46.18){\fontsize{8.54}{10.24}\selectfont $x_2$}
		\put(367.94,67.08){\fontsize{8.54}{10.24}\selectfont $x_3$}
		\put(311.09,23.77){\fontsize{14.23}{17.07}\selectfont $P$}
		\put(68.91,158.54){\fontsize{14.23}{17.07}\selectfont $\lev{F}{}$}
		\put(40.07,12.61){\fontsize{8.54}{10.24}\selectfont $p'_1$}
		\put(56.46,45.60){\fontsize{8.54}{10.24}\selectfont $p'_2$}
		\put(65.88,67.01){\fontsize{8.54}{10.24}\selectfont $p'_3$}
		\put(116.16,12.11){\fontsize{8.54}{10.24}\selectfont $p_1$}
		\put(106.06,46.61){\fontsize{8.54}{10.24}\selectfont $p_2$}
		\put(96.48,67.90){\fontsize{8.54}{10.24}\selectfont $p_3$}
		\put(39.60,24.58){\fontsize{14.23}{17.07}\selectfont $\lev{P'}{}$}
		\put(104.97,25.93){\fontsize{14.23}{17.07}\selectfont $\lev{P}{}$}
		\put(457.87,89.20){\fontsize{8.54}{10.24}\selectfont $y_1$}
		\put(427.85,89.82){\fontsize{8.54}{10.24}\selectfont $y_2$}
		\put(398.74,89.82){\fontsize{8.54}{10.24}\selectfont $y_3$}
		\put(180.20,90.01){\fontsize{8.54}{10.24}\selectfont $q_1$}
		\put(150.18,90.63){\fontsize{8.54}{10.24}\selectfont $q_2$}
		\put(121.07,90.63){\fontsize{8.54}{10.24}\selectfont $q_3$}
		\put(300.15,184.49){\fontsize{17.07}{20.49}\selectfont $M$}
		\put(21.01,184.99){\fontsize{17.07}{20.49}\selectfont $V$}
		\end{picture}%
		\else
		\setlength{\unitlength}{1bp}%
		\begin{picture}(484.81, 203.99)(0,0)
		\put(0,0){\includegraphics{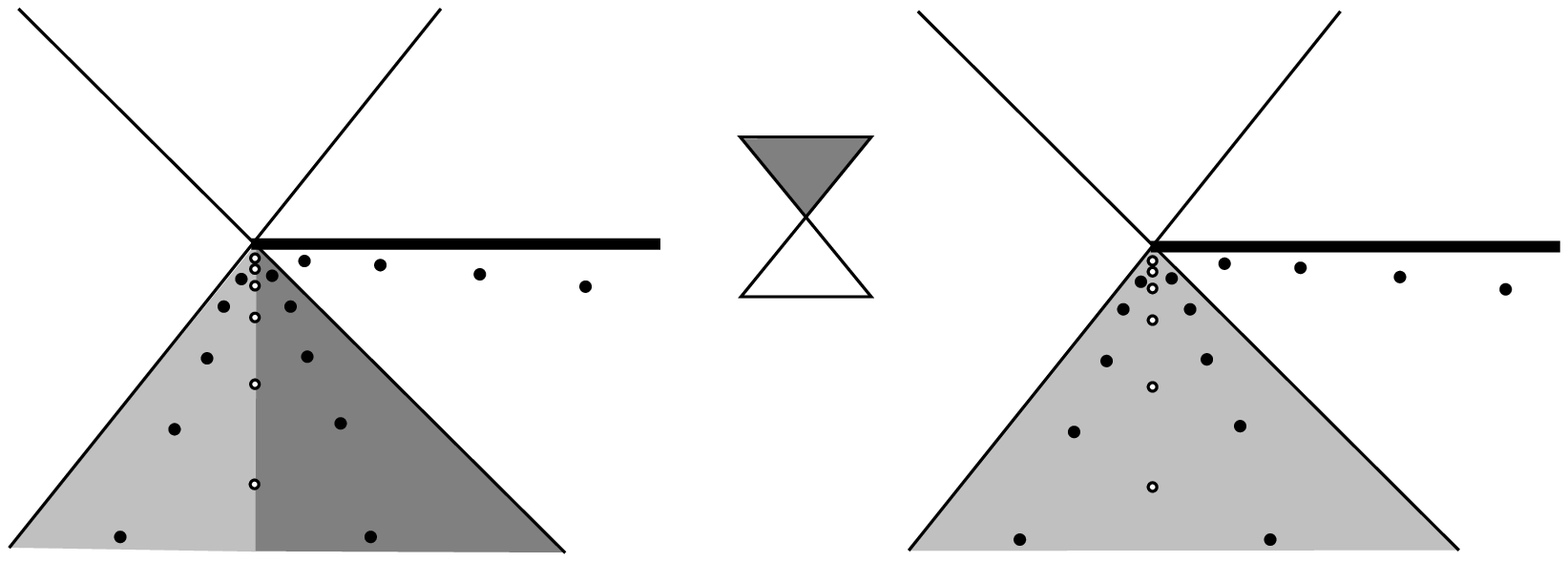}}
		\put(340.37,157.73){\fontsize{14.23}{17.07}\selectfont $F$}
		\put(311.15,12.56){\fontsize{8.54}{10.24}\selectfont $x'_1$}
		\put(327.53,45.56){\fontsize{8.54}{10.24}\selectfont $x'_2$}
		\put(337.34,66.58){\fontsize{8.54}{10.24}\selectfont $x'_3$}
		\put(386.85,12.06){\fontsize{8.54}{10.24}\selectfont $x_1$}
		\put(377.52,46.18){\fontsize{8.54}{10.24}\selectfont $x_2$}
		\put(367.94,67.08){\fontsize{8.54}{10.24}\selectfont $x_3$}
		\put(311.09,23.77){\fontsize{14.23}{17.07}\selectfont $P$}
		\put(68.91,158.54){\fontsize{14.23}{17.07}\selectfont $\lev{F}{}$}
		\put(40.07,12.61){\fontsize{8.54}{10.24}\selectfont $p'_1$}
		\put(56.46,45.60){\fontsize{8.54}{10.24}\selectfont $p'_2$}
		\put(65.88,67.01){\fontsize{8.54}{10.24}\selectfont $p'_3$}
		\put(116.16,12.11){\fontsize{8.54}{10.24}\selectfont $p_1$}
		\put(106.06,46.61){\fontsize{8.54}{10.24}\selectfont $p_2$}
		\put(96.48,67.90){\fontsize{8.54}{10.24}\selectfont $p_3$}
		\put(39.60,24.58){\fontsize{14.23}{17.07}\selectfont $\lev{P'}{}$}
		\put(104.97,25.93){\fontsize{14.23}{17.07}\selectfont $\lev{P}{}$}
		\put(457.87,89.20){\fontsize{8.54}{10.24}\selectfont $y_1$}
		\put(427.85,89.82){\fontsize{8.54}{10.24}\selectfont $y_2$}
		\put(398.74,89.82){\fontsize{8.54}{10.24}\selectfont $y_3$}
		\put(180.20,90.01){\fontsize{8.54}{10.24}\selectfont $q_1$}
		\put(150.18,90.63){\fontsize{8.54}{10.24}\selectfont $q_2$}
		\put(121.07,90.63){\fontsize{8.54}{10.24}\selectfont $q_3$}
		\put(300.15,184.49){\fontsize{17.07}{20.49}\selectfont $M$}
		\put(21.01,184.99){\fontsize{17.07}{20.49}\selectfont $V$}
		\end{picture}%
	 	\fi
	 	\caption{\label{fig:ex4} The space $M$ (on the right) is $\mathbb{L}^2$ with the line $r=\{(x,0):x \geq 0\}$ and the sequence of points $\{(0,-\frac{1}{n})\}_n$ removed, while $V$ is the universal cover of $M$. On the first one, associated to the point $(0,0)$ we have the point $(P,F)\in \partial M$. However, the set $P$ lifts to a fixed fibre $\{0\}\times M \subset V$ as two different terminal past sets $\lev{P}{}$ and $\lev{P'}{}$, creating two different points $(\lev{P'}{},F),(\lev{P}{},\emptyset)\in \partial V$. In particular, it follows that the sequence $\{y_n\}_n$ depicted on the right is not convergent, while its lifts $\{q_n\}_n$ converges to $(\lev{P}{},\emptyset)$.}
	 \end{figure}
	
The aim of this example is threefold. On the one hand, it will give an example of a non tame spacetime covering projection. On the other, it will show that even when $M$ does not admit constant sequences with future divergent lifts, the $G$-orbits can be non closed.
 Finally, it will also show that the proof of Prop. \ref{prop:contparimtot} is optimal with respect to the condition on the projection. In fact, we will show the existence of a sequence $\{q_n\}_{n} \subset V$ and a TIP $\lev{P}{}\in \hat{V}$ with $\lev{P}{}\sim_{S}\emptyset$ and such that $\overline{P}{}\in \hat{L}_{V}(\{I^-(q_n)\}_{n})$, $P\in \hat{L}_{M}(\{I^-(y_n)\}_{n})$ but $P\sim_{S} F$ with $F\neq \emptyset$. 

\smallskip 

	Let us consider the Lorentz manifold \[M=\mathbb{L}^2\setminus\left(\{[0,\infty)\times 0\}\cup \cup_{n} \{(0,-\frac{1}{n})\}\right)\]
	(see Figure \ref{fig:ex4}), and take $V$ its universal cover. The behaviour of the lifts of curves in $M$ to $V$ behaves essentially in the same manner described in Example \ref{ex:exemHarris}, that is, it contains a numerable family of copies of $M$ (which will be denoted again by $\{n\}\times M$) glued together accordingly; and whenever a curve $\gamma\subset M$ pass between two holes of $M$, the initial point and the endpoint of the lifted curve $\lev{\gamma}{}$ live in two different fibres of such a numerable family.
	
	It follows that the point $(0,0)\in \mathbb{R}^2$ has associated in $\overline{M}$ a singular point $(P,F)\in \overline{M}$. However, the lift of the terminal set $P$ in a concrete fibre, say $\{0\}\times M$, determines two different terminal sets $\lev{P}{},\lev{P'}{}$. The reason is simple, any timelike curve joining a point of the sequence $\{x_i\}_{i}$ with $\{x'_i\}_{i}$ should pass between two holes of $M$, and so, its lift moves along different fibres. Moreover, from construction, we have that for each $p_i$ there exists $g_i$ ensuring that $p_i\in g_i\,\lev{P'}{}$. However, the sequence $\{g_i\}_{i}$ cannot be considered constant (not even up to a subsequence), so there is no $g\in G$ such that $\lev{P}{}\subset g\,\lev{P'}{}$ and the projection cannot be tame. Moreover, it follows from the construction that $\lev{P}{}\subset {\rm LI}(\{g_n\,\lev{P'}{}\}_n)$ and it is maximal on the superior limit, i.e., $\lev{P}{}\in \hat{L}_{V}(\{g_n\,\lev{P'}{}\}_n)$. Therefore, the $G$-orbit $\{g\,\lev{P'}{}\}_{g\in G}$ is not closed as $\lev{P}{}$ is an element not belonging to the $G$-orbit of $\lev{P'}{}$ but which is in its closure.
	
	Let us now show the existence of a sequence $\{q_n\}_{n}$ as described in the first paragraph of the example. Consider a sequence $\{y_n\}_{n}$ as in Figure \ref{fig:ex4} and $\{q_n\}_{n}$ its lift in the fibre $\{0\}\times M\subset V$. As we can see in the figure, $P\in \hat{L}_{M}(\{I^-(y_n)\}_{n})$ and $\lev{P}{}\in \hat{L}_{V}(\{I^-(q_n)\}_{n})$. Moreover, as we have mention before, $P\sim_S F$ with $F\neq\emptyset$. So, it only rest to show that $\lev{P}{}\sim_S\emptyset$. But this follows from the fact that $\uparrow \lev{P}{}=\emptyset$ (recall that whenever a timelike curve moves through the space between two holes, it pass to another fibre in $V$).  
%
%
	Summarizing, we have shown in particular that the map $\overline{\pi}$ is not continuous. The sequence $\{q_n\}_{n}$ converges to the point $(\lev{P}{},\emptyset)\in \overline{V}$, while its projection $\{y_n\}_{n}$ does not converge to $(P,F)\in \overline{M}$ (note that ${\rm LI}(\{I^+(y_n)\}_{n})=\emptyset$).
	
%
\end{exe} 
 
 \begin{exe}\label{ex:exe3}
 	
 	\begin{figure}
 		\centering
 		\ifpdf
 		\setlength{\unitlength}{1bp}%
 		\begin{picture}(231.61, 216.61)(0,0)
 		\put(0,0){\includegraphics{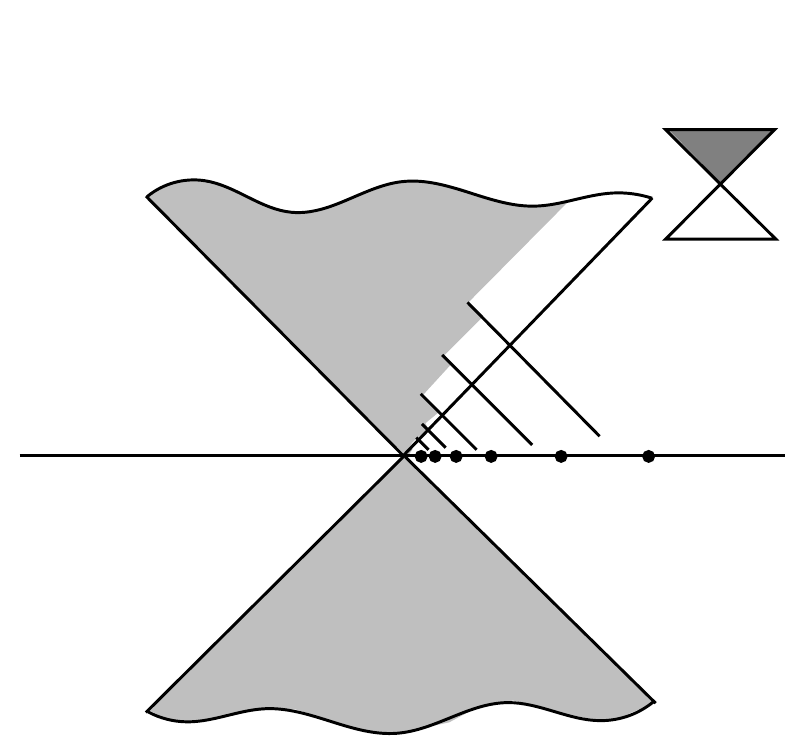}}
 		\put(106.42,19.53){\fontsize{14.23}{17.07}\selectfont $P$}
 		\put(102.64,138.73){\fontsize{14.23}{17.07}\selectfont $F$}
 		\put(184.04,76.25){\fontsize{11.69}{12.83}\selectfont $x_n$}
 		\put(160.35,104.71){\fontsize{14.54}{15.24}\selectfont $S_n$}
 		\put(92.81,193.16){\fontsize{18.76}{23.31}\selectfont $M$}
 		\end{picture}%
 		\else
 		\setlength{\unitlength}{1bp}%
 		\begin{picture}(231.61, 216.61)(0,0)
 		\put(0,0){\includegraphics{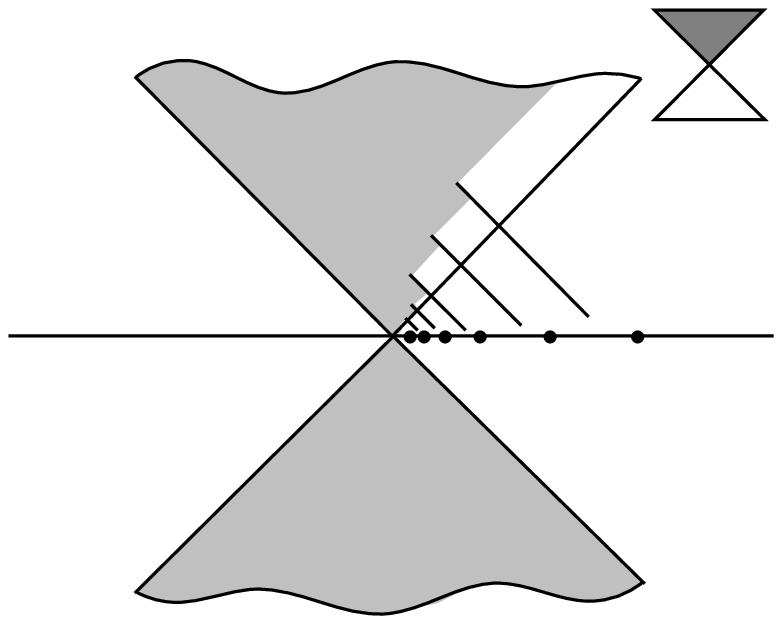}}
 		\put(106.42,19.53){\fontsize{14.23}{17.07}\selectfont $P$}
 		\put(102.64,138.73){\fontsize{14.23}{17.07}\selectfont $F$}
 		\put(184.04,79.25){\fontsize{5.69}{6.83}\selectfont $x_n$}
 		\put(160.35,104.71){\fontsize{8.54}{10.24}\selectfont $S_n$}
 		\put(92.81,193.16){\fontsize{22.76}{27.31}\selectfont $M$}
 		\end{picture}%
 		\fi
 		\caption{\label{fig:ex3} Let $M$ be $\mathbb{L}^2$ with the segments $S_n$ removed. On this example, the sets $P$ and $F$ are $S$-related and the sequence $\{x_n\}_{n}$ has $(P,F)$ on its limit. However, for $n$ big enough, the curves from $x_n$ to points in $F$ should pass between $S_n$ and $S_{n+1}$. In particular, if $\{p_{n}\}_{n} \subset \{0\}\times M$ is a lift of $\{x_n\}_{n}$, $g\,\lev{F}{}\not\subset {\rm LI}(\{I^+(p_n)\}_{n})$ for any $g\in G$. }
 		
 	\end{figure}
 	
 	This simple example will stress that the openness of $\hat{\jmath}$ and $\check{\jmath}$ is not enough to ensure the openness of $\overline{\jmath}$, even 
 	 when $\overline{\pi}$ is well defined and both $\hat{\jmath}$ and $\check{\jmath}$ are continuous. 
 	
 	Let us consider $M$ an spacetime as in Fig. \ref{fig:ex3} and $V$ the universal cover of $M$. On $M$, both sets $P$ and $F$ are $S$-related and the sequence $\{x_n\}_{n}$ converges to the point $(P,F)$. On $V$, and thanks that we can take curves joining points from $P$ to $F$ without moving between any $S_n$ and $S_{n+1}$, we can obtain lifts $\lev{P}{}$ and $\lev{F}{}$ with $\lev{P}{}\sim_S\lev{F}{}$ (we can assume that both sets live in the fibre $\{0\}\times M$). 
 	
 	However, no lift of the sequence $\{x_n\}_{n}$ converges to $(\lev{P}{},\lev{F}{})$. In fact, let us take $\{p_n\}_{n}$ a fixed lift of $\{x_n\}_{n}$ contained in $\{0\}\times M$. It is not difficult to observe that this lift is the only one satisfying that $\lev{P}{}\in \hat{L}_{V}(\{I^-(p_n)\}_{n})$. Even so, it is not true that $\lev{F}{}\in \check{L}_{V}(\{I^+(p_n)\}_{n})$, as any timelike curve joining a point $x_n$ with $F$ should pass through two lines $S_n$, hence its lift moves between two different fibres. Therefore, the sequence $\{x_n\}_{n}$ has no natural convergent lift and the map $\overline{\jmath}$ is not open.
 	
 	 Finally, let us observe that $\hat{\jmath}$ and $\check{\jmath}$ is continuous. This follows by reasoning as in Example \ref{ex:exe1}, recalling that the only cases where the continuity could fail is considering sequences $\{y_n\}_{n}$ converging to $(0,0)$.
 	

 \end{exe}

\begin{exe}
	\label{ex:ex5}
	\begin{figure}
		\centering
		\ifpdf
		\setlength{\unitlength}{1bp}%
		\begin{picture}(493.37, 215.16)(0,0)
		\put(0,0){\includegraphics{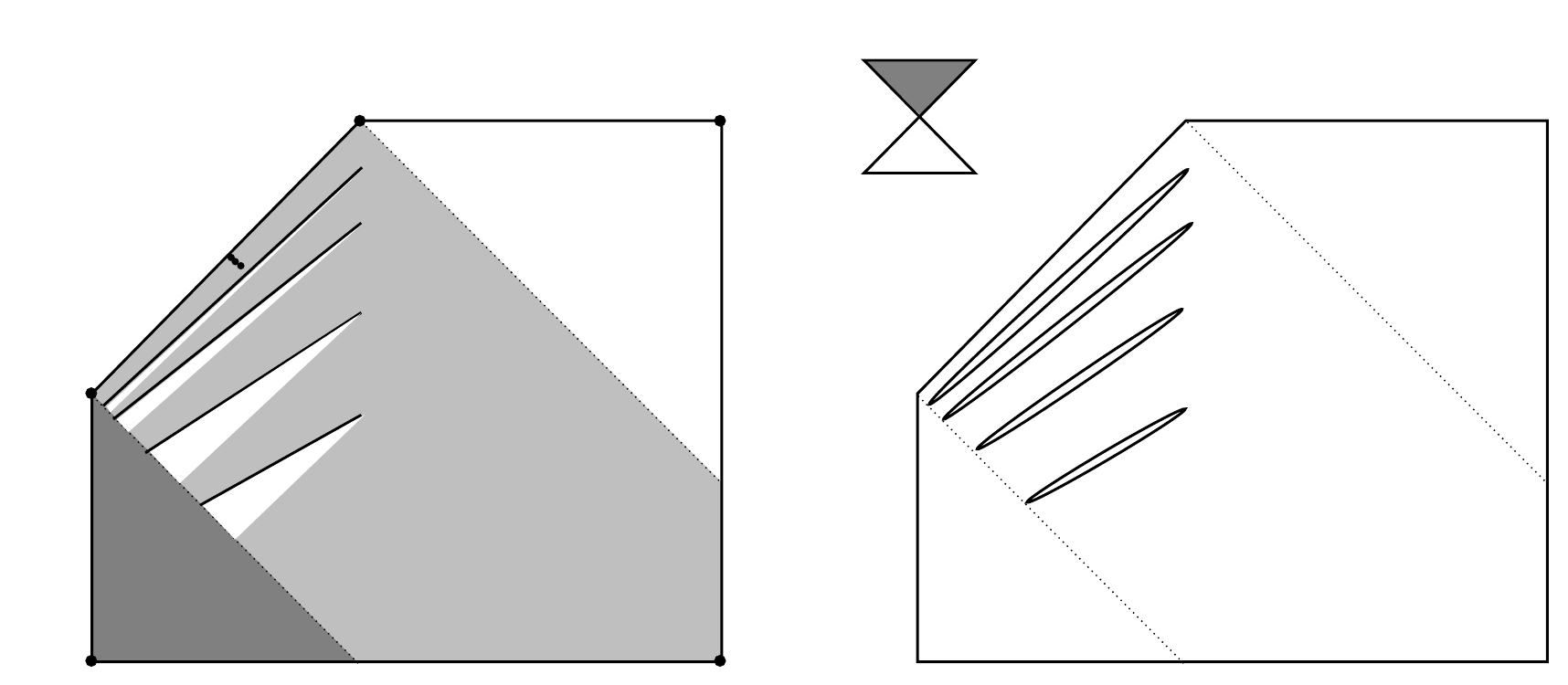}}
		\put(166.40,198.38){\rotatebox{360.00}{\fontsize{14.23}{17.07}\selectfont \smash{\makebox[0pt][l]{$M$}}}}
		\put(155.17,62.80){\rotatebox{360.00}{\fontsize{14.23}{17.07}\selectfont \smash{\makebox[0pt][l]{$P'_0$}}}}
		\put(37.25,32.26){\rotatebox{360.00}{\fontsize{14.23}{17.07}\selectfont \smash{\makebox[0pt][l]{$P$}}}}
		\put(4.71,90.57){\fontsize{8.54}{10.24}\selectfont $(0,0)$}
		\put(106.45,179.89){\fontsize{8.54}{10.24}\selectfont $(1,1)$}
		\put(106.45,129.89){\fontsize{12.54}{14.24}\selectfont $S_n$}
		\put(224.67,179.89){\fontsize{8.54}{10.24}\selectfont $(2,1)$}
		\put(230.27,8.51){\fontsize{8.54}{10.24}\selectfont $(2,-1)$}
		\put(-3.67,8.55){\fontsize{8.54}{10.24}\selectfont $(0,-1)$}
		\put(379.40,198.38){\fontsize{14.23}{17.07}\selectfont $\partial M$}
		\end{picture}%
		\else
		\setlength{\unitlength}{1bp}%
		\begin{picture}(493.37, 215.16)(0,0)
		\put(0,0){\includegraphics{ex5.pdf}}
		\put(166.40,198.38){\rotatebox{360.00}{\fontsize{14.23}{17.07}\selectfont \smash{\makebox[0pt][l]{$M$}}}}
		\put(155.17,62.80){\rotatebox{360.00}{\fontsize{14.23}{17.07}\selectfont \smash{\makebox[0pt][l]{$P'_0$}}}}
		\put(37.25,32.26){\rotatebox{360.00}{\fontsize{14.23}{17.07}\selectfont \smash{\makebox[0pt][l]{$P$}}}}
		\put(4.71,90.57){\fontsize{8.54}{10.24}\selectfont $(0,0)$}
		\put(106.45,179.89){\fontsize{8.54}{10.24}\selectfont $(1,1)$}
		\put(106.45,129.89){\fontsize{12.54}{14.24}\selectfont $S_n$}
		\put(224.67,179.89){\fontsize{8.54}{10.24}\selectfont $(2,1)$}
		\put(230.27,8.51){\fontsize{8.54}{10.24}\selectfont $(2,-1)$}
		\put(-3.67,8.55){\fontsize{8.54}{10.24}\selectfont $(0,-1)$}
		\put(379.40,198.38){\fontsize{14.23}{17.07}\selectfont $\partial M$}
		\end{picture}%
		\fi
		\caption{\label{fig:fig4} Let $M$ be an open set of $\mathbb{L}^2$ with the segments $S_n$ removed as in the left. Even if the segments are spacelike, the terminal set $P'_0$ (the past of the boundary point $(1,1)$) contains $P_0$ (the past of $(0,0)$).\protect \\ 
		The c-boundary of $M$ is represented on the right of the figure. Observe that, in the c-boundary, each segment $S_n$ is represented by a thin ellipse. This is due the fact that any non-extremal point of the segment is reachable by a future and past inextensible timelike curve, but the corresponding terminal sets are not $S$-related. So, such points are represented in the c-boundary as two points of the form $(P,\emptyset)$ and $(\emptyset,F)$. Only on the extremal points the corresponding TIP and TIF are $S$-related, and so, they determine only one point in the c-boundary.}
	\end{figure}

	Finally, we will make a small variation of Example \ref{ex:exemHarris} in order to show that, even when $\overline{V}$ has no lightlike points, $\overline{M}$ could have them.
	Let 
	
	\[M= \left(\{(x,y)\in\R^2: 0< x< 1, -1\leq y\leq x\} \cup \left([1,2]\times [-1,1]\right)\right) \setminus \cup_{n} S_n
	\] be a manifold as in Figure \ref{fig:fig4} endowed with the induced Minkowski metric, where each $S_n$ is a spacelike segments obtained from a small variation of the lightlike segment joining $(1/n,1/n)$ and $(1,1-2/n)$. Due the fact that $(1/n,1/n)\ll (1,1-2/(n+1))$, such a variation can be taken in such a way that the past of the upper-right extreme of $S_{n+1}$ contains the down-left extreme of $S_n$. Let $V$ be the universal cover of $M$.
%
	
	The c-boundary (and so, the c-completion) of $M$ is represented on the right of Figure \ref{fig:fig4} and it is formed almost entirely by spacelike and timelike points. However, the points $(0,0)$ and $(1,1)$ are represented on the boundary by pairs of the form $(P_0,\emptyset)$ and $(P'_0,\emptyset)$ with $P_0\subset P'_0$\footnote{Observe that $(1/n,-1/n)\ll (1,1-2/(n+1))$, so it is possible to obtain timelike curves passing between $S_n$ and $S_{n+1}$.}, hence $M$ has lightlike boundary points. Topologically the c-completion $\overline{M}$ is Hausdorff, as it has the induced topology from $\R^2$.
	
	Now, if we look into the lifts of boundary points from $\overline{M}$ to $\overline{V}$, we observe that timelike and spacelike points are lifted to timelike and spacelike points respectively. However, there exist no lifts $(\lev{P}{},\emptyset)$ and $(\lev{P'}{},\emptyset)$ of $(P,\emptyset)$ and $(P',\emptyset)$ resp. such that $\lev{P}{}\subset \lev{P'}{}$, as any timelike curve moving from a point close to $(1,1)$ to a point close to $(0,0)$ should move between two segments $S_m$ and $S_{m+1}$, and so, it will move between different fibres of $V$ (recall again the behaviour of the universal covering described on Example \ref{ex:exemHarris}). Therefore, $\overline{V}$ will have no lightlike points. Finally, and due the fact that the topology around a point of $\overline{V}$ coincides again with the induced topology from $\R^2$, we have that $\overline{V}$ is also Hausdorff.  
      \end{exe}


\section{A physical application: Quotients on Robertson-Walker Spacetimes}\label{sec:application}
As a final section of this paper, we will show how our results are applicable to concrete and physically relevant models of spacetimes. Our main aim will be to apply Corollaries \ref{cor:aplicacion} and \ref{cor:expansiblyhomeomorphism} where, in addition to the finite chronology, we need to impose on $\overline{V}$ Hausdorffness and the non existence of lightlike points on $\overline{M}$ (recall also Lemma \ref{lem:lemnew}).

We will focus on the case of Robertson Walker models, even if our results are extensible to other more general ones (see Rem \ref{rem:aux4}). The c-completion of such a models is well known \cite[Section 4.2]{AF}, but we include here the details for completeness. Observe that we are not going to follow the original approach proposed in \cite{AF}, but the approach introduced in \cite[Section 3]{FHSIso2}. 

\smallskip 

Let $(\Sigma,\mathfrak{g}_{\Sigma})$ be a Riemannian manifold. Denote by $t:\R\times \Sigma\rightarrow \mathbb{R}$ and $\pi_{\Sigma}:\R\times \Sigma\rightarrow \Sigma$ the corresponding projections; and consider a smooth positive function $\alpha:\R\rightarrow (0,\infty)$. A Robertson Walker model with base $\Sigma$ and warping function $\alpha$ is given then by the pair $(V,\mathfrak{g})$, where

\begin{equation}\label{eq:RWModel}
V=\R\times \Sigma,\quad\hbox{and}\quad \mathfrak{g}=-dt^2 + (\alpha\circ t)\,\pi^*_{\Sigma}(\mathfrak{g}_{\Sigma}).
\end{equation} 
For simplicity, $\alpha\circ t$ will be denoted just by $\alpha(t)$ and, whenever there is no confusion, we will omit the pullback $\pi^*_{\Sigma}$. The chronological relation on these models is characterized as (see \cite[Prop. 3.1]{FHSIso2}):

\[
(t_0,x_0)\ll (t_1,x_1) \iff d(x_0,x_1)<\int_{t_0}^{t_1} \frac{1}{\sqrt{\alpha(s)}}ds
\]
where $d$ denotes the distance on $\Sigma$ defined by $\mathfrak{g}_{\Sigma}$. Thanks to previous characterization, it follows that any future terminal set $P$ is determined by the so-called Busemann functions. Such functions are defined in the following way: given a curve $c:[a,\Omega)\rightarrow \Sigma$ satisfying that $\mathfrak{g}_{\Sigma}(\dot{c},\dot{c})<1$, we define the associated Busemann function as:

\[
b_{c}(\cdot)=\lim_{t\rightarrow \Omega}\int_{0}^t \frac{1}{\sqrt{\alpha(s)}}ds - d(\cdot,c(t))
\]
Then, for any indecomposable past set $P$, it follows that $P=P(b_{c})$ for some curve $c$ with $\mathfrak{g}_{\Sigma}(\dot{c},\dot{c})<1$, where

\[
P(b_{c})=\{(t,x)\in V: t<b_{c}(x)\}
\]
(see \cite[Equation (3.3)]{FHSIso2}). If we have either $\Omega<\infty$; or $\Omega=\infty$ and $\int_{0}^{\infty}\frac{1}{\sqrt{\alpha(s)}}ds<\infty$; it follows that $c(t)\rightarrow x^*\in \Sigma_{C}$, where $\Sigma_{C}$ denotes the Cauchy completion associated to $(\Sigma,g_{\Sigma})$. Moreover, $b_{c}(\cdot)=d_{(\Omega,x^*)}(\cdot):=\int_{0}^\Omega \frac{1}{\sqrt{\alpha(s)}}ds - d(\cdot,x^*)$ (see \cite[Equations (3.7) and (3.8)]{FHSIso2}. In this way, and under the assumption of previous integral condition, we have that the future causal completion has the following point set structure:
\[
\hat{V}\equiv\Sigma_{C}\times \{\mathbb{R}\cup \{\infty\}\}
\] 
The study is completely analogous for the past orientation, where if we assume the integral condition $\int_{-\infty}^{0}\frac{1}{\sqrt{\alpha}(s)}ds<\infty$, the past causal completion is identified with:

\[
\check{V}\equiv\Sigma_{C}\times \{\mathbb{R}\cup \{-\infty\}\}.
\]

Finally, for the (total) c-completion, we only need to observe that past and future indecomposable past sets are $S$-related if they are associated to the same pair $(\Omega,x^*)\in \R\times \Sigma_{C}$ (see \cite[Equation (3.14)]{FHSIso2} and the paragraph above). In conclusion, the following result follows:

\begin{prop}\label{prop:propfinal}
	Let $(V,\mathfrak{g})$ be a Robertson Walker model as in \eqref{eq:RWModel}, and assume the following integral conditions
	\begin{equation}\label{eq:intcond} 
	\int_{0}^{\infty}\frac{1}{\sqrt{\alpha(s)}}ds<\infty,\quad\int_{-\infty}^{0}\frac{1}{\sqrt{\alpha(s)}}ds<\infty.
	\end{equation} 
	Then, the c-completion, as point set, becomes \[\overline{V}\equiv \Sigma_{C}\times \{\{-\infty\}\cup \mathbb{R}\cup \{\infty\}\}.\] Chronologically, the c-boundary has two copies, one for the future and one for the past, of the Cauchy completion $\Sigma_{C}$ formed by spacelike points; and timelike lines over each point of the Cauchy boundary of $\Sigma$. Topologically, and assuming that $\Sigma_{C}$ is locally compact, the chronological topology on $\overline{V}$ coincides with the product topology in $\Sigma_{C}\times \{\{-\infty\}\cup \mathbb{R}\cup \{\infty\}\}$ 
\end{prop}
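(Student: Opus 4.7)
The plan is to build the identification fiber-by-fiber, leveraging the Busemann-function parameterization of indecomposable sets recalled just above the statement, and then to verify the three levels (point-set, chronology, topology) in succession.

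First, at the point-set level, every future-terminal set has the form $P(b_c)$ for a curve $c:[a,\Omega)\to\Sigma$ with $\mathfrak{g}_\Sigma(\dot c,\dot c)<1$. Under $\int_0^\infty \alpha(s)^{-1/2}\,ds<\infty$ any such $c$ either reaches $\Omega<\infty$ (so $c(t)$ Cauchy-converges to $x^*\in\Sigma_C$), or has $\Omega=\infty$ while the integral remains bounded; in both cases the Busemann function reduces to $d_{(\Omega^*,x^*)}$ for a pair $(x^*,\Omega^*)\in\Sigma_C\times(\R\cup\{\infty\})$, giving $\hat V\equiv \Sigma_C\times(\R\cup\{\infty\})$, with PIPs identified to $\Sigma\times\R$. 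Dually, the second integral condition yields $\check V\equiv\Sigma_C\times(\R\cup\{-\infty\})$. The S-relation matches future and past data that share the same $(x^*,\Omega^*)\in\Sigma_C\times\R$; TIPs with $\Omega^*=+\infty$ have no S-partner (yielding $P\sim_S\emptyset$) and dually at $-\infty$. Gluing the three slices along the common $\Sigma_C\times\R$ produces the stated $\overline V\equiv\Sigma_C\times(\{-\infty\}\cup\R\cup\{+\infty\})$.

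For the chronological structure I would apply \eqref{eq:defcronorel} fiber by fiber. Two points in $\Sigma_C\times\{+\infty\}$ both have $F=\emptyset$, so their chronological product is empty; symmetrically for the bottom fiber $\Sigma_C\times\{-\infty\}$. On the finite-time slice over a fixed $x^*\in\partial_C\Sigma$, points $(x^*,t_0)$ and $(x^*,t_1)$ satisfy $F_{(x^*,t_0)}\cap P_{(x^*,t_1)}\ne\emptyset$ iff $t_0<t_1$, via the chronological characterization $d(\cdot,\cdot)<\int_{t_0}^{t_1}\alpha(s)^{-1/2}\,ds$ applied to a constant-space approximation to $x^*$. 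These give respectively the spacelike sky/ground copies of $\Sigma_C$ and the timelike lines over $\partial_C\Sigma$.

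The topological assertion is the delicate step, and where local compactness of $\Sigma_C$ enters. The goal is to show that $L_{chr}$-convergence on $\overline V$ coincides with product-topology convergence on $\Sigma_C\times(\{-\infty\}\cup\R\cup\{+\infty\})$. For the forward direction, given $(x_n^*,t_n^*)\to(x^*,t^*)$ in the product topology, the Busemann functions $d_{(t_n^*,x_n^*)}$ converge locally uniformly (by their $1$-Lipschitz character in the spatial variable and continuity in $t$), so via the ${\rm LI}/{\rm LS}$ formulation of \eqref{eq:deflimcrono} one obtains $P\subset {\rm LI}(P_n)$ and $F\subset {\rm LI}(F_n)$; maximality in ${\rm LS}$ is enforced because a strictly larger IP would be generated by a Busemann function strictly dominating $d_{(t^*,x^*)}$ on a neighborhood, contradicting the pointwise limit. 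The converse uses local compactness: if $(P_n,F_n)\to(P,F)$ in $L_{chr}$ with representatives $(x_n^*,t_n^*)$ and $(x^*,t^*)$, any subsequence admits by local compactness a further sub-subsequence with $x_{n_k}^*$ Cauchy-convergent and $t_{n_k}^*$ convergent (monotonicity plus boundedness of the Busemann data under the integral condition), and every such subsequential limit must agree with $(x^*,t^*)$ by the bijective correspondence already established, so the whole sequence converges in the product topology. The boundary cases $t^*=\pm\infty$ and $x^*\in\partial_C\Sigma$ require only the straightforward verification that escape to $\pm\infty$ in time produces exactly the empty past/future component — which follows directly from the integral bound — so that the three fibers glue topologically as claimed.
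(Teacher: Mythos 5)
Your route is genuinely different from the paper's: the paper settles this proposition essentially by citation, reading the point-set and chronological structure off the Busemann-function discussion preceding the statement (see also \cite{AF}), and obtaining the topological claim by invoking \cite[Proposition 5.24]{FHSBuseman}, with the only new observation being that this result still applies when $\Omega=\infty$ thanks to \eqref{eq:intcond}. You instead re-prove the topological equivalence directly. The forward implication of your argument is fine: product convergence of $(x_n^*,t_n^*)$ gives uniform convergence of the functions $d_{(t_n^*,x_n^*)}$, hence the ${\rm LI}$-inclusions needed for \eqref{eq:limitsimple}, and your maximality argument for limits with an empty component (the ${\rm LS}$ is contained in the closed region below the limit function, and any open past set contained there lies in $P$) is sound.

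Two steps of your converse are not justified as written. First, ``any subsequence admits by local compactness a further sub-subsequence with $x^*_{n_k}$ Cauchy-convergent'' does not follow from local compactness alone: in a general metric space, bounded sequences in a locally compact space need not have convergent subsequences. To make this work you must first extract boundedness of $\{x_n^*\}$ from the ${\rm LI}$-inclusion and then use that $\Sigma_C$ is a \emph{complete length space}, so that local compactness yields properness (Hopf--Rinow--Cohn-Vossen); alternatively, the compactness detour can be avoided entirely, since the inclusions $P\subset {\rm LI}(\{P_n\}_n)$ (and $F\subset{\rm LI}(\{F_n\}_n)$ when $F\neq\emptyset$) already force $d(x^*,x_n^*)\to 0$ and $t_n^*\to t^*$ by direct estimates with the cones determined by $d_{(t_n^*,x_n^*)}$. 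Second, ``every such subsequential limit must agree with $(x^*,t^*)$ by the bijective correspondence already established'' is a gap: a point-set bijection does not exclude two distinct chronological limits for the same sequence; you need the explicit nesting argument that $P_{(x^*,t^*)}\subset P_{(y^*,s^*)}$ together with $F_{(x^*,t^*)}\subset F_{(y^*,s^*)}$ forces $(x^*,t^*)=(y^*,s^*)$ (or again the direct estimate, which makes this step unnecessary). A smaller point: the ``spacelike'' character of the two infinity copies means that no IP (resp.\ IF) strictly contains the corresponding terminal set (cf.\ the footnote on lightlike boundary points in Theorem \ref{thm:main2}), not merely that two points of the top fibre fail to be chronologically related, so your justification there addresses a weaker property, although the correct one is easily checked from the explicit form of the terminal sets under \eqref{eq:intcond}.
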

\begin{proof}
	The pointset and causal structure can be deduced from previous comments (see also \cite[Theorem 4.2]{AF}). For the topological structure, we only need to recall \cite[Proposition 5.24]{FHSBuseman} is also applicable to this approach and, moreover, it is also true when $\Omega=\infty$ if the integral condition holds.
\end{proof}

Therefore, when the integral conditions are satisfied and the associated Cauchy completion $\Sigma_{C}$ is locally compact, $\overline{V}$ satisfies both, it is Hausdorff and has no lightlike points. Therefore, and as a consequence of Cor. \ref{cor:expansiblyhomeomorphism}:

\begin{thm}\label{thm:aux1} 
	Let $(V,\mathfrak{g})$ be a Robertson Walker model as in \eqref{eq:RWModel} and assume both, the integral conditions in \eqref{eq:intcond} and that $\Sigma_{C}$ is locally compact. Then, if $\pi:V\rightarrow M$ is a spacetime covering projection with associated group $G$, $(V,G)$ is finitely chronological and the $G$-orbits are closed for both $\hat{V}$ and $\check{V}$, then $\overline{V}/G$ and $\overline{M}$ are both, chronologically isomorphic and homeomorphic.
\end{thm}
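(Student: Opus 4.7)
The plan is to reduce Theorem \ref{thm:aux1} to a direct application of Corollary \ref{cor:expansiblyhomeomorphism}. That corollary asks for four ingredients on the pair $(V,G)$: finite chronology, Hausdorffness of $\overline{V}$, absence of lightlike points in $\overline{V}$, and closedness of the $G$-orbits on $\hat{V}$ and $\check{V}$. Two of these (finite chronology and closed orbits) are hypotheses of the theorem, so the real task is to extract the remaining two properties from Proposition \ref{prop:propfinal}.

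First, I would invoke Proposition \ref{prop:propfinal} under the integral conditions \eqref{eq:intcond} together with the local compactness of $\Sigma_{C}$. This gives, at the point-set level, the identification $\overline{V}\equiv \Sigma_{C}\times (\{-\infty\}\cup \mathbb{R}\cup \{\infty\})$, and, crucially for the topology, it tells me that the chronological topology on $\overline{V}$ coincides with the product topology on this Cartesian product. From here, Hausdorffness is immediate: $\Sigma_{C}$ is Hausdorff as the Cauchy completion of a metric space, the two-point compactification of $\mathbb{R}$ is Hausdorff, and the product of two Hausdorff spaces is Hausdorff.

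Second, I would read off from the description of the causal structure in Proposition \ref{prop:propfinal} that the c-boundary $\partial V$ consists of two spacelike copies of $\Sigma_{C}$ (corresponding to the slices $\{-\infty\}$ and $\{\infty\}$) together with timelike lines over each point of the Cauchy boundary $\Sigma_{C}\setminus\Sigma$. In particular there are no pairs $(\lev{P}{},\emptyset)$ or $(\emptyset,\lev{F}{})$ in $\overline{V}$ whose terminal sets are strictly contained in another indecomposable set of the same time-orientation, so $\overline{V}$ has no lightlike boundary points in the sense of the footnote after Theorem \ref{thm:main2}.

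Having verified all four hypotheses, I would simply quote Corollary \ref{cor:expansiblyhomeomorphism} to conclude that $\overline{V}/G$ and $\overline{M}$ are both chronologically isomorphic and homeomorphic. The only potential subtlety I foresee is being careful that ``no lightlike boundary points'' is interpreted correctly: the definition (see the footnote to Theorem \ref{thm:main2}) concerns boundary points of the form $(\lev{P}{},\emptyset)$ and $(\emptyset,\lev{F}{})$, and the structure $\Sigma_{C}\times (\{-\infty\}\cup \mathbb{R}\cup \{\infty\})$ only produces such empty-component points on the two copies $\Sigma_{C}\times\{\pm\infty\}$, which are spacelike slices whose TIPs (resp.\ TIFs) are maximal among indecomposable past (resp.\ future) sets. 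This check is the only part of the argument that is not completely mechanical.
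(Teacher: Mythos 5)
Your proposal is correct and follows essentially the same route as the paper: the paper likewise deduces from Proposition \ref{prop:propfinal} (under the integral conditions \eqref{eq:intcond} and local compactness of $\Sigma_{C}$) that $\overline{V}$ is Hausdorff and has no lightlike boundary points, and then concludes by citing Corollary \ref{cor:expansiblyhomeomorphism} together with the hypotheses of finite chronology and closed $G$-orbits. Your additional remarks (Hausdorffness of the product topology, maximality of the terminal sets on the spacelike slices $\Sigma_{C}\times\{\pm\infty\}$) merely spell out details the paper leaves implicit.
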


\smallskip 

Obviously, our results are applicable in other Robertson Walker models without the integral conditions \eqref{eq:intcond}. For instance, the Anti-de Sitter model also satisfy both, it is Hausdorff and has no lightlike points (see \cite[Section 4.1]{AF}). Moreover, the only pairs in $\overline{V}$ with an empty component are of the form $(V,\emptyset)$ and $(\emptyset,V)$ (corresponding to $i^+$ and $i^-$, so, it follows readily that $\overline{M}$ has no lightlike points. Hence:

\begin{thm}\label{thm:final} 
Let $(V,\mathfrak{g})$ be the Anti-de Sitter model, that is, $V=\mathbb{R}\times (0,\infty)\times \mathbb{S}^2$ and 
	\[ \mathfrak{g}=-cosh^2(r)dt^2 + dr^2 + sinh^2(r)(d\theta^2+sin^2\theta d\phi^2).\]
Assume that we have a spacetime covering projection $\pi:V\rightarrow M$ with associated group $G$ in such a way that $(V,G)$ is finitely chronological. Then, $\overline{V}/G\equiv \overline{M}$.
\end{thm}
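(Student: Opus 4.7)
The plan is to apply Corollary \ref{cor:aplicacion} to $(V,G)$. Of its three hypotheses, only finite chronology comes for free from the statement; Hausdorffness of $\overline V$ and the absence of lightlike boundary points in $\overline M$ must be extracted from the specific geometry of AdS.

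The first step is to recall from \cite[Section 4.1]{AF} the explicit description of $\overline V$ for the AdS spacetime. Two features are essential: (a) $\overline V$ is Hausdorff and contains no lightlike boundary points; and (b) the only pairs in $\overline V$ with an empty component are the two timelike infinities $i^+=(V,\emptyset)$ and $i^-=(\emptyset,V)$, so every other boundary point has both entries nonempty. These are precisely the ingredients the paper already isolates in the paragraph preceding the theorem.

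The second step is to use (a) and (b) to show that $\overline M$ has no lightlike boundary points. By Corollary \ref{prop:welldefHaus}, finite chronology together with the Hausdorffness of $\overline V$ already guarantees that $\overline\pi:\overline V\to\overline M$ is well defined and surjective. Suppose by contradiction that $(P,\emptyset)\in\overline M$ is lightlike, so that $P\subsetneq P'$ for some IP $P'$ of $M$. By surjectivity, $(P,\emptyset)=\overline\pi(\lev{P'}{},\lev{F'}{})$ for some $(\lev{P'}{},\lev{F'}{})\in\overline V$. Inspecting Definition \ref{def:proycompleto}, the first branch would always produce an image with both components nonempty, so the empty second component on the right forces the ``otherwise'' branch; this gives $\check\pi(\lev{F'}{})=\emptyset$, and hence $\lev{F'}{}=\emptyset$. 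Thus $(\lev{P'}{},\emptyset)\in\overline V$, and (b) forces $\lev{P'}{}=V$. A short verification using a future chain $\{p_n\}$ defining $V$ together with Proposition \ref{prop:caracrel} shows that $\hat\pi(V)=I^-(\{\pi(p_n)\})=M$, so $P=M$. But no IP of $M$ can strictly contain $M$, contradicting $P\subsetneq P'$. The case of a lightlike pair $(\emptyset,F)$ is symmetric.

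With all three hypotheses of Corollary \ref{cor:aplicacion} now in place, the conclusion $\overline V/G\equiv\overline M$ (as both chronological isomorphism and homeomorphism) follows immediately. The main obstacle is not the descent argument itself but the input (b): the whole proof is conditional on the rather special AdS fact that every boundary pair of $\overline V$ except $i^\pm$ has both entries nonempty, a property tied to the confinement of timelike geodesics in AdS that is not shared by generic spacetimes. Once (a) and (b) are accepted from \cite{AF}, the transfer of the no-lightlike-points property to $\overline M$ and the final invocation of Corollary \ref{cor:aplicacion} are essentially mechanical, and no analogue of the integral conditions \eqref{eq:intcond} is needed.
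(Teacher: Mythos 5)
Your proposal is correct and follows essentially the same route as the paper: the paper also takes from \cite[Section 4.1]{AF} that $\overline{V}$ is Hausdorff, has no lightlike points and that only $(V,\emptyset)$ and $(\emptyset,V)$ have an empty component, and then concludes via Corollary \ref{cor:aplicacion}. Your branch analysis of Definition \ref{def:proycompleto} (using Corollary \ref{prop:welldefHaus} for well-definedness and surjectivity of $\overline{\pi}$) merely spells out, correctly, the step the paper compresses into ``it follows readily that $\overline{M}$ has no lightlike points.''
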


Previous result can be used, for instance, to calculate the c-completion of the BTZ blackhole models \cite{BTZ} and the Hawking-Page reference model \cite{HP}, which are obtained as suitable quotients of the 3-dimensional Anti-de Sitter model \cite{BHTZ,Wit1,Wit2}.

\begin{rem}\label{rem:aux4}
We would like to note finally that Theorem \ref{thm:aux1} is generalizable to other, more general, models of spacetimes. For instance, a similar result follows for Lorentz manifolds $(\R\times \Sigma,\mathfrak{g})$ with

\begin{equation}\label{eq:final} 
\mathfrak{g}=-dt^2 +   \sqrt{\alpha\circ t}\, \pi^*_{\Sigma}(\omega)\otimes dt +\sqrt{\alpha\circ t}\,dt \otimes\pi^*_{\Sigma}(\omega) + (\alpha\circ t)\,\pi^*_{\Sigma}(\mathfrak{g}_{\Sigma})
\end{equation}
where $\omega$ is a one-form of $\Sigma$.
Observe that such metrics are generalizations of Robetson-Walker models to the Standard Stationary settings. In fact, the theory developed in \cite{FHSBuseman} for the Stationary case is enough to study their c-completion (see \cite[Section 3]{FHSIso2}). 

The c-completion of the Standard Stationary case presents remarkable differences with respect to the Static one, mainly because its causality is no longer determined by a (regular) distance but by a (non-symmetric) generalized distance. That lack of symmetry is reflected on different structures for the future and past c-completions. For instance, future and past completions depends on different Cauchy completions (named by forward and backward Cauchy completions and denoted by $\Sigma_{C}^\pm$ resp., see \cite[Section 6]{FHSBuseman}). However, an under some mild hypotheses (the local compactness of $\Sigma_{C}^\pm$ and the well behaviour of the extended distance to such spaces, see \cite[Theorem 1.2]{FHSBuseman}), it follows analogous versions of Prop. \ref{prop:propfinal} and Theorem \ref{thm:aux1} for the model \eqref{eq:final}.
%
%
%
%
%
%
\end{rem}

\section*{Acknowledgments}

The authors are partially supported by the Spanish Grant MTM2013-47828-C2-2-P (MINECO and FEDER funds). L. Aké also acknowledges a grant funded by the Consejo Nacional de Ciencia y Tecnolog\'ia (CONACyT), M\'exico. JH would like to thank the Department of Algebra, Geometry and Topology of the University of M\'{a}laga for their kind hospitality while part of the work on this paper was being carried out.


\end{document}